\newtheorem{theorem}{Theorem}[section]
\newtheorem{lemma}[theorem]{Lemma}
\newtheorem{definition}[theorem]{Definition}
\newtheorem{corollary}[theorem]{Corollary}
\newtheorem{claim}[theorem]{Claim}
\newtheorem{fact}[theorem]{Fact}
\newtheorem{remark}[theorem]{Remark}
\newcommand{\wt}{\widetilde}
\definecolor{darkgreen}{rgb}{0,0.5,0}
\crefname{theorem}{Theorem}{Theorems}
\Crefname{lemma}{Lemma}{Lemmas}
\Crefname{claim}{Claim}{Claims}
\Crefname{observation}{Observation}{Observations}
\theoremstyle{plain}
\theoremstyle{plain}
\theoremstyle{plain}
\theoremstyle{plain}
\theoremstyle{plain}
\theoremstyle{definition}
\theoremstyle{definition}
\theoremstyle{definition}
\theoremstyle{plain}
\providecommand{\lemmaname}{Lemma}
\providecommand{\propositionname}{Proposition}
\providecommand{\theoremname}{Theorem}
\providecommand{\corollaryname}{Corollary}
\providecommand{\definitionname}{Definition}
\providecommand{\assumptionname}{Assumption}
\providecommand{\remarkname}{Remark}
\global\long\def\RR{\mathbb{R}}
\global\long\def\poly#1{\mathrm{poly}\left(#1\right)}
\newcommand{\w}{z}
\newcommand{\EX}{{\mathbb E}}
\newcommand{\set}[1]{\{#1\}}
\newlist{UR}{enumerate}{1}
\setlist[UR]{label=2\alph*.}
 \gdef\xxxmark{%
   \expandafter\ifx\csname @mpargs\endcsname\relax % in minipage?
     \expandafter\ifx\csname @captype\endcsname\relax % in figure/caption?
       \marginpar{xxx}% not in a caption or minipage, can use marginpar
     \else
       xxx % notice trailing space
     \fi
   \else
     xxx % notice trailing space
   \fi}
 \gdef\xxx{\@ifnextchar[\xxx@lab\xxx@nolab}
 \long\gdef\xxx@lab[#1]#2{\textcolor{blue}{[\xxxmark #2 ---{\sc #1}]}}
 \long\gdef\xxx@nolab#1{{\bf [\xxxmark #1]}}
\newenvironment{proofof}[1]{\noindent{\bf Proof of #1:}}{$\qed$\par}
\title{Spectral Hypergraph Sparsifiers of Nearly Linear Size}
\author{
  Michael Kapralov\thanks{Supported in part by ERC Starting Grant 759471.}\\
  \'Ecole Polytechnique F\'ed\'erale de Lausanne\\
  \texttt{michael.kapralov@epfl.ch}
  \and
  Robert Krauthgamer%
  \thanks{Work partially supported by ONR Award N00014-18-1-2364,
  the Israel Science Foundation grant \#1086/18,
  and a Minerva Foundation grant.
  }\\
  Weizmann Institute of Science\\
  \texttt{robert.krauthgamer@weizmann.ac.il}
  \and
  Jakab Tardos\thanks{Supported by ERC Starting Grant 759471.}\\
  \'Ecole Polytechnique F\'ed\'erale de Lausanne\\
  \texttt{jakab.tardos@epfl.ch}
  \\
  \and
  Yuichi Yoshida\thanks{Supported in part by JSPS KAKENHI Grant Number 18H05291 and 20H05965.}\\
  National Institute of Informatics\\
  \texttt{yyoshida@nii.ac.jp}
}
\begin{document}

\maketitle

\begin{abstract}

Graph sparsification has been studied extensively over the past two decades,
%and a variety of probabilistic and linear-algebraic tools have been developed to achieve
culminating in spectral sparsifiers of optimal size (up to constant factors).
Spectral hypergraph sparsification is a natural analogue of this problem,
for which optimal bounds on the sparsifier size are not known,
mainly because the hypergraph Laplacian is non-linear,
and thus lacks the linear-algebraic structure and tools
that have been so effective for graphs.

Our main contribution is the first algorithm for constructing $\epsilon$-spectral sparsifiers for hypergraphs with $O^*(n)$ hyperedges,
where $O^*$ suppresses $(\epsilon^{-1} \log n)^{O(1)}$ factors.
This bound is independent of the rank $r$ (maximum cardinality of a hyperedge), and is essentially best possible due to a recent bit complexity lower bound of $\Omega(nr)$ for hypergraph sparsification.

This result is obtained by introducing two new tools.
First, we give a new proof of spectral concentration bounds for sparsifiers of graphs;
it avoids linear-algebraic methods,
replacing e.g.~the usual application of the matrix Bernstein inequality
and therefore applies to the (non-linear) hypergraph setting. To achieve the result, we design a new sequence of hypergraph-dependent $\epsilon$-nets on the unit sphere in $\mathbb{R}^n$.
Second, we extend the weight-assignment technique of Chen, Khanna and Nagda [FOCS'20] to the spectral sparsification setting. Surprisingly, the number of spanning trees after the weight assignment can serve as a potential function guiding the reweighting process in the spectral setting.

%\rnote{I added a mention of the previous bound; maybe it's undesirable because it is our own recent result... }

%\rnote{We do not mention the fast algorithm result}

\end{abstract}

\thispagestyle{empty}
\newpage
\thispagestyle{empty}
\tableofcontents
\setcounter{page}{0}
\newpage

%!TEX root=./000-main.tex

\section{Introduction}\label{sec:intro}
We study spectral sparsification of hypergraphs, where the goal is to reduce the size of a hypergraph while preserving its energy.
Given a hypergraph $H=(V,E,w)$ with a weight function $w:E \to \mathbb{R}_+$ over its hyperedges, the \emph{energy} of $x\in\RR^V$ (called a potential vector) is defined as
\[
  Q_H(x) := \sum_{e \in E}w(e)\cdot \max_{u,v \in e}{(x_u - x_v)}^2.
\]
%For unweighted hypergraphs  we write $H=(V,E)$ as shorthand for $H=(V, E, w)$, where $w(e) = 1$ for every $e \in E$. 
The problem of minimizing $Q_H(x)$ over $x\in \RR^V$ subject to certain constraints appears in many problems involving hypergraphs, including clustering~\cite{Takai2020}, semi-supervised learning~\cite{Hein2013,Yadati2019,Zhang2020} and link prediction~\cite{Yadati2020}, from which we can see the relevance of $Q_H(x)$ in application domains. Note that when $x\in\RR^V$ is a characteristic vector $\mathbbm{1}_S \in \{0,1\}^V$ of a vertex subset $S \subset V$, the energy $Q_H(\mathbbm{1}_S)$ coincides with the total weight of hyperedges cut by $S$, where we say that a hyperedge $e \in E$ is \emph{cut} by $S$ if  $e \cap S\neq \emptyset$ and $e \cap (V \setminus S) \neq \emptyset$. 

Since the number of hyperedges in a hypergraph of $n$ vertices can be $\Omega(2^n)$, it is desirable to reduce the number of hyperedges in the hypergraph while (approximately) preserving the value of $Q_H(x)$ for every $x \in \mathbb{R}^V$, because this lets us speed up any algorithm involving $Q_H$  and reduce its memory usage by running it on the smaller hypergraph instead of $H$ itself.
Soma and Yoshida~\cite{Soma2019} formalized this concept as spectral sparsification for hypergraphs -- a natural generalization of the corresponding concept introduced by the celebrated work of~\cite{Spielman2011} for graphs.
Specifically, for $0<\epsilon<1$, we say that a hypergraph $\widetilde{H}$ is an \emph{$\epsilon$-spectral-sparsifier} of a hypergraph $H$ if $\widetilde{H}$ is a reweighted subgraph of $H$ such that
\begin{equation} \label{eq:HypergraphSparsifier}
  \forall x \in \mathbb{R}^V,
  \qquad
  Q_{\widetilde{H}}(x) \in (1\pm\epsilon) Q_H(x).\footnote{$a \in (1 \pm \epsilon)b$ is a shorthand for $(1-\epsilon)b \leq a \leq (1+\epsilon)b$.}
\end{equation}
We note that when $H$ is an ordinary graph, this definition matches that for graphs~\cite{Spielman2011}.
Soma and Yoshida~\cite{Soma2019} showed that every hypergraph $H$ admits an $\epsilon$-spectral-sparsifier with $\widetilde{O}(n^3/\epsilon^2)$ hyperedges,%
\footnote{Throughout, $\widetilde{O}(\cdot)$ suppresses a factor of $\log^{O(1)} n$.}
and gave a polynomial-time algorithm for constructing such sparsifiers.
Since then the number of hyperedges needed has been reduced to $\widetilde{O}(nr^3/\epsilon^2)$~\cite{Bansal2019},
and recently to $\widetilde{O}(nr/\epsilon^{O(1)})$~\cite{KKTY21},
where $r$ is the maximum size of a hyperedge in the input hypergraph $H$ (called the \emph{rank} of $H$).
% In all these constructions, the sparsifier $\wt H$ is a reweighted sub-hypergraph of $H$.

The natural question whether every hypergraph admits a spectral sparsifier
with $\wt{O}(n)$ hyperedges (for fixed $\epsilon$) has proved to be challenging.
On the one hand, it is well-known that
a hypergraph is a strictly richer object than an ordinary graph
(hyperedges cannot be ``simulated'' by edges, even approximately),
and in all previous results and techniques,
this extra complication introduced an extra factor of at least $r$. 
On the other hand, an exciting recent result~\cite{Chen2020}
has achieved sparsifiers with $\wt{O}(n)$ hyperedges,
if one is only interested in preserving the \emph{hypergraph cut function},
i.e., satisfying~\eqref{eq:HypergraphSparsifier}
only for all characteristic vectors $x=\mathbbm{1}_S$ where $S \subseteq V$.
Nevertheless, the spectral version of this question has remained open,
primarily due to the non-linearity of the hypergraph Laplacian
and the lack of linear-algebraic tools that have been so effective for graphs.
%In fact, a recent lower bound of~\cite{KKTY21} shows that hypergraph sparsifiers in general require $\Omega(n\cdot r)$ bits to store, showing a separation between the graph and hypergraph setting. 

We settle this question by showing that a nearly linear number of hyperedges
suffices. 

\begin{restatable}{thm}{thmmain}\label{thm:main}
  For every hypergraph with $n$ vertices and every $1/n\le\epsilon\le1/2$, there exists an $\epsilon$-spectral-sparsifier with $O(n\epsilon^{-4}\log^3 n)$ hyperedges.
  % \begin{itemize}
  % \item $\widetilde{E} \subseteq E$, \ynote{Should we define a sparsifier as a weighted subgraph?}
  % \item $|\widetilde{E}| = \widetilde{O}(n/\epsilon^{O(1)})$,
  % \item $\widetilde{w}(e) = n^{O(1)}$ for every $e \in \widetilde{E}$.
  %   \rnote{seems wrong, we might need exponential weights.}
  % \end{itemize}
  % \ynote{Is it OK to state our result just for unweighted hypergraphs? For the weighted case I don't know what we can say about the range of $\widetilde{w}$.}
  Moreover, one can construct such a sparsifier in time $\widetilde{O}(mr + \poly{n})$, where $m$ is the number of hyperedges and $r$ is the maximum size of a hyperedge in $H$.
\end{restatable}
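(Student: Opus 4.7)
My plan is to build the sparsifier by importance sampling of hyperedges, where the sampling probabilities come from a \emph{weight assignment} that converts $H$ into an auxiliary graph whose spectral structure serves as a linear proxy for the nonlinear hypergraph Laplacian. Given such an assignment, sample each hyperedge $e$ with probability $p_e \propto w(e)\cdot\max_{u,v\in e} R^{\mathrm{eff}}_{G_y}(u,v)$ and reweight by $1/p_e$; this is a hypergraph analogue of spectral leverage-score sampling. If the total mass $\sum_e p_e$ can be brought down to $\widetilde O(n)$, an oversampling factor of $\epsilon^{-2}\poly{\log n}$ will yield the target sparsifier size $O(n\epsilon^{-4}\log^3 n)$, after verifying spectral concentration for every $x\in\RR^V$.

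The first technical step is to extend the Chen--Khanna--Nagda weight-assignment argument to the spectral setting. I would assign a nonnegative scalar $y_e$ to each hyperedge $e$, form the clique-expansion graph $G_y$ whose edge $\{u,v\}$ has weight proportional to $\sum_{e\ni u,v} y_e/\binom{|e|}{2}$, and use the \emph{number of spanning trees} of $G_y$ as a potential function to drive a local rebalancing procedure. A CKN-style swap that shifts $y$-mass from hyperedges with large $\max_{u,v\in e} R^{\mathrm{eff}}_{G_y}(u,v)$ to hyperedges with small such value should monotonically increase the spanning-tree count, while a matrix-tree-style upper bound caps the potential; the process therefore terminates with a $y$ for which $\sum_e p_e = \widetilde O(n)$. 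It is not obvious that the cut-case potential still works once one replaces min-cuts by effective resistances, and this equivalence is the main claim I would need to verify in this step.

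The second technical step is the concentration bound. For each fixed $x\in\RR^V$, $\widetilde Q_H(x)-Q_H(x)$ is a sum of independent bounded random variables, so scalar Bernstein gives the desired multiplicative deviation; the entire challenge is the union bound over the sphere. Since $Q_H(\cdot)$ is piecewise-quadratic but not itself a quadratic form, matrix Bernstein on a single $e^{O(n)}$-sized $\epsilon$-net is unavailable. Instead I would construct a hypergraph-dependent chain of nets $\mathcal{N}_0\subset\mathcal{N}_1\subset\cdots$, where the resolution at level $k$ is set by the size of the hyperedge contributions $w(e)\max_{u,v\in e}(x_u-x_v)^2$ at scale $k$, and chain the scalar-Bernstein deviations telescopically up to~\eqref{eq:HypergraphSparsifier}. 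The main obstacle I anticipate is the joint design of these nets and the weight assignment: the net cardinality must grow slowly enough to permit only a $(\epsilon^{-1}\log n)^{O(1)}$ overhead on sample size, yet must remain rich enough to certify deviation across hyperedges of vastly different ranks and weights. I expect the spanning-tree potential, by simultaneously controlling all pairwise effective resistances in $G_y$, to be precisely what couples these two ingredients; the runtime bound $\widetilde O(mr+\poly{n})$ would then follow by implementing the weight-assignment updates with a near-linear-time Laplacian solver on $G_y$ after an initial $\widetilde O(mr)$ pass to read $H$ and maintain the clique expansion.
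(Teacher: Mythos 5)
You have correctly identified the two central ideas of the paper --- a weight assignment driven by a spanning-tree-count potential, and a chaining argument over a hierarchy of hypergraph-dependent nets --- but the weight-assignment step as you describe it would not go through, and the difference is not cosmetic.

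\textbf{Weight assignment.} You propose to attach a single scalar $y_e$ to each hyperedge and expand it \emph{uniformly} over the clique on $\supp(e)$, then shift $y$-mass between hyperedges. This has two problems. First, the direction is reversed: by the update formula $\Psi(G+\lambda\cdot f)=\Psi(G)+\log(1+\lambda R_G(f))$, moving mass \emph{away} from high-resistance edges and \emph{onto} low-resistance edges strictly \emph{decreases} $\Psi$, so the potential does not monotonically increase under your swap. Second, and more fundamentally, the paper's weight assignment keeps the total clique weight fixed at $w(e)$ for every $e$ (this is forced: the sparsifier must be a reweighted subgraph of $H$, so the coupling to $H$ is through $w(e)$, not a tunable $y_e$) and instead distributes that budget \emph{non-uniformly among the $\binom{|e|}{2}$ edges of the clique}, shifting weight within $F_e$ from an edge $g$ of small $R_G(g)$ to an edge $f$ of large $R_G(f)$. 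The terminating object is not a $y$ minimizing $\sum_e p_e$ but a \emph{$\gamma$-balanced} assignment: all non-negligible edges of $F_e$ have comparable effective resistance. The $\widetilde O(n)$ bound on $\sum_e w(e)\cdot\max_{u,v\in e}R_G(u,v)$ is then a \emph{consequence} of balancedness combined with Foster's identity $\sum_f\w(f)R_G(f)=n-1$, not an objective that the rebalancing optimizes. A uniform clique expansion cannot satisfy the balancedness condition, since effective resistances within a clique are determined by global graph structure, so your $G_y$ would give $\sum_e w(e)R^{\max}(e)$ as large as $\Theta(nr)$ rather than $\widetilde O(n)$.

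\textbf{Chaining and runtime.} Your net plan is directionally right, but the resolution should be pegged to the normalized ``power'' $(x_u-x_v)^2/R_G(u,v)$, not to the unnormalized contributions $w(e)\max_{u,v\in e}(x_u-x_v)^2$, and the cardinality control comes from an explicit linear-algebraic embedding $a_{u,v}=L_G^{+/2}(\chi_u-\chi_v)/\lVert L_G^{+/2}(\chi_u-\chi_v)\rVert_2$, $y_x=L_G^{1/2}x$, combined with a theorem of Alon--Klartag on compression of approximate inner products, which gives nets of size $\exp(O(2^i\log n/\epsilon^2))$ at power scale $2^{-i}$; without this ingredient the union bound has no concrete footing. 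Finally, the spanning-tree potential is used only to prove termination of the rebalancing, not to ``control the nets'' --- those live entirely in $G$, after the rebalancing is done. And for the $\widetilde O(mr+\poly{n})$ runtime, a Laplacian solver on $G_y$ alone does not suffice because the clique expansion has $\Theta(mr^2)$ edges and the naive rebalancing can take a number of rounds proportional to the aspect ratio of the weights (which can be $\exp(n)$); the paper first preprocesses with a coarse $\poly{n}$-size sparsifier (\`a la Bansal--Svensson--Trevisan) and then runs a modified rebalancing tailored to ``separated'' hypergraphs, in which weight classes are treated in decreasing order and never revisited.
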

%\ynote{We are using both $n^{O(1)}$ and $\mathrm{poly}(n)$. Should we use the latter throughout (because I don't want to replace $\mathrm{poly}(n \log(M/w_{\min}))$ (in page 27) with $(n \log(M/w_{\min}))^{O(1)}$ and $\widetilde{O}(n^{O(1)})$ looks ugly.)?}
We note that the bit complexity of our sparsifier is tight up to a polylogarithmic factor for a large range of $r$ due to the lower bound of~\cite{KKTY21}.
\if 0 To see this, let $\widetilde{H}=(V,\widetilde{E},\widetilde{w})$ be the $\epsilon$-spectral sparsifier of an unweighted hypergraph $H$ with $\wt{O}(n/\epsilon^{O(1)})$ hyperedges, whose existence is guaranteed by Theorem~\ref{thm:main}. Every hyperedge contains at most $r$ vertices, and therefore can be stored using $\wt{O}(r)$ bits. As we now show, the weights of hyperedges in the sparsifier can also be encoded compactly. 
First, we can assume $\widetilde{w}(e) = O(m)$ for every $e\in \wt{E}$ because the maximum cut size is $O(m)$ (and hence if $\widetilde{w}(e) = \omega(m)$, then $\widetilde{H}$ will not satisfy~\eqref{eq:HypergraphSparsifier} for some characteristic vector $\mathbbm{1}_S$).
For each $e \in \widetilde{E}$, set $w'(e)$ the value obtained by rounding $\widetilde{w}(e)$ to a nearest power of $1+\epsilon$. The corresponding hypergraph $H' = (V,\widetilde{E},w')$ is 
an $O(\epsilon)$-spectral sparsifier of $H$ with $\wt{O}(n/\epsilon^{O(1)})$ hyperedges, and the weight of every edge can be encoded using $O(\log \log_{1+\epsilon}m) = O(\log (\epsilon^{-1}\log m)) = O(\log n + \log (1/\epsilon))$ bits. Thus, the sparsifier can be represented using $\wt{O}(r  + \log (1/\epsilon))$ bits per edge, and $|\widetilde{E}| \cdot \wt{O}(r + \log (1/\epsilon) ) = \widetilde{O}(nr /\epsilon^{O(1)})$ bits overall, matching the lower bound of~\cite{KKTY21} for $r=n^{O(1/\log\log n)}$ for small constant $\epsilon > 0$.\fi

\subsection{Additional Related Work}

% As the literature on graph sparsification is too vast to cover here, we only refer to books~\cite{Vishnoi2013,Teng2016} for graph sparsification, and instead we focus on the literature on hypergraph sparsification here.

Recall that we call $\widetilde{H} = (V, \widetilde{E}, \widetilde{w})$ an \emph{$\epsilon$-cut sparsifier} of $H = (V, E, w)$ if every cut weight is preserved to within a factor of $1 \pm \epsilon$. %, that is,~\eqref{eq:HypergraphSparsifier} holds for all characteristic vectors $x=\mathbbm{1}_S$, $S \subset V$.
This definition matches the one for ordinary graphs introduced by Bencz{\'{u}}r and Karger~\cite{Benczur2015}, who showed that every graph has an $\epsilon$-cut-sparsifier with $O(n\log n/\epsilon^2)$ edges, where $n$ is the number of vertices.
For hypergraphs, Kogan and Krauthgamer~\cite{Kogan2015} gave the first construction of non-trivial cut sparsifiers, which uses $O(n(r + \log n)/\epsilon^2)$ hyperedges, where $r$ is the maximum size of a hyperedge.
They also mentioned that the results of Newman and Rabinovich~\cite{Newman2013} implicitly give an $\epsilon$-cut sparsifier with $O(n^2/\epsilon^2)$ hyperedges.
Chen, Khanna, and Nagda~\cite{Chen2020} improved this bound to $O(n \log n/\epsilon^2)$, which is almost tight because one needs $\Omega(n/\epsilon^2)$ edges even for ordinary graphs~\cite{ACKQWZ16,CKST19}.

Spielman and Teng~\cite{Spielman2011} introduced the notion of a spectral sparsifier for ordinary graphs and showed that every graph on $n$ vertices admits an $\epsilon$-spectral sparsifier with $O(n\log^{O(1)} n/\epsilon^2)$ edges.
This bound was later improved to $O(n/\epsilon^2)$~\cite{BatsonSS12}, which is tight~\cite{ACKQWZ16,CKST19}.
The literature on graph sparsification is too vast to cover here,
including~\cite{Spielman2011,Spielman2011a,BatsonSS12,ZhuLO15,LeeS15a,Lee017} and many other constructions,
and we refer the reader to the surveys~\cite{Vishnoi2013,Teng2016}.

For an ordinary graph $G = (V, E, w)$, the \emph{Laplacian} of $G$ is the matrix $L_G = D_G - A_G$, where $D_G \in \mathbb{R}^{V \times V}$ is the diagonal (weighted) degree matrix and $A_G \in \mathbb{R}^{V \times V}$ is the adjacency matrix of $G$.
Then, the energy $Q_G$, defined in~\eqref{eq:HypergraphSparsifier}, can be written also as
\[
  Q_G(x) = x^\top L_G x.
\]
For a hypergraph $H = (V, E, w)$, it is known that we can define a (multi-valued) Laplacian operator $L_H : \mathbb{R}^V \to 2^{\mathbb{R}^V}$, so that
\[
  Q_H(x) = x^\top y
\]
for every $x \in \mathbb{R}^V$ and $y \in L_H(x)$~\cite{Louis2015,Chan2018,Yoshida2019} (hence we can write $Q_H(x)$ also as $x^\top L_H(x)$ without ambiguity).
Although the Laplacian operator $L_H$ is no longer a linear operator, its mathematical property has been actively investigated~\cite{ikeda2019finding,fujii2018polynomialtime,ikeda2021coarse} through the theory of monotone operators and evolution equations~\cite{komura1967nonlinear,miyadera1992nonlinear}.

Yoshida~\cite{Yoshida2016} proposed a Laplacian operator for directed graphs and used it to study structures of real-world networks.
The Laplacian operators for graphs, hypergraphs, and directed graphs mentioned above were later unified and generalized as Laplacian operator for submodular transformations/submodular hypergraphs~\cite{Li2018,Yoshida2019}.

% As opposed to the graph case, here the Laplacian operator is merely piecewise linear, and hence computing its eigenvalues/vectors is hard. He showed that $O(\log r)$-approximation is possible, and that obtaining a better approximation ratio is NP-hard assuming the Small-Set Expansion (SSE) hypothesis [RS10]. He further showed a Cheeger inequality for hypergraphs, which implies that, given a vector $x \in \mathbb{R}^V$ with a small energy, one can efficiently find a set $S \subseteq V$ of small expansion, which roughly means that the number of hyperedges cut by S is small relative to the number of hyperedges incident to vertices in S (see Section 2 for details). Since then, several other algorithms for finding sets of small expansion have been proposed [TMIY20, IMTY18].
% Yoshida [Yos16] proposed another piecewise linear Laplacian for directed graphs and used it to study structures of real-world networks. Generalizing the Laplacians for hypergraphs and directed graphs, Laplacian L for directed hypergraphs was proposed [LM18, Yos19], whose quadratic form $x^\top L(x)$ coincides with (3).

%%% Local Variables:
%%% mode: latex
%%% TeX-master: "000-main"
%%% End:

%!TEX root=./000-main.tex

\section{Preliminaries}\label{sec:prelims}
\iffalse
\begin{align*}
\mathnormal{w} \\
\mathrm{w} \\
\mathit{w} \\
\mathbf{w} \\
\mathsf{w} \\
\mathtt{w}
\end{align*}
\fi

%For a positive integer $n$, let $[n]$ denote the set $\{1,2,\ldots,n\}$.
In this paper, we deal with spectral sparsification of hypergraphs. For the sake of generality, we consider weighted hypergraphs denoted $H=(V,E,w)$, where $V$ is the vertex set of size $n$, $E$ is the hyperedge set of size $m$, and $w:E\to\mathbb R_+$ is the set of hyperedge weights.
We will also, however, deal with ordinary graphs, that is graphs where each edge contains two vertices exactly. In order to distinguish clearly between graphs and hypergraphs, we will typically denote graphs as $G=(V,F,\w)$, where $V$ is the vertex set, $F$ is the edge set, and $\w:F\to\mathbb R_+$ is the set of edge weights. In general we will use $f$ and $g$ to denote ordinary edges, while reserving $e$ to denote hyperedges.

For simplicity \emph{all graphs and hypergraphs we consider in this paper will be connected}.

\subsection{Spectral Graph Theory}

\begin{definition}\label{def:Laplacian}
	The Laplacian of a weighted graph $G=(V,F,\w)$ is defined as the matrix $L_G\in\mathbb R^{V\times V}$ such that
	$$\left(L_G\right)_{uv}=\begin{cases}
		d(u)&\text{if $u=v$,}\\
		-\w(u,v)&\text{if $(u,v)\in F$,}\\
		0&\text{otherwise.}
		\end{cases}$$
	Here $d(u)$ denotes the weighted degree of $u$, that is the sum of all weights of incident edges.
	Thus $L_G$ is a positive semidefinite matrix, and its quadratic form can be written as
	$$x^\top L_G x=\sum_{(u,v)\in F}\w(u,v)\cdot(x_u-x_v)^2.$$
\end{definition}
The spectral sparsifier of $G$ is defined as a reweighted subgraph which closely approximates the quadratic form of the Laplacian on every possible vector.

\begin{definition}\label{def:sparsifier}
	Let $G=(V,F,\w)$ be a weighted ordinary graph. Let $\wt G=(V,\wt F,\wt{\w})$ be a reweighted subgraph of $G$, defined by $\wt{\w}:F\to\mathbb R_+$, where $\wt F=\{f\in F \mid \wt{\w}(f)>0\}$. For $\epsilon>0$, $\wt G$ is an $\epsilon$-spectral sparsifier of $G$ if for every $x\in\mathbb R^V$
	% $$(1-\epsilon)\cdot x^\top L_G x\le x^\top L_{\wt G}x\le(1+\epsilon)\cdot x^\top L_Gx.$$
	$$x^\top L_{\wt G}x \in (1 \pm \epsilon)\cdot x^\top L_G x.$$
\end{definition}

The quadratic form of the graph Laplacian from Definition~\ref{def:Laplacian} can be generalized to hypergraphs. Although this generalization is highly non-linear, we still refer to it as the ``quadratic form'' of the hypergraph.

\begin{definition}
	The quadratic form (or sometimes \emph{energy}) of a hypergraph $H=(V,E,w)$ is defined on the input vector $x\in\mathbb R^V$ as
	$$Q_H(x)=\sum_{e\in E}w(e)\cdot\max_{u,v\in e}(x_u-x_v)^2.$$
	%We also refer to this as the ``energy of $x$''. Furthermore, with slight abuse of notation, we denote the partial sum, taken over only the hyperedges $E'\subseteq E$, as
	%$$Q_{E'}(x)=\sum_{e\in E'}w(e)\cdot\max_{u,v\in e}(x_u-x_v)^2.$$
	%Finally, we denote the energy of $x$ on a single hyperedge, $Q_{\{e\}}(x)$ as $Q_e(x)$.
\end{definition}

Consequently, we may also define the concept of spectral sparsification in hypergraphs, analogously to Definition~\ref{def:sparsifier}:

\begin{definition}\label{def:hyper-sparsifier}
	Let $H=(V,E,w)$ be a weighted hypergraph. Let $\wt H=(V,\wt E,\wt{w})$ be a reweighted subgraph of $H$, defined by $\wt{w}:E\to\mathbb R_+$, where $\wt E=\{e\in E \mid \wt{w}(e)>0\}$.
	For $\epsilon>0$, $\wt H$ is an $\epsilon$-spectral sparsifier of $H$ if for every $x\in\mathbb R^V$
	% $$(1-\epsilon)\cdot Q_H(x)\le Q_{\wt H}(x)\le(1+\epsilon)\cdot Q_H(x).$$
	$$Q_{\wt H}(x) \in (1 \pm \epsilon)\cdot Q_H(x).$$
\end{definition}

\subsection{Effective Resistance}

\begin{definition}\label{def:eff-res}
	Let $G=(V,F,\w)$ be a weighted ordinary graph. The effective resistance of a pair of vertices $(u,v)$ is defined as
	$$R_G(u,v)=(\chi_u-\chi_v)^\top L_G^+(\chi_u-\chi_v).$$
	Here $\chi_u\in\mathbb R^V$ is the vector with all zeros, and a single $1$ at the coordinate corresponding to $u$. $L_G^+$ is the Moore-Penrose pseudo-inverse of $L_G$, which is positive semidefinite.

	We may write $R(u,v)$ in cases where $G$ is clear from context.
\end{definition}

We will often use the notation $R_G(f)=R_G(u,v)$ where $f=(u,v)$ is an edge. It is important to note, however, that effective resistance is a function of the vertex pair, not the edge, and does not depend directly on the weight of $f$.

We now state several well-known and useful facts about effective resistance.

\begin{fact}\label{fact:eff-res}
	The effective resistance of an edge $(u,v)$ is alternatively defined as
	$$R_G(u,v)=\max_{x\in\mathbb R^V}\frac{(x_u-x_v)^2}{x^\top Lx}.$$
\end{fact}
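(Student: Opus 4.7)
The plan is to reduce this identity to Cauchy--Schwarz in the $L_G$-inner product, applied on $\range{L_G}$. Let me write $b=\chi_u-\chi_v$, so the claim becomes
\[
  b^\top L_G^+ b \;=\; \max_{x\in\mathbb R^V}\frac{(b^\top x)^2}{x^\top L_G\,x}.
\]
First I would observe that since $H$ (and hence $G$) is assumed connected, $\ker(L_G)=\spn{\vones}$ and $\range{L_G}=\vones^\perp$, and in particular $b\in\range{L_G}$ because $b\perp\vones$. Consequently both the numerator $(b^\top x)^2$ and the denominator $x^\top L_G x$ are invariant under shifting $x$ by a multiple of $\vones$, so the maximization can be restricted to $x\in\range{L_G}$.

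Next I would pick the distinguished vector $y:=L_G^+ b$. Since $b\in\range{L_G}$, $L_G L_G^+ b = b$, so $b^\top x = (L_G y)^\top x = y^\top L_G x$ for every $x$, and also $y^\top L_G y = b^\top L_G^+ L_G L_G^+ b = b^\top L_G^+ b$ using $L_G^+ L_G L_G^+=L_G^+$. The bilinear form $\langle x_1,x_2\rangle_{L_G}:=x_1^\top L_G x_2$ is an inner product on $\range{L_G}$ (it is positive definite there because $L_G$ restricted to $\vones^\perp$ is), so Cauchy--Schwarz gives
\[
  (b^\top x)^2 \;=\; (y^\top L_G x)^2 \;\le\; (y^\top L_G y)(x^\top L_G x) \;=\; (b^\top L_G^+ b)\,(x^\top L_G x).
\]
This proves the $\le$ direction.

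For the matching lower bound, I would simply substitute $x=y=L_G^+ b$ into the ratio: the numerator becomes $(b^\top L_G^+ b)^2$ and the denominator becomes $b^\top L_G^+ b$, so the ratio equals $b^\top L_G^+ b$, achieving equality in Cauchy--Schwarz (as expected, since $x$ and $y$ are parallel).

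There is no real obstacle here; the only point that requires a line of care is handling the pseudo-inverse, i.e., verifying that $b$ lies in $\range{L_G}$ so that $L_G L_G^+ b=b$ and that the quotient is insensitive to the kernel direction. Once that is in place the identity reduces to a textbook Cauchy--Schwarz step, with equality witnessed by $x=L_G^+(\chi_u-\chi_v)$, which is precisely the electrical potential vector induced by injecting a unit of current at $u$ and extracting it at $v$.
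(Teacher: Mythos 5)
Your proof is correct, and it is the standard derivation. The paper states this fact without proof, treating it as well-known background; your argument via Cauchy--Schwarz in the $\langle\cdot,\cdot\rangle_{L_G}$ inner product on $\range{L_G}$, with the maximizer witnessed by $x = L_G^+(\chi_u-\chi_v)$, is exactly the textbook route, and you handle the pseudo-inverse correctly (checking $b\perp\vones$ so that $L_G L_G^+ b = b$, using $L_G^+ L_G L_G^+ = L_G^+$, and noting that the quotient is insensitive to the kernel direction).
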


\begin{fact}\label{fact:eff-res-metric}
	Effective resistance constitutes a metric on $V$.
\end{fact}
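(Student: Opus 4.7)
The plan is to verify the three defining properties of a metric: non-negativity with equality only when vertices coincide, symmetry, and the triangle inequality.

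The first two properties, together with non-degeneracy, are essentially structural. Non-negativity is immediate because $L_G^+$ is PSD, as noted in Definition~\ref{def:eff-res}. Symmetry follows from $(\chi_u-\chi_v) = -(\chi_v-\chi_u)$ together with the symmetry of $L_G^+$, which leaves the quadratic form unchanged. For non-degeneracy I would invoke the standing assumption that $G$ is connected: then $\ker(L_G) = \ker(L_G^+) = \mathrm{span}(\mathbf{1})$, and for $u \ne v$ the vector $\chi_u - \chi_v$ is nonzero and orthogonal to $\mathbf{1}$, so $(\chi_u-\chi_v)^\top L_G^+ (\chi_u-\chi_v) > 0$.

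The real content is the triangle inequality $R_G(u,w) \le R_G(u,v) + R_G(v,w)$. By bilinearity of the bilinear form induced by $L_G^+$,
$$R_G(u,w) = R_G(u,v) + R_G(v,w) + 2(\chi_u-\chi_v)^\top L_G^+(\chi_v - \chi_w),$$
so it suffices to show the cross term is non-positive. Define the potential vector $\phi := L_G^+(\chi_u-\chi_v)$; physically this is the potential induced by injecting a unit of current at $u$ and extracting one at $v$, and since $\chi_u - \chi_v$ is orthogonal to $\ker(L_G)$ it satisfies $L_G \phi = \chi_u - \chi_v$. The cross term equals $\phi_v - \phi_w$, so the triangle inequality reduces to the claim $\phi_v \le \phi_w$ for every $w \in V$.

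I expect this last step to be the main obstacle, and I would handle it via the discrete maximum principle. At every vertex $x \notin \{u,v\}$ the identity $(L_G\phi)_x = 0$ rearranges to $\phi_x = \tfrac{1}{d(x)}\sum_{y \sim x} w(x,y)\,\phi_y$, expressing $\phi_x$ as a convex combination of the potentials of its neighbors. A standard contradiction argument then shows that a minimum attained at an interior vertex, or at $u$ itself (where $(L_G\phi)_u = 1 > 0$ is incompatible with $\phi_u \le \phi_y$ for all neighbors $y$), would have to propagate across a connected subgraph and force $\phi$ to be constant, contradicting $L_G\phi \ne 0$. Hence the minimum of $\phi$ is attained at $v$, yielding $\phi_v \le \phi_w$ for every $w$ and completing the triangle inequality. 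An equivalent route would go through Thomson's principle, bounding $R_G(u,w)$ by the energy of the unit flow obtained by concatenating the optimal unit flows $u \to v$ and $v \to w$; summation by parts turns the cross-energy term into exactly $\phi_v - \phi_w$, so one still needs the same monotonicity fact at the end.
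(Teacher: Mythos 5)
The paper states this as a classical fact without supplying a proof, so there is no internal argument to compare against; your job was effectively to reconstruct a standard proof, and you have done so correctly. Non-negativity, symmetry, and non-degeneracy (via connectivity and $\ker L_G = \mathrm{span}(\mathbf{1})$) are all handled properly, and the triangle inequality is the only nontrivial part.

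Your bilinear expansion reducing the triangle inequality to $\phi_v \le \phi_w$ for $\phi = L_G^+(\chi_u - \chi_v)$ is exactly the standard electrical-network route, and the discrete maximum principle closes it. One small point worth tightening: the case analysis for where the minimum of $\phi$ can sit is cleanest phrased as ``$u$ cannot be a minimizer, because $\phi_u \le \phi_y$ for all neighbors $y$ would force $(L_G\phi)_u \le 0 \ne 1$; and if neither $u$ nor $v$ were a minimizer, the set $S$ of minimizers would consist only of interior vertices, where the harmonicity $\phi_x = \frac{1}{d(x)}\sum_{y\sim x} w(x,y)\phi_y$ with strictly positive weights forces every neighbor of a minimizer to be a minimizer; connectivity then gives $S = V$, i.e.\ $\phi$ constant, contradicting $L_G\phi = \chi_u - \chi_v \ne 0$.'' This is precisely what you gesture at with ``propagate across a connected subgraph,'' just made explicit that $S$ would absorb $u$ and $v$ as well. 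With that clarification the argument is complete. (An alternative standard route is Rayleigh monotonicity together with the series law, or Thomson's principle as you mention; these are equivalent in content and no simpler here.)
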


\begin{fact}\label{fact:w-R}
	For any weighted graph $G=(V,F,\w)$ and any edge $f\in F$ we have $\w(f)\cdot R_G(f)\le1$, with equality if and only if $f$ is a bridge.
\end{fact}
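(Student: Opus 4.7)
The plan is to obtain both the inequality and the equality characterization directly from the Rayleigh-quotient form of effective resistance given in Fact~\ref{fact:eff-res}. Writing $f=(u,v)$, Fact~\ref{fact:eff-res} gives
\[
R_G(f) \;=\; \sup_{x:\,x^\top L_G x>0}\; \frac{(x_u-x_v)^2}{x^\top L_G x}.
\]
Since $L_G$ is a sum of rank-one terms,
\[
x^\top L_G x \;=\; \sum_{g=(a,b)\in F} \w(g)\,(x_a-x_b)^2 \;\ge\; \w(f)\,(x_u-x_v)^2,
\]
because every summand is non-negative and one of them is precisely the edge-$f$ term. Dividing gives $(x_u-x_v)^2/(x^\top L_G x)\le 1/\w(f)$ for every admissible $x$, hence $\w(f)\cdot R_G(f)\le 1$.

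For the equality characterization, I would argue in two directions. First suppose $f$ is a bridge, and let $S$ be the vertex set of the connected component of $u$ in $G-f$; then $v\notin S$. Taking the test vector $x=\chi_S$, every edge $g\neq f$ has both endpoints inside $S$ or both outside $S$, so $(x_a-x_b)=0$ for $g\neq f$, while $x_u-x_v=1$. Thus $x^\top L_G x=\w(f)$ and the Rayleigh quotient equals $1/\w(f)$, forcing $R_G(f)\ge 1/\w(f)$ and hence equality. Conversely, assume $\w(f)\cdot R_G(f)=1$. Because the Rayleigh quotient is scale-invariant and continuous on the compact set $\{x\perp\mathbf 1,\ \|x\|=1\}$ (recall $G$ is connected, so $\ker L_G=\mathrm{span}\{\mathbf 1\}$), the supremum is attained by some $x^\star$. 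At this maximizer, the inequality from the first paragraph must be tight, i.e.
\[
\sum_{g=(a,b)\in F\setminus\{f\}} \w(g)\,(x^\star_a-x^\star_b)^2 \;=\; 0 \qquad\text{and}\qquad x^\star_u\neq x^\star_v.
\]
The first equation says $x^\star$ is constant on every connected component of $G-f$, and the second then forces $u$ and $v$ to lie in different components of $G-f$, which is exactly the statement that $f$ is a bridge.

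The argument has no real obstacle, as both directions reduce to inspecting when a single non-negative summand equals the full sum; the only mild subtlety is to verify that the supremum in Fact~\ref{fact:eff-res} is attained, which is immediate from the compactness of the unit sphere in $\mathbf 1^\perp$ combined with continuity of the Rayleigh quotient on that set.
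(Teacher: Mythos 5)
Your proof is correct. The paper states Fact~\ref{fact:w-R} as a known fact without supplying a proof, so there is no in-paper argument to compare against; your derivation from the Rayleigh-quotient characterization (Fact~\ref{fact:eff-res}) is a clean, self-contained route, and both directions of the equality case are handled properly: the inequality follows since $\w(f)(x_u-x_v)^2$ is one nonnegative summand of $x^\top L_G x$, the ``bridge $\Rightarrow$ equality'' direction uses the cut indicator $\chi_S$ as a witness, and the converse uses attainment of the supremum on the compact sphere in $\mathbf 1^\perp$ together with the fact that all edge weights are strictly positive, so vanishing of the off-$f$ part of the energy forces $x^\star$ to be constant on each component of $G-f$. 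One small point worth making explicit: since the paper allows parallel edges in weight assignments, a parallel copy $f'$ of $f$ would force $x^\star_u=x^\star_v$ in the tightness condition, so equality cannot hold; this is consistent with your characterization because $f$ with a parallel copy is never a bridge.
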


\begin{fact}\label{fact:eff-res-sum}
	For any weighted graph $G=(V,F,\w)$ we have
	$$\sum_{f\in F}\w(f)\cdot R_G(f)=n-1.$$
\end{fact}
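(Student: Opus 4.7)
The plan is to reduce the weighted sum of effective resistances to the trace of $L_G^+ L_G$ and then compute that trace via connectivity.

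First, for each edge $f = (u,v) \in F$, I would rewrite
$$R_G(u,v) = (\chi_u - \chi_v)^\top L_G^+ (\chi_u - \chi_v) = \Trace{L_G^+ (\chi_u - \chi_v)(\chi_u - \chi_v)^\top}$$
using the cyclic property of the trace. Multiplying by $\w(f)$, summing over all edges, and using linearity of the trace yields
$$\sum_{f \in F} \w(f)\cdot R_G(f) = \Trace{L_G^+ \sum_{f = (u,v) \in F} \w(f) (\chi_u - \chi_v)(\chi_u - \chi_v)^\top}.$$

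Next, I would recognize the matrix inside the trace: the standard rank-one decomposition of the Laplacian (which follows from Definition~\ref{def:Laplacian}) gives exactly
$$L_G = \sum_{f = (u,v) \in F} \w(f) (\chi_u - \chi_v)(\chi_u - \chi_v)^\top,$$
so the sum collapses to $\Trace{L_G^+ L_G}$.

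Finally, I would invoke the standing assumption that $G$ is connected. This implies that $\ker(L_G)$ is exactly the one-dimensional span of the all-ones vector, so $L_G$ has rank $n-1$. For a symmetric PSD matrix, $L_G^+ L_G$ is the orthogonal projection onto $\range{L_G}$, which is an $(n-1)$-dimensional subspace. Hence $\Trace{L_G^+ L_G} = n-1$, completing the proof.

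There is no serious obstacle here; the whole argument is a short linear-algebraic manipulation. The one point that deserves care is the identification of $L_G^+ L_G$ with an orthogonal projection of rank equal to $\rank{L_G}$, which relies on $L_G$ being symmetric PSD (so its pseudoinverse shares its eigenbasis), together with connectivity to pin down the rank as $n-1$.
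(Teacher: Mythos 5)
Your proof is correct and complete. The paper states this as a known fact without proof (it appears in the "Preliminaries" section under "several well-known and useful facts"), so there is no in-paper argument to compare against; your trace-based derivation is the standard one. Two small points worth keeping in view: (i) as you note, the rank argument needs connectivity, which is a standing assumption in the paper (the fact would read $n-c$ for $c$ components otherwise); and (ii) the rank-one decomposition $L_G = \sum_{f=(u,v)\in F}\w(f)(\chi_u-\chi_v)(\chi_u-\chi_v)^\top$ continues to hold in the multigraph setting that the paper uses (e.g.\ for weight assignments), so the argument extends without modification there.
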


\subsection{Chernoff Bound}

\begin{theorem}[Chernoff bound, see for example~\cite{DBLP:books/daglib/0021015}]\label{thm:chernoff}

Let $Z_1,Z_2,\ldots,Z_k$ be independent random variables in the range $[0,a]$. Furthermore, let $\sum Z_i=Z$ and let $\mu\ge\mathbb E(Z)$. Then for any $\delta\in(0,1)$,
$$\mathbb P\left(\left|Z-\mathbb E(Z)\right|\ge\delta\mu\right)\le2\exp\left(-\frac{\delta^2\mu}{3a}\right).$$

\end{theorem}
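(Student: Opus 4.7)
The plan is to follow the classical Cramér--Chernoff (exponential Markov) method. By rescaling $Z_i \mapsto Z_i/a$ I may assume $a = 1$, so that each $Z_i \in [0,1]$; the probability event and the target bound both scale consistently, so it suffices to prove the bound in this normalized setting. Write $p := \mathbb E(Z) = \sum_i \mathbb E(Z_i)$, and note $p \le \mu$ by hypothesis. I will bound the two tails separately and combine them by a union bound, accounting for the leading factor of $2$.

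For the moment generating function I would use convexity of $x \mapsto e^{tx}$ on $[0,1]$ to get the pointwise inequality $e^{tx} \le 1 + (e^t - 1)x$ for all $t \in \mathbb R$ and $x \in [0,1]$. Taking expectations, applying $1 + y \le e^y$, and multiplying over the independent $Z_i$ yields $\mathbb E(e^{tZ}) \le \exp((e^t - 1)p)$. Exponential Markov at $t = \ln(1 + \delta')$ for the upper tail, where $\delta' := \delta\mu/p$, and at the analogous optimum $t = -\ln(1 - \delta')$ for the lower tail, produces the standard multiplicative bounds
\begin{align*}
\mathbb P(Z \ge p + \delta\mu) &\le \exp\!\bigl(-p\,[(1+\delta')\ln(1+\delta') - \delta']\bigr), \\
\mathbb P(Z \le p - \delta\mu) &\le \exp\!\bigl(-p\,[(1-\delta')\ln(1-\delta') + \delta']\bigr),
\end{align*}
with the lower-tail event vacuous unless $\delta' < 1$.

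The remaining step is a scalar check that each exponent is at most $-\delta^2 \mu / 3$. For the lower tail, the identity $(1-s)\ln(1-s) + s \ge s^2/2$ on $[0,1)$ gives the exponent at most $-p{\delta'}^2/2 = -\delta^2\mu^2/(2p) \le -\delta^2\mu/3$ using $p \le \mu$. For the upper tail I would split on $\delta'$: when $\delta' \in (0,1]$ the textbook estimate $(1+s)\ln(1+s) - s \ge s^2/3$ gives $-p{\delta'}^2/3 = -\delta^2\mu^2/(3p) \le -\delta^2\mu/3$, and when $\delta' > 1$ (which can occur because $\mu$ is only an \emph{upper} bound on $p$) I would invoke the coarser bound $(1+s)\ln(1+s) - s \ge s/3$ for $s \ge 1$, giving $-p\delta'/3 = -\delta\mu/3 \le -\delta^2\mu/3$ since $\delta \le 1$. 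The only subtle point -- and the main obstacle of the bookkeeping -- is precisely this case split, needed to absorb the gap between $\mathbb E(Z)$ and the upper bound $\mu$ into a single clean statement of the form $\exp(-\delta^2\mu/(3a))$; no idea beyond the textbook multiplicative Chernoff argument is required.
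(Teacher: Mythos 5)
The paper states Theorem~\ref{thm:chernoff} as a cited textbook fact and does not supply a proof, so there is no internal proof to compare against. Your derivation via the exponential-moment (Cram\'er--Chernoff) method is correct and complete: the MGF bound $\mathbb{E}(e^{tZ})\le\exp((e^t-1)p)$ is standard, the two tail exponents $-p\,[(1+\delta')\ln(1+\delta')-\delta']$ and $-p\,[(1-\delta')\ln(1-\delta')+\delta']$ are the right optima, and all three scalar inequalities you invoke, namely $(1-s)\ln(1-s)+s\ge s^2/2$ on $[0,1)$, $(1+s)\ln(1+s)-s\ge s^2/3$ on $(0,1]$, and $(1+s)\ln(1+s)-s\ge s/3$ on $[1,\infty)$, check out by the usual derivative arguments. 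You also correctly identified the one genuine subtlety in this formulation, namely that $\mu$ is only an upper bound on $\mathbb{E}(Z)$ so $\delta'=\delta\mu/p$ can exceed $1$, and you handled it with the right case split (and noted the lower tail is vacuous there). This is exactly the bookkeeping needed to reach the clean form $2\exp(-\delta^2\mu/(3a))$ from the multiplicative Chernoff bounds; no gap.
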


%%% Local Variables:
%%% mode: latex
%%% TeX-master: "000-main"
%%% End:

%!TEX root=./000-main.tex

\section{Technical Overview}

\subsection{Analyzing Ordinary Graphs}

The sparsification of ordinary graphs is a highly studied topic, with several techniques proposed for the construction of spectral sparsifiers throughout the years~\cite{Spielman2011,Spielman2011a,BatsonSS12,ZhuLO15,LeeS15a,Lee017}. However, the analysis of spectral sparsifiers always relies heavily on the linear nature of the graph Laplacian, e.g., using matrix concentration results such as matrix Bernstein~\cite{Tropp2011} or the work of~\cite{RudelsonV07}. This presents a significant problem when attempting to generalize these techniques to the highly non-linear setting of hypergraph spectral sparsification. Indeed, all previous results lose at least a factor of $r$ due to this obstacle. We therefore dedicate the entirety of our first technical section (Section~\ref{sec:warm-up}) to presenting a new proof of the existence of nearly linear spectral sparsifiers for ordinary graphs. We use the algorithm from~\cite{Spielman2011a}, which constructs a sparsifier $\widetilde{G}$ by sampling each edge with probability proportional to its effective resistance.
However, our proof avoids using matrix concentration inequalities, and instead relies on a more direct chaining technique for proving the concentration of $x^\top L_{\wt G}x$ around its expectation, i.e. $x^\top L_G x$, for all $x$ simultaneously.
To our knowledge, this is the first nearly-optimal direct analysis of spectral sparsification through effective resistance sampling. It will also be the basis of our main result, as we adapt it to the hypergraph setting in Sections~\ref{sec:balanced} and~\ref{sec:main}.

More formally, for an input graph $G=(V,F,\w)$, we define $\wt G$ as the result of sampling each edge $f$ of $G$ independently with probability $p(f)\approx \w(f)\cdot R_G(f)$, and setting its weight to $\wt\w(f)=\w(f)/p(f)$. Our aim is then to prove
\begin{equation}\label{eq:tech-overview-ordinary-main}
x^\top L_{\wt G}x\approx x^\top L_G x
\end{equation}
simultaneously for all $x\in \mathbb R^V$. For simplicity we assume that $x^\top L_G x=1$. Equation~\eqref{eq:tech-overview-ordinary-main} is in fact the concentration of a random variable around its expectation, and so we can use Chernoff bound to prove it for any specific $x$.
Our plan is then to use a combination of Chernoff and union bounds to prove it for all possible $x$. Since $x$ can take any value in $\mathbb R^V$ we must discretize it to some $\epsilon$-net while retaining a good approximation to its quadratic form, i.e. $x^\top L_G x$.

Let us take a closer look at the application of Chernoff bound to Equation~\eqref{eq:tech-overview-ordinary-main}: $x^\top L_{\wt G}x$ is the sum of the independent random variables $\wt\w(u,v)\cdot(x_u-x_v)^2$ for $(u,v) \in F$; hence, by Theorem~\ref{thm:chernoff},
the strength of the bound depends crucially on the upper bound $a$ on values that each random individual random variable can possibly attain. The maximum value of $\wt\w(u,v)\cdot(x_u-x_v)^2$ is attained when $(u,v)$ is sampled in $\wt G$, in which case it is $\approx(x_u-x_v)^2/R_G(u,v)$. Thus
$$\mathbb P\left(x^\top L_{\wt G}x\not\approx x^\top L_G x\right)\lessapprox\exp\left(-\frac1{\max_{(u,v)\in F}(x_u-x_v)^2/R_G(u,v)}\right).$$
This upper bound can be as bad as $\exp(-\wt O(1))$ and is far too crude for our purposes---no sufficiently sparse rounding scheme (i.e., discretization) exists for $x$. We turn to the technique of chaining---the use of progressively finer and finer rounding schemes.

As seen above, the strength of our Chernoff bound depends primarily on the quantity $(x_u-x_v)^2/R_G(u,v)$ for each edge $(u,v)$, which we call the ``power'' of the edge.
Therefore, it makes sense to partition the edges of $G$ into a logarithmic number of classes based on their power, that is $F_i$ contains edges $(u,v)$ for which $(x_u-x_v)^2\approx2^{-i}\cdot R_G(u,v)$.
When focusing only on the subgraphs $G(F_i)$ induced by $F_i$, we get the more fine-tuned Chernoff bound
$$\mathbb P\left(x^\top L_{\wt G(F_i)}x\not\approx x^\top L_{G(F_i)}x\right)\lessapprox\exp\left(-\frac1{\max_{(u,v)\in F_i}(x_u-x_v)^2/R_G(u,v)}\right)\lessapprox\exp\left(-2^i\right).$$

We thus have the task of proposing a rounding scheme $\varphi_i:\mathbb R^V\to\mathbb R^V$ specially for each class $F_i$ such that
\begin{itemize}
	\item the image of $\varphi_i$ is a finite set of size at most $\approx\exp\left(2^i\right)$,
	\item the rounding approximately preserves the quantity $(x_u-x_v)^2$ for $(u,v)\in F_i$.
\end{itemize}

To gain more intuition on what such a rounding scheme must look like, we draw inspiration from the idea of resistive embedding from~\cite{Spielman2011a}. We map the edges in $F_i$, as well as our potential vector $x$, into vectors in $\mathbb R^n$ in such a way that all the relevant quantities arise as norms or scalar products:
\begin{align*}
	(u,v)&\mapsto\underline{a_{u,v}}=\frac{L_G^{+/2}(\chi_u-\chi_v)}{\left\|L_G^{+/2}(\chi_u-\chi_v)\right\|},\\
	x&\mapsto\underline{y_x}=L_G^{1/2}x.
\end{align*}

Notice that both $\underline{a_{u,v}}$ and $\underline{y_x}$ are normalized (since $x^\top L_Gx$=1). Furthermore, the crucial quantity, the power of the edge $(u,v)$ arises as the square of a scalar product:
$$\langle\underline{a_{u,v}},\underline{y_x}\rangle^2=\frac{(x^\top(\chi_u-\chi_v))^2}{(\chi_u-\chi_v)^\top L_G^+(\chi_u-\chi_v)}=\frac{(x_u-x_v)^2}{R_G(u,v)}.$$

Thus we are interested in rounding $\underline{y_x}$  in a way that preserves $\langle\underline{a_{u,v}},\underline{y_x}\rangle^2$ up to small multiplicative error \emph{in all cases where it was $\approx2^{-i}$ to begin with}. Thus, it suffices to guarantee an additive error of at most $\lessapprox2^{-i}$ in our rounding scheme. This is the known problem of ``compression of approximate inner products'' and has been previously studied;~\cite{AK17} guarantees a rounding scheme whose image is of size at most $\approx\exp\left(2^i\right)$. This can be translated into a rounding scheme for $x\in\mathbb R^V$, with the same image-size, exactly as desired (see Lemma~\ref{lem:rounding}).

With the desired rounding scheme in hand, we can now use a combination of Chernoff and union bounds to prove that for all $x$ simultaneously $$x^\top L_{\wt G(F_i)}x\approx x^\top L_{G(F_i)}x.$$
Summing this over all edge-classes gives us Equation~\ref{eq:tech-overview-ordinary-main}.

For the detailed proof, which is considerably more complicated than the above sketch, see the proof of Theorem~\ref{thm:ordinary-sparsification} in Section~\ref{sec:warm-up}.

\subsection{Extension to Hypergraphs}

To adapt the previous argument to the hypergraph setting, we use the idea of balanced weight assignments from~\cite{Chen2020}. Essentially, we construct an ordinary graph $G=(V,F,\w)$ to accompany our input hypergraph $H=(V,E,w)$ by replacing each hyperedge $e$ with a clique $F_e$ over the vertices in $e$.
However, unlike in some previous works on hypergraph sparsification, the clique $F_e$ is not assigned weights uniformly, but instead the weight is carefully distributed among the edges. Intuitively, all the weight is shifted onto the most ``important'' edges. In the case of~\cite{Chen2020}, the measure of importance was ``strength'', a quantity relevant to cut sparsification, while in our case it is effective resistance.

More formally, a weighting assignment $z$ of the cliques is considered $\gamma$-balanced if for all $e\in E$
\begin{itemize}
	\item $\sum_{f\in F_e}\w(f)=w(e)$,
	\item and
	$$\gamma\cdot\min_{g\in F_e:\ \w(g)>0}R_G(g)\ge\max_{f\in F_e}R_G(f).$$
\end{itemize}
	
In words, all but the zero-weight edges of $F_e$ have approximately the same effective resistance. This allows hyperedge $e$ to inherit this effective-resistance value as its importance when sampling hyperedges. Our task is now to prove the existence of balanced weight assignments for all hypergraphs, and then to adapt the proof of Section~\ref{sec:warm-up}.

\paragraph{Finding balanced weight assignments.} In~\cite{Chen2020}, balanced weight assignments are constructed through the following intuitive process: Find a pair of edges violating the second constraint, that is $f,g\in F_e$ where $\w(g)\neq0$ and $f$ has significantly higher importance than $g$. Then shift weight from $g$ to $f$; this alleviates the constraint violation either because the importances of $g$ and $f$ become more similar, or simply because the weight of $g$ decreases to $0$. We call this resolving the imbalance of $f$ and $g$.~\cite{Chen2020} strings together such steps, carefully ordered and discretized, to eventually produce a balanced weight assignment of the input hypergraph.

However, their analysis relies heavily on a certain lemma about how ``strength'' (their measure of edge importance) behaves under weight updates. Lemma~6 of~\cite{Chen2020} states that altering the weight of an edge $f$, will not affect edges of significantly greater ``strength'' than $f$. This is not the case for effective resistances. It is easy to construct scenarios to the contrary; even ones in which altering the weight of edges of low resistance can increase the maximum effective resistance in the graph.

Thus the analysis of~\cite{Chen2020} does not extend to our setting. Instead we use a potential function argument to say that we make irreversible progress whenever we resolve the imbalance of two edges $f$ and $g$. Our choice of potential function is surprising, and is one of the main technical contributions of this paper. We define the spanning tree potential (or ST-potential) of a connected weighted ordinary graph $G=(V,F,\w)$, denoted $\Psi(G)$. For edge weights that equal $1$ uniformly (that is for unweighted graphs) it is simply the logarithm of the number of distinct spanning trees in $G$. In weighted graphs it is generalized to
$$\Psi(G)=\log\left(\sum_{T\in\mathbb T}\prod_{f\in T}\w(f)\right),$$
where $\mathbb T$ denotes the set of all spanning trees in $G$. Due to the relationship between spanning tree sampling and effective resistances (see for example~\cite{lovasz1993random}) we can prove a crucial update formula for $\Psi(G)$: if an edge $f$ has its weight changed by $\lambda\in\mathbb R$, the ST-potential increases by $\log(1+\lambda\cdot R(f))$. Since whenever we resolve the imbalance of a pair of edges, we shift weight from the edge of lower effective resistance to that of higher effective resistance, this allows us to argue that the ST-potential always increases throughout the process, which eventually terminates in a balanced weight assignment (see Algorithm~\ref{alg:greedy-balanced} and Theorem~\ref{thm:greedy-balanced}).

This proves the existence of balanced weight assignments, which suffices to show the existence of nearly linear size spectral sparsifiers for all hypergraphs. However, to improve running time (from exponential to polynomial in the input size), we introduce the novel concept of \emph{approximate} balanced weight assignments, by slightly relaxing the definition. These are still sufficient to aid in constructing spectral sparsifiers, and are faster to construct using Algorithm~\ref{alg:greedy-approx-balanced}.

For more details on the ST-potential, as well as the construction of balanced weight assignments see Section~\ref{sec:balanced}.

\paragraph{Using balanced weight assignments to construct hypergraph spectral sparsifiers.} Given a hypergraph $H=(V,E,w)$ and its balanced weight assignment $G=(V,F,\w)$ we assign importance to each hyperedge proportionally to the maximum effective resistance in $F_e$ (the clique corresponding to $e$). Thus we perform importance sampling, which samples each hyperedge independently with probability $p(e)\approx w(e)\cdot\max_{f\in F_e}R_G(f)$.

The broad strokes of the hypergraph proof in Section~\ref{sec:main} proceed very similarly to those of the proof for ordinary graphs in Section~\ref{sec:warm-up}. However, numerous details need to be figured out in order to bridge the gap between the two settings. It is interesting to note that our rounding scheme is exactly the same as in Section~\ref{sec:warm-up}, to the point of even being defined in terms of $G$, not $H$. (Indeed it is impossible to define such a rounding scheme directly in terms of $H$; Lemma~\ref{lem:rounding} relies heavily on the linear nature of the ordinary graph Laplacian.) Nevertheless, we manage to extend the approximation guarantee of the rounding scheme from edges to hyperedges (see Claim~\ref{claim:hyper-rounding}).

For the detailed analysis of hypergraph spectral sparsification through effective resistance-based importance sampling, see Section~\ref{sec:main}.

\subsection{Speed-Up}

Using a results of Sections~\ref{sec:balanced} and~\ref{sec:main} we can put together a polynomial time algorithm for spectral sparsification of hypergraphs. Simply run Algorithm~\ref{alg:greedy-approx-balanced} to produce an approximate balanced weight assignment, and then use importance sampling (Algorithm~\ref{alg:main}). The bottleneck of this procedure is constructing the weight assignment, which takes time $O(m\cdot \poly{n})$. (Given the weight assignment, it is trivial to implement importance sampling).

In Section~\ref{sec:speed-up} we reduce this to the nearly optimal $\wt O(mr + \poly{n})$. (Note that $O(mr)$ is the size of the input.) Our first step is the common trick of using a faster sparsification algorithm, but one which produces a larger output, to preprocess the input hypergraph. We use the algorithm of~\cite{Bansal2019} which -- with small modifications -- can be made to run in the desired $\wt O(mr + \poly{n})$ time. The resulting hypergraph has only polynomially many hyperedges (in $n$); however, the aspect ratio of edge weights (that is the ratio between the largest and smallest edge weights) can naturally be exponential in $n$.

Unfortunately, Algorithm~\ref{alg:greedy-approx-balanced} scales linearly in the aspect ratio of edge weights (see Theorem~\ref{thm:greedy-approx-balanced}) and so we propose another algorithm for finding a balanced weight assignment -- one specifically designed for the setting when the input graph is polynomially sparse, but has exponential aspect ratio.

Suppose our input hypergraph, $H=(V,E,w)$ has edge weights in the range $[1,\exp(n)]$. We then divide hyperedges into weight categories such as $E_i=\{e\in E|w(e)\in[n^{10(i-1)},n^{10i})\}$. We then bisect $H$ into two hypergraphs $H_1$ and $H_2$, where $H_1$ contains all hyperedges in odd numbered categories, and $H_2$ all those in even numbered categories. This results in hyergraphs ($H_1$ and $H_2$) where hyperedges fall into extremely well-separated categories; so extremely in fact, that the weight of a hyperedge in a higher category (for example $e\in E_i\subseteq H_1$) has higher weight than all hyperedges of all lower categories \emph{combined}, that is
$$w(e)\gg\sum_{e'\in E_{<i}\cap H_1}w(e').$$

We use this property to independently find weight assignments on $H_1$ and $H_2$. Informally, we go through the categories of hyperedges, from heaviest to lightest, resolving all instances of imbalance. We never return to a category once we moved on, and we prove that no amount of changes to the weight assignment of lower categories can disrupt the balance of a higher category, due to the huge discrepency in weights. For a more detailed and formal argument see Section~\ref{sec:separated}.

%%% Local Variables:
%%% mode: latex
%%% TeX-master: "000-main"
%%% End:

%!TEX root=./000-main.tex

\section{Warm-Up: Ordinary Graphs}\label{sec:warm-up}

We begin by reproving the famous theorem of Spielman and Srivastava~\cite{Spielman2011a}, which states that sampling edges of a graph with probability proportional to their effective resistance (and then reweighting appropriately) results in a spectral sparsifier with high probability.
We prove a somewhat weaker version of the theorem, where we oversample by an $O(\epsilon^{-4}\log^3 n)$ factor, as opposed to the $\epsilon^{-2}\log n$ factor in the original.
Another slight difference is that our version samples every edge independently,
instead of sampling a predetermined number of edges with replacement in~\cite{Spielman2011a}. More recent proofs of the theorem of~\cite{Spielman2011a} that use the matrix Bernstein inequality as opposed to~\cite{RudelsonV07} also use the same distribution as ours. 

\begin{theorem}[A slightly weaker version of~\cite{Spielman2011a}]\label{thm:ordinary-sparsification}
Let $G=(V,F,\w)$ be a weighted ordinary graph with $n$ vertices
and let $1/n\le\epsilon\le1/2$.
For every edge $f\in F$, let $p(f)=\min(1,\lambda\cdot\w(f)\cdot R_G(f))$
for a sufficiently large factor $\lambda = \Theta(\epsilon^{-4}\log^3 n)$.
Sample each edge $f\in F$ independently with probability $p(f)$,
and give it weight $\wt\w(f)=\w(f)/p(f)$ if sampled.
The resulting graph, $\wt G=(V,\wt F,\wt\w)$ is an $\epsilon$-spectral sparsifier of $G$ with probability at least $1-O(\log n/n)$.
\end{theorem}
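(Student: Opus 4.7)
The plan is to follow the chaining strategy outlined in the technical overview. By homogeneity it suffices to show that, with high probability, for every $x \in \mathbb{R}^V$ with $x^\top L_G x = 1$ one has $|x^\top L_{\wt G} x - 1| \le \epsilon$. For a fixed $x$ the quantity $x^\top L_{\wt G} x = \sum_{f \in F} \wt\w(f)(x_u - x_v)^2$ is a sum of independent nonnegative random variables with expectation $1$, but a single sampled summand can be as large as $(x_u - x_v)^2/(\lambda R_G(f))$, so applying Theorem~\ref{thm:chernoff} directly would require $\lambda$ to dominate this worst-case ``power,'' which can be $\Theta(1)$. To bypass this, partition the edges into $O(\log n)$ power classes
$$F_i := \Big\{ f \in F : (x_u - x_v)^2 / R_G(f) \in (2^{-i-1}, 2^{-i}] \Big\}, \qquad 0 \le i \le O(\log n);$$
edges with power below $\mathrm{poly}(1/n)$ contribute at most $O(\epsilon)$ in total and can safely be discarded.

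Within $F_i$ every sampled edge contributes at most $O(2^{-i}/\lambda)$, so Theorem~\ref{thm:chernoff} applied to the class-restricted sum $x^\top L_{\wt G(F_i)} x$ gives
$$\Pr\Big[ \big| x^\top L_{\wt G(F_i)} x - x^\top L_{G(F_i)} x \big| > \tfrac{\epsilon}{\log n} \Big] \le 2 \exp\Big( - \Omega(\lambda \epsilon^2 2^i / \log^2 n) \Big),$$
and summing over the $O(\log n)$ classes yields the target $\epsilon$ bound for a fixed $x$. The real work is promoting this to a bound that is uniform over $x$.

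For each class $i$ we design a rounding map $\varphi_i : \mathbb{R}^V \to \mathbb{R}^V$ whose image has size $\exp(O(2^i \epsilon^{-2} \log^2 n))$---precisely what the above Chernoff tail can tolerate in a union bound---and which preserves $(x_u - x_v)^2 / R_G(f)$ for edges $f \in F_i$ up to additive error $O(\epsilon \cdot 2^{-i}/\log n)$. Following the resistive embedding of~\cite{Spielman2011a}, set $\underline{a_{u,v}} := L_G^{+/2}(\chi_u - \chi_v) / \| L_G^{+/2}(\chi_u - \chi_v) \|$ and $\underline{y_x} := L_G^{1/2} x$; both are unit vectors (the latter because $x^\top L_G x = 1$) and one checks that $\langle \underline{a_{u,v}}, \underline{y_x} \rangle^2 = (x_u - x_v)^2 / R_G(u,v)$. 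Preserving this squared inner product to the required additive precision reduces to compressing the inner products of $\underline{y_x}$ against the fixed set of at most $n^2$ unit vectors $\{\underline{a_{u,v}}\}$ to additive error $\approx 2^{-i/2} \epsilon / \log n$, which is the approximate-inner-product compression problem solved in~\cite{AK17} and admits a net of the claimed size. Pulling the rounded embedded vector back to $\mathbb{R}^V$ via $L_G^{+/2}$ produces $\varphi_i$; this is the content of the forthcoming Lemma~\ref{lem:rounding}.

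The final step is a union bound: with probability at least $1 - O(\log n / n)$, for every $i$ and every $y$ in the image of $\varphi_i$, the class-$F_i$ Chernoff concentration holds with $y$ in place of $x$. The rounding error between $x$ and $\varphi_i(x)$ transfers into an additive $O(\epsilon/\log n)$ error per class that is absorbed when summing over $i$, and setting $\lambda = \Theta(\epsilon^{-4} \log^3 n)$ makes all parameters consistent. The main technical obstacle will be that the class membership $f \in F_i$ itself depends on $x$, so replacing $x$ by $\varphi_i(x)$ can migrate an edge to an adjacent class $F_{i \pm 1}$; we handle this by choosing $\varphi_i$ with additive precision strictly finer than the class width $2^{-i}$, so that any migration happens only at class boundaries and can be absorbed by charging the offending edges to the neighboring class where the same concentration bound applies.
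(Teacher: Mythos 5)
Your high-level strategy matches the paper's: normalize $x^\top L_G x = 1$, partition the edges into logarithmically many classes by the power $(x_u-x_v)^2/R_G(f)$, run Chernoff within each class, and make the union bound work via the resistive embedding $\underline{a_{u,v}},\underline{y_x}$ together with the Alon--Klartag inner-product-compression theorem. This is precisely the apparatus of Lemma~\ref{lem:rounding}, and your derivation that $\langle\underline{a_{u,v}},\underline{y_x}\rangle^2=(x_u-x_v)^2/R_G(u,v)$ is correct.

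The genuine gap is exactly the point you flag but do not resolve. You define
$F_i=\{f:(x_u-x_v)^2/R_G(f)\in(2^{-i-1},2^{-i}]\}$
in terms of $x$ itself, and then propose to patch the dependence by taking $\varphi_i$ finer than the class width and ``charging the offending edges to the neighboring class.'' This does not work, for two compounding reasons. First, the union bound must be over finitely many pairs $(x^{(i)},F_i)$; but with your definition the set $F_i$ is a function of the continuous variable $x$, so regardless of how fine the net is, the collection of possible $F_i$'s is not bounded by the net cardinality. Second, even if you redefine $F_i$ to depend on $\varphi_i(x)$, the two-sided bucketing into $(2^{-i-1},2^{-i}]$ means the sets $\{F_i(\varphi_i(x))\}_i$ are each computed with respect to a \emph{different} rounding of the same $x$, and they need not form a partition of $F$: an edge near a bucket boundary can land in two classes or in none. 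Then $\sum_i\big(x^\top L_{F_i}x - x^\top\widetilde L_{F_i}x\big)$ no longer telescopes to $x^\top Lx - x^\top\widetilde Lx$. The paper resolves both issues at once by defining a \emph{one-sided, cumulatively nested} family
$F_i'=\{f:(x_u^{(i)}-x_v^{(i)})^2\ge 2^{-i}R_G(f)\}$
in terms of the rounded vectors, taking $F_i=F_i'\setminus\bigcup_{j<i}F_j'$ so that the partition is automatic and $F_i$ depends only on $(x^{(1)},\ldots,x^{(i)})$ --- a set of bounded cardinality --- and then separately \emph{proving} the needed two-sided power bound (\Cref{cl:power-bound}: $(x_u^{(i)}-x_v^{(i)})^2\le 3\cdot 2^{-i}R_G(f)$ for $f\in F_i$). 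This combination of a one-sided rounded definition with a derived upper bound on the power is the essential missing ingredient.

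Two smaller points. You take the rounding precision to be $\eta\approx\epsilon\,2^{-i/2}/\log n$, implying a net of size roughly $\exp\!\big(O(2^i\log^3 n/\epsilon^2)\big)$; this overshoots the Chernoff tail achievable with $\lambda=\Theta(\epsilon^{-4}\log^3 n)$ and would force $\lambda=\Theta(\epsilon^{-4}\log^5 n)$. In fact $\eta=\Theta(\epsilon\,2^{-i/2})$ suffices: the rounding error enters as a multiplicative $(1\pm O(\epsilon))$ factor on each class, and multiplicative errors do not accumulate across the $O(\log n)$ classes the way the additive Chernoff deviations do. Finally, for the low-power tail edges (the paper's $F_{I+1}$), ``can safely be discarded'' is too quick: $x^\top L_{F_{I+1}}x$ is deterministically small, but $x^\top\widetilde L_{F_{I+1}}x$ is random, and one needs a separate high-probability bound $\sum_{f\in F}\widetilde\w(f)R_G(f)\le 2n$, which the paper obtains by one more Chernoff application using Facts~\ref{fact:w-R} and~\ref{fact:eff-res-sum}.
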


The original proof of this theorem used a concentration bound for matrices~\cite{RudelsonV07}
(later simplified to use the matrix Bernstein inequality)
to prove that $x^\top L_{\wt G}x$ is close to its expectation simultaneously for all $x\in\RR^n$, as required by Definition~\ref{def:sparsifier}.
This type of argument is difficult to adapt to hypergraph sparsification,
because the extension of quadratic forms to hypergraphs is highly non-linear.
We thus present an alternative proof that uses more primitive techniques
to bypass the reliance on linear algebra.

\begin{proofof}{Theorem~\ref{thm:ordinary-sparsification}}
  By Definition~\ref{def:sparsifier}, we must prove that for every $x\in\mathbb R^V$,
%  $x^\top L_{\wt G}x=(1\pm\epsilon)\cdot x^\top L_Gx$. More formally
	\begin{equation}\label{eq:ordinary-main}
		x^\top L_{\wt G}x \in (1 \pm \epsilon)\cdot x^\top L_G x.
	\end{equation}
  % \rnote{throughout: I suggest to write such inequalities as $BLAH \in (1\pm\epsilon) BLAH$. }

	We may assume without loss of generality that $x^\top L_G x = 1$. We denote the set of vectors $x$ where this is satisfied as $S^G\subseteq\mathbb R^V$.
  Furthermore, we simplify notation by denoting $L_G$ as $L$, and $L_{\wt G}$ as $\wt L$.
        Moreover, for any subset of edges $F'\subseteq F$,
        we denote the Laplacian of the subgraph of $G$ corresponding to $F'$ by $L_{F'}$,
        and similarly for the subgraph of $\wt G$ by $\wt L_{F'}$.

	It is clear from the construction of $\wt G$ that
	$$\mathbb E\left(x^\top \wt Lx\right)=x^\top Lx.$$
	Therefore, we are in effect trying to prove the concentration of a random variable around its expectation in Equation~\eqref{eq:ordinary-main}.
        Indeed, for any specific $x$, Equation~\eqref{eq:ordinary-main} holds with high probability by Chernoff bound (Theorem~\ref{thm:chernoff}). (One can consider $x^\top Lx$ as the sum of independent random variables of the form $\wt\w(u,v)\cdot(x_u-x_v)^2$.)

	In order to prove the concentration for all $x\in S^G$ simultaneously,
        we employ a net argument,
        where we ``round'' $x$ to some vector from a finite set 
        and apply a union bound on the rounded vectors.
        However, our rounding scheme is progressive 
        and has $O(\log n)$ ``levels'' with increasingly finer resolution. 
        Each $x$ will then determine a partition of the edges into levels,
        and we will prove concentration for each rounded vector
        and each level (subset of edges),
        and then apply a union bound over all these choices.

The existence of these rounding functions is guaranteed by the following lemma, which we will prove in Section~\ref{subsec:lem:rounding}.
\begin{lemma}\label{lem:rounding}
Let $G=(V,F,\w)$ be a connected weighted graph.
Then for every $i\in\mathbb N$ there exists a rounding function
$$\varphi_i:S^G\to\mathbb R^V$$
such that for all $x\in S^G$, denoting $x^{(i)}:=\varphi_i(x)$, we have:
\begin{enumerate}
\item
  The image of $\varphi_i$ is a finite set of cardinality
  $|\varphi_i(S^G)| \leq \exp\left(800C\log n\cdot2^i/\epsilon^2\right)$,
  where $C>0$ is the absolute constant from Theorem~\ref{thm:rounding}.
\item
  For every edge $f=(u,v)\in F$ such that $\max\left((x_u-x_v)^2,(x_u^{(i)}-x_v^{(i)})^2\right)\ge2^{-i}\cdot R_G(f)$,
  $$(x_u-x_v)^2 \in \left(1 \pm \frac{\epsilon}{7}\right) \cdot (x^{(i)}_u-x_v^{(i)})^2.$$
  % \rnote{throughout: I suggest to write such inequalities as $BLAH \in (1\pm\epsilon) BLAH$. }
\end{enumerate}
\end{lemma}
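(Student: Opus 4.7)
The plan is to prove the lemma by constructing the rounding scheme via a resistive embedding of both potential vectors and edges into the unit sphere of $\mathbb{R}^n$, and then invoking the inner-product compression result Theorem~\ref{thm:rounding} as a black box in that ambient space. For each $x\in S^G$, set $y_x:=L_G^{1/2}x$; since $\TNormS{y_x}=x^\top L_G x=1$, this is a unit vector. For each edge $f=(u,v)\in F$, set $a_f:=L_G^{+/2}(\chi_u-\chi_v)/\sqrt{R_G(f)}$, which is a unit vector by Definition~\ref{def:eff-res}. A direct calculation using $L_G^{+/2}L_G^{1/2}$ being the orthogonal projection onto $\spn{\vec 1}^\perp$ gives the key identity
$$\langle a_f, y_x\rangle \;=\; \frac{(\chi_u-\chi_v)^\top L_G^{+/2} L_G^{1/2} x}{\sqrt{R_G(f)}} \;=\; \frac{x_u-x_v}{\sqrt{R_G(f)}},$$
so that $(x_u-x_v)^2 = R_G(f)\cdot\langle a_f,y_x\rangle^2$. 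Thus the threshold condition in item~2 is equivalent to $\max(\langle a_f,y_x\rangle^2,\langle a_f,\pi(y_x)\rangle^2)\ge 2^{-i}$, and multiplicative preservation of $(x_u-x_v)^2$ reduces to multiplicative preservation of $\langle a_f,y_x\rangle^2$.

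I then invoke Theorem~\ref{thm:rounding} on the finite set of unit vectors $\{a_f:f\in F\}$ (of cardinality at most $\binom{n}{2}$) with additive accuracy $\delta:=\epsilon\cdot 2^{-i/2}/20$, producing a rounding map $\pi$ of the unit sphere of $\mathbb{R}^n$ onto a finite set of size at most $\exp\bigl(C\log\binom{n}{2}\cdot\delta^{-2}\bigr)\le\exp\bigl(800C\log n\cdot 2^i/\epsilon^2\bigr)$ with $|\langle a_f,y\rangle-\langle a_f,\pi(y)\rangle|\le\delta$ for every $f\in F$ and every unit $y$. Defining $\varphi_i(x):=L_G^{+/2}\pi(y_x)$ gives a function whose image has the required cardinality (note that $y_x$ determines $x$ up to an additive multiple of $\vec 1$, and such shifts do not affect the edge-differences appearing in the conclusion). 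By the same computation as above, $(x^{(i)}_u-x^{(i)}_v)^2 = R_G(f)\cdot\langle a_f,\pi(y_x)\rangle^2$.

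Verifying item~2 is then a short piece of algebra. Writing $s:=\langle a_f,y_x\rangle$ and $s':=\langle a_f,\pi(y_x)\rangle$, the hypothesis says $\max(|s|,|s'|)\ge 2^{-i/2}$. Without loss of generality $|s'|\ge 2^{-i/2}$ (the other case is symmetric). From $|s-s'|\le\delta=\epsilon\cdot 2^{-i/2}/20$ one gets
$$\bigl||s|-|s'|\bigr|\le\delta \quad\Longrightarrow\quad s^2 \in \bigl(1\pm 2\delta/|s'|+\delta^2/(s')^2\bigr)(s')^2 \subseteq \bigl(1\pm\tfrac{\epsilon}{10}\pm\tfrac{\epsilon^2}{400}\bigr)(s')^2,$$
which for $\epsilon\le 1/2$ lies inside $(1\pm\epsilon/7)(s')^2$. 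Multiplying by $R_G(f)$ yields the claimed bound $(x_u-x_v)^2\in(1\pm\epsilon/7)(x^{(i)}_u-x^{(i)}_v)^2$.

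The main obstacle is really just the setup of the correct embedding: one must choose embeddings that are simultaneously (i)~unit vectors, (ii)~faithfully encode the edge differences $x_u-x_v$ relative to their natural scale $\sqrt{R_G(f)}$, and (iii)~live in a fixed ambient $\mathbb{R}^n$ so that the single rounding $\pi$ can be applied uniformly. The resistive embedding above achieves all three. A secondary technical point is the book-keeping of constants: picking $\delta=\epsilon\cdot 2^{-i/2}/20$ is what makes the cardinality bound come out to exactly $\exp(800C\log n\cdot 2^i/\epsilon^2)$, and this slack is what gets converted into the $1\pm\epsilon/7$ factor in the approximation guarantee.
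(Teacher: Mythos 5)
Your proposal matches the paper's own proof essentially step for step: both use the resistive embedding $y_x = L_G^{1/2}x$, $a_f = L_G^{+/2}(\chi_u-\chi_v)/\sqrt{R_G(f)}$, apply Theorem~\ref{thm:rounding} with additive tolerance $\epsilon\cdot 2^{-i/2}/20$, and convert the additive inner-product error to the multiplicative $(1\pm\epsilon/7)$ bound via the same short algebra. The only cosmetic divergence is in how the representative is recorded — the paper picks a representative $x^{(i)}\in S^G$ for each rounded-vector equivalence class, whereas you define $\varphi_i(x)=L_G^{+/2}\pi(y_x)$ — but since the conclusion depends only on edge differences, which are insensitive to shifts along $\mathbf{1}$, these are interchangeable, and your handling of the two cases in item~2 (while not literally "symmetric," as the second case needs a $(1-\epsilon/20)^{-1}$ conversion from $|s|$ to $|s'|$) still lands comfortably within the $\epsilon/7$ budget.
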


The second guarantee of Lemma~\ref{lem:rounding} can be expressed in terms of the Laplacian of a single edge, resulting in the following corollary.
	\begin{corollary}\label{cor:rounding}
		For a rounding function $\varphi$ satisfying the guarantees of Lemma~\ref{lem:rounding}, and an edge $f=(u,v)\in F$ such that $\max\left((x_u-x_v)^2,(x_u^{(i)}-x_v^{(i)})^2\right)\ge2^{-i}\cdot R_G(f)$,
		$$x^\top L_{\{f\}}x \in \left(1 \pm \frac\epsilon7\right)\cdot x^{(i)\top}L_{\{f\}}x^{(i)} .$$
%		(The same is true of $\wt L_{\{e\}}$.)
	\end{corollary}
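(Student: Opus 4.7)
The statement is essentially a rewriting of the second guarantee of Lemma~\ref{lem:rounding} in matrix-Laplacian language, so my plan is to unfold the definition of $L_{\{f\}}$ and read off the conclusion. Concretely, by Definition~\ref{def:Laplacian}, the Laplacian of the one-edge weighted graph $(V,\{f\},\w)$ with $f=(u,v)$ is the rank-one matrix $\w(f)(\chi_u-\chi_v)(\chi_u-\chi_v)^\top$, so for every $z\in\mathbb R^V$,
$$z^\top L_{\{f\}} z \;=\; \w(f)\cdot (z_u-z_v)^2.$$
Applying this identity once with $z=x$ and once with $z=x^{(i)}$ converts the desired inclusion into an inclusion between $\w(f)(x_u-x_v)^2$ and $\w(f)(x_u^{(i)}-x_v^{(i)})^2$, where the common positive factor $\w(f)$ cancels.

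What remains is precisely the content of Lemma~\ref{lem:rounding}(2): under the hypothesis $\max\bigl((x_u-x_v)^2,(x_u^{(i)}-x_v^{(i)})^2\bigr)\ge 2^{-i}\cdot R_G(f)$, one has $(x_u-x_v)^2\in(1\pm\epsilon/7)(x_u^{(i)}-x_v^{(i)})^2$. The hypothesis of the corollary is stated identically, so the lemma applies verbatim, and multiplying through by $\w(f)\ge 0$ yields
$$x^\top L_{\{f\}} x \;\in\; \bigl(1\pm\tfrac{\epsilon}{7}\bigr)\cdot x^{(i)\top} L_{\{f\}} x^{(i)},$$
as required. There is no real obstacle here; the only thing to verify is the (trivial) observation that multiplying a two-sided multiplicative approximation by a nonnegative scalar preserves it, and that the hypothesis of the corollary exactly matches the trigger condition of Lemma~\ref{lem:rounding}(2). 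Thus the corollary will follow in a single line once Lemma~\ref{lem:rounding} is in hand.
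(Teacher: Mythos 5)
Your proposal is correct and matches the paper's (implicit) argument: the corollary is indeed just Lemma~\ref{lem:rounding}(2) restated after observing $z^\top L_{\{f\}}z=\w(f)(z_u-z_v)^2$ and cancelling the nonnegative scalar $\w(f)$. The paper leaves this as an immediate consequence, so there is nothing further to add.
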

  % \ynote{We should define what $L_F$ is because it's not clear whether we incorporate weights of edges in $F$ or not.}
  % \jnote{$L_{F'}$ in general is defined earlier in the proof.}

Let us take a sequence of the rounding functions $\varphi_i$ guaranteed by Lemma~\ref{lem:rounding} for $i=1,\ldots, I := \log_2(7n/\epsilon)\le3\log n$.
For each $x\in S^G$, it yields a sequence of rounded vectors $x^{(i)}=\varphi_i(x)$ for $i=1,\ldots,I$.
Furthermore, we use $x^{(i)}$ to define the subset of edges $F_i'\subseteq F$ by
	$$F_i' := \left\{f=(u,v)\in F\ \middle|\ \left(x_u^{(i)}-x_v^{(i)}\right)^2\ge2^{-i}\cdot R_G(f)\right\}.$$
	That is, the second guarantee of Lemma~\ref{lem:rounding} holds for $\varphi_i$ on edges in $F_i'$.
	Finally, we use $\set{F_i'}_i$ to partition $F$ as follows.
        Let the base case be $F_0=F_0' := \set{ f\in F \mid p(f)=1}$,
        where we recall that $p(f)=\min(1,\lambda\cdot\w(f)\cdot R_G(f))$.
        For each $i \in [I]$, let $F_i:=F_i'\setminus\bigcup_{j=0}^{i-1}F'_j$, and finally let $F_{I+1}=F\setminus\bigcup_{i=0}^I F_i'$.
        %Observe that $F=\cup_{i=0}^{I+1} F_i$ is a partition of the edge set
        %that depends on the rounded vectors 
        %(each $F_i$ depends on $x^{(1)},\ldots,x^{(i)}$)
        %but not directly on $x$.
        %\rnote{I added this observation}
        %
	
	Thus we have partitioned $F$ in such a way that the second guarantee of Lemma~\ref{lem:rounding} applies to edges in $F_i$, with respect to $\varphi_i$. Furthermore, $F_i$ are defined in terms of $x^{(i)}$ (and $x^{(j)}$ for $j<i$) instead of $x$, so that the number of possible sets $F_i$ is finite, and bounded thanks to the first guarantee of Lemma~\ref{lem:rounding}.
	
	We establish the following claim for later use.

\begin{claim}\label{cl:power-bound}
  For all $i \in [I]$ and $f=(u,v)\in F_i$, we have
  $$(x_u^{(i)}-x_v^{(i)})^2\le3\cdot2^{-i}\cdot R_G(f).$$
\end{claim}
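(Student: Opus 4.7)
The plan is to extract an upper bound on $(x_u^{(i)} - x_v^{(i)})^2$ by chaining two applications of Lemma~\ref{lem:rounding}: one at level $i-1$ (which, combined with $f\notin F_{i-1}'$, upper-bounds $(x_u-x_v)^2$) and one at level $i$ (which transfers that bound onto $(x_u^{(i)}-x_v^{(i)})^2$, losing only a $(1\pm\epsilon/7)$ factor). The base case $i=1$ is handled separately, using Fact~\ref{fact:eff-res} in place of the level-$(i{-}1)$ rounding bound.

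More concretely, for $i\ge 2$, I would first argue that $(x_u-x_v)^2 \le (1+\epsilon/7)\cdot 2^{-(i-1)}R_G(f)$. Since $f\in F_i$ implies $f\notin F_{i-1}'$, by definition $(x_u^{(i-1)}-x_v^{(i-1)})^2 < 2^{-(i-1)}R_G(f)$. Now split on whether $\max\bigl((x_u-x_v)^2,(x_u^{(i-1)}-x_v^{(i-1)})^2\bigr)\ge 2^{-(i-1)}R_G(f)$. If not, then in particular $(x_u-x_v)^2 < 2^{-(i-1)}R_G(f)$ and we are done. If yes, the max must then be attained by $(x_u-x_v)^2$, so Lemma~\ref{lem:rounding} applied to $\varphi_{i-1}$ gives $(x_u-x_v)^2 \le (1+\epsilon/7)(x_u^{(i-1)}-x_v^{(i-1)})^2 < (1+\epsilon/7)\cdot 2^{-(i-1)}R_G(f)$, as claimed.

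Next, since $f\in F_i\subseteq F_i'$, we have $(x_u^{(i)}-x_v^{(i)})^2 \ge 2^{-i}R_G(f)$, so Lemma~\ref{lem:rounding} at level $i$ applies and yields $(x_u^{(i)}-x_v^{(i)})^2 \le (1-\epsilon/7)^{-1}(x_u-x_v)^2$. Combining the two bounds gives
\[
(x_u^{(i)}-x_v^{(i)})^2 \;\le\; \frac{1+\epsilon/7}{1-\epsilon/7}\cdot 2 \cdot 2^{-i}R_G(f) \;\le\; 3\cdot 2^{-i}R_G(f),
\]
where the last inequality uses $\epsilon\le 1/2$. For the base case $i=1$, Fact~\ref{fact:eff-res} and $x\in S^G$ (so $x^\top L x=1$) give the substitute bound $(x_u-x_v)^2 \le R_G(f) = 2^{-(1-1)}R_G(f)$, and the same level-$i$ Lemma~\ref{lem:rounding} step completes the argument.

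I do not anticipate any real obstacle: the only mildly delicate point is the case split ensuring the hypothesis of Lemma~\ref{lem:rounding} at level $i-1$ holds whenever it is needed, which is handled by observing that the upper bound $(x_u^{(i-1)}-x_v^{(i-1)})^2 < 2^{-(i-1)}R_G(f)$ forces $(x_u-x_v)^2$ to be the larger of the two whenever the max exceeds the threshold.
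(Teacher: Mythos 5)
Your proof is correct and uses exactly the same ingredients as the paper (Fact~\ref{fact:eff-res} for $i=1$; for $i>1$, the defining inequality of $f\notin F_{i-1}'$ chained with two applications of Lemma~\ref{lem:rounding} at levels $i-1$ and $i$), differing only in that the paper argues by contradiction while you run the equivalent contrapositive directly, which makes the case split needed to invoke the level-$(i{-}1)$ hypothesis explicit rather than implicit.
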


\begin{proof}
The second guarantee of Lemma~\ref{lem:rounding} for $\varphi_i$ applies to $f$,
and thus $(x_u^{(i)}-x_v^{(i)})^2\le(x_u-x_v)^2\cdot(1-\epsilon/7)^{-1}$.

Consider first the case $i=1$.
By Fact~\ref{fact:eff-res} and since $x\in S^G$, we have
$(x_u-x_v)^2\le R_G(f)\cdot x^\top Lx=R_G(f)$, 
and we indeed get
$(x^{(i)}_u-x^{(i)}_v)^2\le R_G(f)\cdot(1-\epsilon/7)^{-1}\le3\cdot2^{-1}\cdot R_G(f)$.

Now consider $i>1$,
and suppose towards contradiction that
$(x^{(i)}_u-x^{(i)}_v)^2>3\cdot2^{-i}\cdot R_G(f)$.
Notice that the second guarantee of Lemma~\ref{lem:rounding} also applies to $f$ for $\varphi_{i-1}$,
and thus
$(x_u^{(i-1)}-x_v^{(i-1)})^2\ge(x^{(i)}_u-x_v^{(i)})^2\cdot(1+\epsilon/7)^{-1}\cdot(1-\epsilon/7)\ge2^{-i+1}\cdot R_G(f)$.
This implies that $f\in F'_{i-1}$,
which contradicts the assumption $f\in F_i=F_i'\setminus\bigcup_{j=0}^{i-1}F_j'$.
\end{proof}

We will consider each group of edges $F_i$ separately,
and prove that $x^{\top}\wt L_{F_i}x$ concentrates around its expectation, $x^\top L_{F_i}x$.
More precisely, we will first prove concentration for every specific $(x^{(i)}, F_i)$,
and then extend the concentration to all possibilities simultaneously via union bound.
This is well-defined because each $F_i$ depends on $x^{(1)},\ldots,x^{(i)}$ but not directly on $x$. %(completely independent of the edge sampling for $\wt G$).

\paragraph{Edges in $F_0$.}
By definition, every edge $f\in F_0$ has $p(f)=1$,
and thus $x^\top\wt L_{F_0}x$ is completely deterministic and equal to $x^\top L_{F_0}x$.

\paragraph{Edges in $F_i$ for $i\in[I]$.}
Note that $F_i$ is designed so that, by Corollary~\ref{cor:rounding},
for every edge $f\in F_i$ we have
$\left|x^\top L_{\{f\}}x-x^{(i)\top}L_{\{f\}}x^{(i)}\right|\le\epsilon/7\cdot x^{(i)\top}L_{\{f\}}x^{(i)}$,
and since $\wt L_{\{f\}}$ is a multiple of $L_{\{f\}}$ similarly have $\left|x^\top\wt L_{\{f\}}x-x^{(i)\top}\wt L_{\{f\}}x^{(i)}\right|\le\epsilon/7\cdot x^{(i)\top}\wt L_{\{f\}}x^{(i)}$.
Informally, this allows us to prove concentration only for vectors $x^{(i)}$
instead of all $x$, and thus we next aim to bound the error
$$\left|x^{(i)\top}L_{F_i}x^{(i)}-x^{(i)\top}\wt L_{F_i}x^{(i)}\right|$$
for each $i\in[I]$ with high probability.
It will then remain to bound the error introduced on the remaining edges (the ones in $F_{I+1}$).

Fix $i\in[I]$ and notice that over all possible vectors $x\in S^G$,
there are only finitely many possible values for $(x^{(i)}, F_i)$.
Therefore, we can focus on a single value of $x^{(i)}$ and $F_i$,
and then use a union bound over all settings.

Let us therefore fix also $x^{(i)}$ and $F_i$.
We will use Chernoff bounds to prove that with high probability,
over the randomness of sampling edges to $\wt G$,
\begin{equation}\label{eq:medium-i-main}
  \left|x^{(i)\top}L_{F_i}x^{(i)}-x^{(i)\top}\wt L_{F_i}x^{(i)}\right|\le\frac{\epsilon}{7I}.
\end{equation}
% For this we can simply use Chernoff bound (Theorem~\ref{thm:chernoff}).
Indeed, note that
$$
  x^{(i)\top}\wt L_{F_i}x^{(i)}
  =
  \sum_{f=(u,v)\in F_i}\wt\w(f)\cdot(x^{(i)}_u-x^{(i)}_v)^2,
$$
where $\wt\w(f)$ are independent random variables with expectation $\mathbb E(\wt\w(f))=\w(f)$.
Therefore, we can apply the Chernoff bound from Theorem~\ref{thm:chernoff}
with $\set{Z_i}_i$ being $\wt\w(f)\cdot(x_u-x_v)^2$ for each $f=(u,v) \in F_i$,
and their sum being $Z=x^{(i)\top}\wt L_{F_i}x^{(i)}$
with $\mathbb E(Z)=x^{(i)\top}L_{F_i}x^{(i)}$.
We need to set $a$ as an upper bound on $\wt\w(f)\cdot(x^{(i)}_u-x^{(i)}_v)^2$.
Observe that $\wt\w(f)$ is maximal when $f$ is sampled,
in which case it equals $\w(f)/p(f)$ where $p(f)=\lambda\cdot\w(f)\cdot R_G(f)$,
since $f\not\in F_0$.
We thus get, using Claim~\ref{cl:power-bound},
$$
  \forall f=(u,v)\in F_i,
  \qquad
  \frac{\w(f)\cdot(x_u^{(i)}-x_v^{(i)})^2}{\lambda\cdot\w(f)\cdot R_G(f)}
  = \frac1\lambda\cdot\frac{(x_u^{(i)}-x_v^{(i)})^2}{R_G(f)}
  \le\frac{3\cdot2^{-i}}\lambda
  =: a.
$$
We let $\delta := \epsilon/(14I)$, we can bound %set $\mu:=2$ because
$$
  x^{(i)\top}L_{F_i}x^{(i)}
  \le \left(1+\frac{\epsilon}{7}\right)\cdot x^\top L_{F_i}x
  \le \left(1+\frac{\epsilon}{7}\right)\cdot x^\top Lx=1+\frac{\epsilon}{7}
  \le 2
  =: \mu.
$$
(This is true for an arbitrary preimage $x\in\varphi_i^{-1}(x^{(i)})$.)

Finally, Theorem~\ref{thm:chernoff} implies
\begin{align*}
  \mathbb P\left(\left|x^{(i)\top}L_{F_i}x^{(i)}-x^{(i)\top}\wt L_{F_i}x^{(i)}\right|\ge\frac{\epsilon}{7I}\right)
%  &= \mathbb P\left(\left|x^{(i)\top}\wt L_{F_i}x^{(i)}-\mathbb E\left(x^{(i)\top}\wt L_{F_i}x^{(i)}\right)\right|>\delta\mu\right)\\
   &\le2\exp\left(-\frac{\delta^2\mu}{3\cdot a}\right)\\
   &=2\exp\left(-\frac{\tfrac{\epsilon^2}{196I^2}\cdot2}{9\cdot 2^{-i}/ \lambda}\right)\\
   &\le2\exp\left(-\frac{\epsilon^2\cdot2^i\cdot\lambda}{10000\log^2(n)}\right)\\
   &=2\exp\left(-\frac{2000C\log n\cdot2^i}{\epsilon^2}\right),
\end{align*}
where the last step by setting $\lambda = 2\cdot10^7\cdot C\log^3 n/\epsilon^4$,
where $C>0$ is the absolute constant from Theorem~\ref{thm:rounding}.

We can now use a union bound to bound the probability that Equation~\eqref{eq:medium-i-main} holds simultaneously for all values of $(x^{(i)},F_i)$.
$F_i$ depends only on $F'_j$ for $j \in [i]$, which in turn depend on $x^{(j)}$ for the same values of $j$.
By the first guarantee of Lemma~\ref{lem:rounding}, the number of possible vectors $x^{(j)}$ is at most $\exp\left(800C\log n\cdot2^j/\epsilon^2\right)$, where $C>0$ is the absolute constant from Theorem~\ref{thm:rounding}.
Therefore, the number of possible pairs $(x^{(i)},F_i)$ is at most
$$
  \prod_{j=1}^i\exp\left(\frac{800C\log n\cdot2^j}{\epsilon^2}\right)
  =\exp\left(\sum_{j=1}^i\frac{800C\log n\cdot2^j}{\epsilon^2}\right)
  \le\exp\left(\frac{1600C\log n\cdot2^i}{\epsilon^2}\right).
$$
Finally, the probability that Equation~\eqref{eq:medium-i-main} \emph{does not} hold simultaneously for all pairs $(x^{(i)},F_i)$ is at most
$$
  \exp\left(\frac{1600C\log n\cdot2^i}{\epsilon^2}\right)\cdot2\exp\left(-\frac{2000C\log n\cdot2^i}{\epsilon^2}\right)
  = 2\exp\left(-\frac{400C\log n\cdot2^i}{\epsilon^2}\right)
  \le\frac1n.
$$

\paragraph{Edges in $F_{I+1}$.}
First we show that for any $x\in S^G$ and any edge $f=(u,v)\in F_{I+1}$ we have that $(x_u-x_v)^2\le\epsilon\cdot R_G(f)/(6n)$. Suppose for contradiction that this is not the case.
Then the second guarantee of Lemma~\ref{lem:rounding} applies and $(x^{(I)}_u-x^{(I)}_v)^2\ge(x_u-x_v)^2\cdot(1-\epsilon/7)\ge\epsilon\cdot R_G(e)/(6n)\cdot(1-\epsilon/7)\ge\epsilon\cdot R_G(e)/(7n)$.
Therefore $f\in F_I'$,
which contradicts the assumption $f\in F_{I+1}$. (Here we used that $I$ was defined to be $\log_2(7n/\epsilon)$.)

Next, we would like to bound
$\left|x^{\top}\wt L_{F_{I+1}}x-x^\top L_{F_{I+1}}x\right|$
by showing that both terms are small.
First,
	\begin{align*}
		x^\top L_{F_{I+1}}x&=\sum_{f=(u,v)\in F_{I+1}}\w(f)\cdot(x_u-x_v)^2
		&\le\sum_{f\in F_{I+1}}\w(f)\cdot\epsilon\cdot \frac{R_G(f)}{6n}
		&\le\frac{\epsilon}{6n}\cdot\sum_{f\in F}\w(f)\cdot R_G(f)
		&\le\frac{\epsilon}{6},
	\end{align*}
where the last inequality uses Fact~\ref{fact:eff-res-sum}.
Second, we start similarly,
	\begin{align*}
		x^\top\wt L_{F_{I+1}}x&=\sum_{f=(u,v)\in F_{I+1}}\wt\w(f)\cdot(x_u-x_v)^2
		&\le\sum_{f\in F_{I+1}}\wt\w(f)\cdot\epsilon\cdot \frac{R_G(f)}{6n}
		&\le\frac{\epsilon}{6n}\cdot\sum_{f\in F}\wt\w(f)\cdot R_G(f),
	\end{align*}
and ideally we would like to show that $\sum\wt\w(f)\cdot R_G(f)\le2n$.
This is not always true, but it is a random event, independent of the choice of $x$, and can be shown to hold with high probability using our Chernoff bound from Theorem~\ref{thm:chernoff}.
Indeed, $\wt\w(f)\cdot R_G(f)$ are independent random variables with maximum value when $f$ is sampled, in which case $\wt\w(f)=\w(f)/p(f)$,
and thus
\begin{align*}
  \wt\w(f)\cdot R_G(f)
  &\le \frac{\w(f)\cdot R_G(f)}{p(f)}
  = \frac{\w(f)\cdot R_G(f)}{\min(1,\lambda\cdot\w(f)\cdot R_G(f))}
  =\max\left(\w(f)\cdot R_G(f),1/\lambda\right)
  \le1
  =:a  ,
\end{align*}
where the last inequality uses Fact~\ref{fact:w-R}.
We apply Theorem~\ref{thm:chernoff} by setting $\delta:=1$ and $\mu:=n$
(which we may do by Fact~\ref{fact:eff-res-sum}), and obtain
	\begin{align*}
		\mathbb P\left(\sum_{f\in F}\wt\w(f)\cdot R_G(f)\ge2n\right)&\le2\exp\left(-\frac{n}{3}\right).
	\end{align*}
Therefore, with probability at least $1-O(1/n)$,
\begin{equation}\label{eq:I+1-main}
  \left|x^\top\wt L_{F_{I+1}}x-x^\top L_{F_{I+1}}x \right|\le \frac{\epsilon}{2} .
\end{equation}

\paragraph{Putting everything together.}

By the above derivations, Equation~\eqref{eq:medium-i-main} holds for all $i$ and all $(x^{(i)},F_i)$ simultaneously, as well as Equation~\eqref{eq:I+1-main} holds with probability at least $1-O(\log n/n)$.
Assuming henceforth that this high probability event occurs,
we shall deduce that Equation~\eqref{eq:ordinary-main} holds for all $x\in S^G$.
Indeed, by the triangle inequality and Equation~\eqref{eq:I+1-main},
\begin{align*}
  \left|x^\top\wt Lx-x^\top Lx\right|
  &\le\sum_{i=0}^{I+1}\left|x^\top\wt L_{F_i}x-x^\top L_{F_i}x\right|
%  &=\sum_{i=1}^I\left|x^\top\wt L_{F_i}x-x^\top L_{F_i}x\right|+\left|x^\top\wt L_{F_{I+1}}x-x^\top L_{F_{I+1}}x\right|\\
  \leq 0 + \sum_{i=1}^I\left|x^\top\wt L_{F_i}x-x^\top L_{F_i}x\right|+ \frac{\epsilon}{2}.
\end{align*}
Now for each $i\in[I]$,
we can approximate terms involving $x$ by $x^{(i)}$ and vice versa,
formalized by the aforementioned fact that for every $(u,v)\in F_i$ we have
$|(x_u-x_v)^2-(x^{(i)}_u-x^{(i)}_v)^2|\le\epsilon/7\cdot(x^{(i)}_u-x^{(i)}_v)^2$ (see the second condition of Lemma~\ref{lem:rounding} and the definition of $F_i'\subseteq F_i$),
and get
\begin{align*}
  \left|x^\top\wt L_{F_i}x-x^\top L_{F_i}x\right|
  &\le\left|x^\top\wt L_{F_i}x-x^{(i)\top}\wt L_{F_i}x^{(i)}\right|+\left|x^{(i)\top}\wt L_{F_i}x^{(i)}-x^{(i)\top}L_{F_i}x^{(i)}\right|+\left|x^{(i)\top}L_{F_i}x^{(i)}-x^\top L_{F_i}x\right|\\
  &\le \frac{\epsilon}{7}\cdot x^{(i)\top}\wt L_{F_i}x^{(i)}+\left|x^{(i)\top}\wt L_{F_i}x^{(i)}-x^{(i)\top}L_{F_i}x^{(i)}\right|+\frac{\epsilon}{7}\cdot x^{(i)\top}L_{F_i}x^{(i)}\\
\intertext{now we use the triangle inequality,  }
  &\le \frac{\epsilon}{7}\cdot x^{(i)\top}L_{F_i}x^{(i)} + \left(1+\frac{\epsilon}{7}\right)\cdot\left|x^{(i)\top}\wt L_{F_i}x^{(i)}-x^{(i)\top}L_{F_i}x^{(i)}\right|+\frac{\epsilon}{7}\cdot x^{(i)\top}L_{F_i}x^{(i)}\\
\intertext{and now we crucially use Equation~\eqref{eq:medium-i-main}, }
  &\le\left(1+\frac{\epsilon}{7}\right)\cdot\frac{\epsilon}{7I}+\frac{2\epsilon}{7}\cdot \left(1-\frac{\epsilon}{7}\right)^{-1}\cdot x^\top L_{F_i}x\\
  &\le\frac{\epsilon}{6I}+\frac{2\epsilon}{6}\cdot x^\top L_{F_i}x.
\end{align*}
Substituting this into our previous bound, we obtain
\begin{align*}
  \left|x^\top\wt Lx-x^\top\wt Lx\right|
  &\le\sum_{i=1}^I\left(\frac{\epsilon}{6I}+\frac{\epsilon}{3}\cdot x^\top L_{F_i}x\right) + \frac{\epsilon}{2}
  \le\frac{\epsilon}{6}+\frac{\epsilon}{3}\cdot x^\top Lx+\frac{\epsilon}{2}
  =\epsilon\cdot x^\top Lx,
\end{align*}
where the last equality uses $x^\top Lx = 1$.
This completes the proof of Theorem~\ref{thm:ordinary-sparsification}.
\end{proofof}

\subsection{Proof of Lemma~\ref{lem:rounding}}\label{subsec:lem:rounding}

To prove Lemma~\ref{lem:rounding}, we use the following Theorem:

\begin{theorem}[Theorem VI.1 of~\cite{AK17}]\label{thm:rounding}
	Let $a_1,\ldots,a_m \in \mathbb R^n$ be vectors of norm at most $1$ and let $\eta\in(0,1)$.
	Then, over all vectors $y \in \mathbb R^n$ of norm at most $1$,
        the number of possible values of the ``rounded vector'' 
	$$\left(\left\lfloor\frac{\langle a_1,y\rangle}{\eta}\right\rfloor,\left\lfloor\frac{\langle a_2,y\rangle}{\eta}\right\rfloor,\ldots,\left\lfloor\frac{\langle a_k,y\rangle}{\eta}\right\rfloor\right)$$
	is at most
	$\exp\left(\frac{C\log m}{\eta^2}\right)$
	for some absolute constant $C>0$.
\end{theorem}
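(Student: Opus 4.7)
The statement bounds the number of distinct ``rounded inner product vectors'' by $\exp(C\log m/\eta^{2})=m^{C/\eta^{2}}$. My plan is to reinterpret this as a covering-number bound and then use a Johnson--Lindenstrauss-style dimensionality reduction paired with a counting argument. Specifically, two vectors $y,y'\in B_2^n$ yield the same rounded vector whenever $\max_i|\langle a_i,y-y'\rangle|$ is strictly less than $\eta$ and they lie in the same $\eta$-cell of the hyperplane arrangement $\{\langle a_i,\cdot\rangle=k\eta\}_{i,k}$, so up to a constant-per-coordinate overcounting, the claimed bound is equivalent to bounding the $\eta$-covering number of $B_2^n$ under the pseudometric $d(y,y'):=\max_i|\langle a_i,y-y'\rangle|$ by $\exp(C\log m/\eta^{2})$.

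As a preliminary reduction, replace $y$ by its orthogonal projection onto $V:=\mathrm{span}(a_1,\ldots,a_m)$; this leaves every $\langle a_i,y\rangle$ unchanged, so we may assume $y\in V$, where $\dim V\le m$. The core of the proof is then to construct a sketch $s:B_2^n\cap V\to\Sigma$ with $|\Sigma|\le\exp(C\log m/\eta^{2})$ such that two vectors sharing a sketch produce rounded vectors agreeing coordinatewise (up to bounded additive error). The natural candidate is a random linear sketch: let $G\in\mathbb{R}^{k\times n}$ have i.i.d.\ Gaussian or Rademacher entries with $k=\Theta(\log m/\eta^{2})$, normalized so that $\mathbb{E}\langle Gu,Gv\rangle=\langle u,v\rangle$. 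By the distributional Johnson--Lindenstrauss property, for any fixed $y$ with $\|y\|\le 1$ we have $|\langle Ga_i,Gy\rangle-\langle a_i,y\rangle|\le\eta/10$ simultaneously for all $i\in[m]$ with probability at least $1-m^{-10}$ over $G$. Hence, for a typical realization of $G$, the discretized value of $Gy\in\mathbb{R}^k$ determines the full vector of inner products (and thus the rounded vector) to within a small additive error.

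The main technical obstacle is to bound the total number of distinct sketches by $\exp(O(\log m/\eta^{2}))$ without losing a logarithmic factor. The direct approach of covering the image $G(B_2^n)\subset\mathbb{R}^k$ (which lies in a Euclidean ball of radius $\approx 1$) by an $\ell_\infty$-grid of spacing $\eta$ yields $(O(1/\eta))^k=\exp\!\bigl(O(\log(1/\eta)\cdot\log m/\eta^{2})\bigr)$ sketches, overshooting the claimed bound by a factor of $\log(1/\eta)$ in the exponent. Alon and Klartag remove this factor via a hierarchical (chained) encoding: one records $Gy$ at a geometrically refining sequence of scales $\eta_\ell=2^{-\ell}\eta$, where the sketch at scale $\eta_\ell$ encodes only the \emph{residual} of $Gy$ after subtracting all coarser-scale sketches. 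Subgaussian concentration of these residuals guarantees that they lie in an $\ell_2$-ball of radius shrinking geometrically in $\ell$, so each refinement contributes only $O(k)$ additional bits. The total sketch length then forms a geometric sum dominated by its first term, giving $O(k)=O(\log m/\eta^{2})$ bits and hence the target count. Carrying out this chaining rigorously, and verifying that the combined sketch determines the rounded vector up to the unavoidable constant additive ambiguity near grid boundaries (which is absorbed into the constant $C$), is the crux of the proof.
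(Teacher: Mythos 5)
First, a point of reference: the paper does not prove this statement at all --- it is imported verbatim from Alon and Klartag~\cite{AK17} (Theorem~VI.1 there), with only a remark on how to translate parameters. So the relevant comparison is to the proof in~\cite{AK17}, and your writeup is best read as an attempted reconstruction of that argument. As such, it correctly identifies the right framework (project onto $\mathrm{span}(a_1,\dots,a_m)$, pass to a random sketch $Gy$ of dimension $k=\Theta(\log m/\eta^2)$, and count discretizations of the sketch), and it correctly diagnoses that the naive $\ell_\infty$-grid discretization of the sketched ball only yields $\exp\bigl(O(\log(1/\eta)\cdot\log m/\eta^2)\bigr)$, which is weaker than the claimed $\exp(C\log m/\eta^2)$.

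The problem is that the step which closes this gap --- the only step that makes the theorem true at the stated quantitative level --- is exactly the one you defer, writing that ``carrying out this chaining rigorously \dots is the crux of the proof.'' What you have therefore established is the weaker bound with the extra $\log(1/\eta)$ in the exponent, not the theorem. Moreover, the mechanism you gesture at (recording $Gy$ at geometrically refining scales $\eta_\ell=2^{-\ell}\eta$ and encoding residuals) is not obviously the right one: refining the scale only adds bits, and the actual saving in~\cite{AK17} comes from a different source, namely that after random projection the sketch has $\ell_2$-norm $O(\sqrt k)$, so one can round to a lattice whose spacing is calibrated so that the rounding error vector has small \emph{inner product} with each $Ga_i$ (exploiting cancellation across coordinates rather than bounding the error by Cauchy--Schwarz), and the number of admissible lattice points in the relevant ball is $\exp(O(k))$ rather than $(1/\eta)^{k}$. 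A second, unaddressed issue: your union bound gives the JL guarantee for each \emph{fixed} $y$ with probability $1-m^{-10}$, but the counting argument needs a single matrix $G$ for which the sketch is a valid encoding simultaneously for all $y$ in the (continuum) unit ball; a per-$y$ probabilistic statement does not yield this, and the obvious fix (a net over $y$) reintroduces the very dimension/aspect-ratio dependence you are trying to avoid. Both gaps are substantive, so the proposal as written does not constitute a proof of the statement.
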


\begin{remark}
	In fact, the original theorem (Theorem 6.1 in~\cite{AK17}) is stated with stronger requirements on $m$ and $\eta$, and a stronger consequence. However, we can easily get the weaker upper bound of $\exp(O(\log m/\eta^2))$ stated in Theorem~\ref{thm:rounding} of this paper, by setting the variables appropriately: $\varepsilon:=\eta$, $n:=\max(m,1/\eta^2)$, and $k:=n$, where the left hand side always represents their variable names and the right hand side ours.
\end{remark}

\begin{proofof}{Lemma~\ref{lem:rounding}}
We use the idea of resistive embedding introduced in~\cite{}.
Note that $L=L_G$ is a positive semidefinite matrix, and we denote by $L^{+/2}$  the square root of its Moore-Penrose pseudo-inverse.
For each (unordered) vertex pair $(u,v)$, let $b_{u,v}\in\mathbb R^V$ be the vector with all zero coordinates, except for the coordinates associated with $u$ and $v$, which are $1$ and $-1$ (ordered arbitrarily).
With each vertex pair $(u,v)$, we associate the vector
$$a_{u,v}=\frac{L^{+/2}b_{u,v}}{\|L^{+/2}b_{u,v}\|_2}.$$
Furthermore, we associate with each $x\in S^G$ the vector $y_x=L^{1/2}x$.

We can then apply Theorem~\ref{thm:rounding} to $\set{a_{u,v} \mid (u,v)\in\binom{V}{2} }$ and all possible $y_x$,
setting $\eta=\epsilon\cdot2^{-i/2}/20$.
Indeed, $a_{u,v}$ is normalized by definition,
and also $y_x$ is normalized because $x\in S^G$ and thus
$$\|y_x\|_2^2=x^\top L^{1/2}L^{1/2}x=x^\top Lx=1. $$
For each possible value of the rounded vector
$$\left(\left\lfloor\frac{\langle a_{u,v},y_x\rangle}{\eta}\right\rfloor\right)_{(u,v)\in\binom{V}{2}}$$
choose a representative $x\in S^G$, 
and let $\varphi_i$ map each $x\in S^G$ to its representative
(i.e., with the same rounded vector).
Then by Theorem~\ref{thm:rounding}, the image of $\varphi_i$ is of size
$|\varphi_i(S^G)| \leq \exp\left(800C\log n\cdot2^i/\epsilon^2\right)$,
as claimed.
Recall that we denote $\varphi_i(x)$ by $x^{(i)}$; then
$$\left(\left\lfloor\frac{\langle a_{u,v},y_x\rangle}{\eta}\right\rfloor\right)_{(u,v)\in\binom{V}{2}}=\left(\left\lfloor\frac{\langle a_{u,v},y_{x^{(i)}}\rangle}{\eta}\right\rfloor\right)_{(u,v)\in\binom{V}{2}}. $$
It follows that for all $f=(u,v)\in F$ and all $x\in S^G$,
\begin{equation}\label{eq:rounding-add-error}          
  \left|\langle a_{u,v},y_x\rangle-\langle a_{u,v},y_{x^{(i)}}\rangle\right|\le\eta .
\end{equation}
Furthermore, $b_{u,v}$ is perpendicular to the null-space of $L$
(which is spanned by the all-ones vector because $G$ is connected),
thus $L^{1/2}L^{+/2}b_{u,v}=b_{u,v}$ 
and
\begin{equation}  \label{eq:auvyx}
  \langle a_{u,v},y_x\rangle^2
  = \frac{\left(x^{\top}L^{1/2}L^{+/2}b_{u,v}\right)^2}{b_{u,v}^\top L^{+/2}L^{+/2}b_{u,v}}
  = \frac{\left(x^{\top}b_{u,v}\right)^2}{b_{u,v}^\top L^+b_{u,v}}
  = \frac{(x_u-x_v)^2}{R_G(u,v)}.
\end{equation}

To prove the second guarantee of Lemma~\ref{lem:rounding},
let $f=(u,v)\in F$ and $x\in S^G$ and consider first the case
$\left(x_u^{(i)}-x_v^{(i)}\right)^2\ge2^{-i}\cdot R_G(f)$,
which by~\eqref{eq:auvyx} is equivalent to $\langle a_{u,v},y_{x^{(i)}}\rangle^2\ge2^{-i}$. 
This means that the absolute error bound $\eta$ in Equation~\eqref{eq:rounding-add-error} implies a relative error bound, namely, 
%of about $\eta\cdot2^{i/2}=\epsilon/20$. More formally,
$$
  \left|\langle a_{u,v},y_x\rangle - \langle a_{u,v},y_{x^{(i)}}\rangle\right|
  \le \eta
  = \frac{\epsilon\cdot2^{-i/2}}{20}
  \le \frac{\epsilon}{20}\cdot \left|\langle a_{u,v},y_{x^{(i)}}\rangle\right| .
$$
The other case $\left(x_u-x_v\right)^2 \ge 2^{-i}\cdot R_G(f)$
is similar up to constants;
by~\eqref{eq:auvyx}, this case is equivalent to $\langle a_{u,v},y_x\rangle^2\ge2^{-i}$,
and thus 
\begin{align*}
  \left|\langle a_{u,v},y_x\rangle - \langle a_{u,v},y_{x^{(i)}}\rangle\right|
  &\le \eta
  = \frac{\epsilon\cdot2^{-i/2}}{20}
  \le \frac{\epsilon}{20}\cdot \left|\langle a_{u,v},y_x\rangle\right| ,
  \\
\intertext{which implies}
  \left|\langle a_{u,v},y_x\rangle - \langle a_{u,v},y_{x^{(i)}}\rangle\right|
  &\le \frac{\epsilon}{20}\cdot \left(1-\frac{\epsilon}{20}\right)^{-1} \left|\langle a_{u,v},y_x\rangle\right| 
  \le \frac{\epsilon}{16}\cdot \left|\langle a_{u,v},y_{x^{(i)}}\rangle\right| .
\end{align*}

Now in both cases, 
\begin{align*}
  \left|\langle a_{u,v},y_x\rangle^2 - \langle a_{u,v},y_{x^{(i)}}\rangle^2\right|
  &=\left|\langle a_{u,v},y_x\rangle-\langle a_{u,v},y_{x^{(i)}}\rangle\right|\cdot\left|\langle a_{u,v},y_x\rangle+\langle a_{u,v},y_{x^{(i)}}\rangle\right|
  \\
  &\le\frac{\epsilon}{16} \cdot |\langle a_{u,v},y_{x^{(i)}}\rangle| \cdot\left(2+\frac{\epsilon}{16}\right) \cdot |\langle a_{u,v},y_{x^{(i)}}\rangle| \\
  &\le\frac{\epsilon}{7}\cdot\langle a_{u,v},y_{x^{(i)}}\rangle^2. 
%  \left|\frac{\left(x_u-x_v\right)^2}{R(e)}-\frac{\left(x_u^{(i)}-x_v^{(i)}\right)^2}{R(e)}\right|
%  &\le \frac{\epsilon}{7}\cdot\frac{\left(x_u^{(i)}-x_v^{(i)}\right)^2}{R(e)},
%  \\
\end{align*}
Using~\eqref{eq:auvyx} and scaling by $R_G(u,v)$, we can write this as
$(x_u-x_v)^2\in (1 \pm \epsilon/7)\cdot (x^{(i)}_u-x_v^{(i)})^2$,
which completes the proof of Lemma~\ref{lem:rounding}. 
\end{proofof}

%%% Local Variables:
%%% mode: latex
%%% TeX-master: "000-main"
%%% End:

%!TEX root=./000-main.tex

\section{$\gamma$-Balanced Weight Assignments}\label{sec:balanced}

Our strategy for generalizing the techniques of Section~\ref{sec:warm-up} to hypergraphs is similar to that of~\cite{Chen2020}.
Intuitively, we wish to replace each hyperedge of the input hypergraph with a weighted clique in such a way that the ``importance'' of each edge in the same clique is roughly the same.
However, our measure of importance is effective resistance, whereas in~\cite{Chen2020} it is the \emph{strength} of the edge (see~\cite{Benczur2015}), which is a measure particularly useful to cut sparsification.
Specifically, we use the following definition.
% Therefore, our definitions differ, which will be quite consequential later on:
% We define \emph{$\gamma$-balanced weight assignments} of hypergraphs
\begin{definition}\label{def:balanced}
	Given a hypergraph $H=(V,E,w)$, a weight assignment of $H$ is a weighted (ordinary) graph $G=(V,F,\w)$ such that
	\begin{itemize}
	\item $F$ is the multiset $\bigcup_{e \in E} F_e$, where $F_e$ is a set of edges forming a clique on the support of $e$\;
	\item $\sum_{f\in F_e}\w(f)=w(e)$.
	\end{itemize}
	% can be partitioned into $F_e$, for each $e\in E$, such that:
	% \begin{itemize}
	% 	\item $F_e$ is a complete graph on the support of $e$,
	% 	\item
	% \end{itemize}
	Note that this definition allows for parallel edges in $G$.

	Moreover, if $G$ satisfies
	$$\gamma\cdot\min_{g\in F_e:\ \w(g)>0}R_G(g)\ge\max_{f\in F_e}R_G(f)$$
	for $\gamma > 1$, then we call it $\gamma$-balanced.
\end{definition}
	% We sometimes refer to graphs $G$ that satisfy only the first two conditions as simply \emph{weight assignments of $H$}.
% \xxx[MK]{edge set $F$ can be partitioned?}
% \xxx[MK]{do you really mean `partitioned' into $F_e$'s, i.e. the $F_e$'s don't overlap?}
% \jnote{Different $F_e$'s can contain parallel edges but we consider these to be distinct. Not sure what the best way to explain this is.}
% \jnote{Add intuition on why their technique doesn't work for us.}
The goal of this section is to show the existence of constant-balanced weight assignments for all weighted hypergraphs, and to define an efficient algorithm that outputs such a weight assignment.

\subsection{Spanning Tree Potential}

In order to show the existence (and give an efficient construction) of $\gamma$-balanced weight assignments we introduce the concept of spanning tree potentials for weighted ordinary graphs.
\begin{definition}[Spanning tree potential, ST-potential for short]
	For a connected weighted graph $G=(V,F,\w)$ let $\mathbb{T}(G)$ be the set of all spanning trees of $G$.
	Then we define the spanning tree potential of $G$ as
	\begin{equation}\label{eq:Psi-formula}
		\Psi(G)=\log\left[\sum_{T\in\mathbb{T}(G)}\prod_{f\in T}\w(f)\right].
	\end{equation}
	% for some function $C$ of $n$.
\end{definition}
\begin{remark}
  Note that the value of ST-potential stays the same after replacing parallel edges $f_1$ and $f_2$ with a single edge $f$ of weight $\w(f) := \w(f_1) + \w(f_2)$.
\end{remark}

We formalize the concept of edge updates to graphs and see how those updates affect ST-potential.
\begin{definition}
	If $G=(V,F,\w)$ is a weighted graph, $\lambda\in\mathbb R$, and $f\in\binom{V}{2}$, then $G+\lambda\cdot f$ is the weighted graph $(V,F',\w')$,  where the weight of $f$ is increased by $\lambda$. Formally,
	\begin{itemize}
		\item $F'=F$, $\w'(f)=\w(f)+\lambda$, and $\w'(g)=\w(g)$ for all $g\neq f$, if $f\in F$,
		\item or $F'=F+\{f\}$, $\w'(f)=\lambda$, and $\w'(g)=\w(g)$ for all $g\neq f$, if $f\not\in F$.
	\end{itemize}
	Note that the definition applies to $\lambda<0$ as well, but in this case $f$ must be present in the graph with weight at least $|\lambda|$ in order for $G+\lambda\cdot f$ to be valid.
\end{definition}

\begin{lemma}\label{lem:Psi-update}
  For any weighted graph $G$, $\lambda\in\mathbb R$ and $f\in\binom{V}{2}$ such that $G+\lambda\cdot f$ is well defined, we have
		\begin{equation}\label{eq:Psi-main}\Psi(G+\lambda\cdot f)=\Psi(G)+\log\left(\lambda R_G(f) + 1\right),\end{equation}
\end{lemma}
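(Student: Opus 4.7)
The plan is to work with $\tau(G):=\exp(\Psi(G))=\sum_{T\in\mathbb T(G)}\prod_{g\in T}\w(g)$, split this sum according to whether a spanning tree uses $f=(u,v)$, exploit the fact that the split is linear in $\w(f)$, and then invoke the classical link between weighted spanning-tree counts and effective resistance to identify the resulting ratio with $R_G(f)$.

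Concretely, define $S$ to be the weighted count of spanning two-forests of $G$ whose two components separate $u$ from $v$, i.e., $S:=\sum_{\mathcal F}\prod_{g\in\mathcal F}\w(g)$ over such forests $\mathcal F$. A spanning tree of $G+\lambda\cdot f$ either contains $f$ or not: if it contains $f$, deleting $f$ yields a spanning two-forest separating $u$ and $v$, and this deletion is a weight-preserving bijection onto the forests counted by $S$; otherwise, it is a spanning tree of $G$ avoiding $f$. Writing $B:=\sum_{T\in\mathbb T(G):\,f\notin T}\prod_{g\in T}\w(g)$ and using the convention $\w(f)=0$ when $f\notin F$, we therefore have both $\tau(G)=\w(f)\cdot S+B$ and $\tau(G+\lambda\cdot f)=(\w(f)+\lambda)\cdot S+B$. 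Subtracting and factoring,
$$\tau(G+\lambda\cdot f)=\tau(G)+\lambda\cdot S=\tau(G)\cdot\bigl(1+\lambda\cdot S/\tau(G)\bigr),$$
so taking $\log$ reduces Equation~\eqref{eq:Psi-main} to the single identity $S/\tau(G)=R_G(f)$.

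This last step is the only nontrivial input and is the main obstacle, but it is classical. By the matrix-tree theorem, $\tau(G)$ equals any diagonal cofactor of the weighted Laplacian $L_G$; the quantity $S$ admits an analogous Kirchhoff-type all-minors expansion which, after expanding the cofactors of $L_G^+$, matches $(\chi_u-\chi_v)^\top L_G^+(\chi_u-\chi_v)\cdot\tau(G)=R_G(u,v)\cdot\tau(G)$. Equivalently, a $\w$-weighted random spanning tree of $G$ (sampled with probability proportional to $\prod_{g\in T}\w(g)$) contains a fixed edge $f$ with probability $\w(f)\cdot R_G(f)$; dividing this probability statement by $\w(f)$ gives exactly $S/\tau(G)=R_G(f)$ (cf.~\cite{lovasz1993random}). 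Substituting $S=R_G(f)\cdot\tau(G)$ into the displayed equation above and taking logarithms proves Equation~\eqref{eq:Psi-main}; apart from importing this spanning-forest/effective-resistance identity, everything is a two-line linearity computation.
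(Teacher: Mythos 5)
Your proof is correct and takes essentially the same approach as the paper: both decompose the weighted spanning-tree sum into trees that contain $f$ versus those that avoid it, exploit linearity in $\w(f)$, and reduce the claim to the classical fact that a $\w$-weighted random spanning tree contains $f$ with probability $\w(f)\cdot R_G(f)$ (cited to \cite{lovasz1993random} in both cases). A minor point in your favor: by only dividing by $\w(f)$ at the very end (and offering the all-minors matrix-tree argument as a fallback), you handle the $f\notin F$ case a bit more cleanly than the paper's manipulation, which divides by $\w(f)$ throughout.
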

\begin{proof}
	% To verify this, one needs only to verify that Equation~\ref{eq:Psi-formula} remains true after an update of the form $G\to G+\lambda\cdot e$. Suppose therefore that $G$ satisfies Equation~\ref{eq:Psi-formula}; we will show the same for $G'=(V,F',\w')=G+\lambda\cdot e$.
	% We may assume without loss of generality that $\lambda<0$, and we write $\lambda = -\mu$ for more intuitive notation.
	We use the famous result that if we sample a spanning tree $\mathcal T$ randomly from $\mathbb{T}(G)$ such that
	$$\mathbb P(\mathcal T = T)\propto\prod_{g\in T}\w(g),$$
	then the marginal probability $\mathbb P(f\in T)$ is $\w(f)\cdot R(f)$ for all $f\in F$ (see, e.g.,~\cite{lovasz1993random}).
	Let $\mathcal T$ be a random variable drawn from such a distribution. By the definition of the distribution of $\mathcal T$ we have that
	\begin{equation}\label{eq:mathcal-t}
	\mathbb P(f\in\mathcal T)=\sum_{T \in \mathbb{T}(G):T\ni f}\mathbb P(\mathcal T = T)=\frac{\sum_{T \in \mathbb{T}(G):T\ni f}\prod_{g\in T}\w(g)}{\sum_{T\in\mathbb{T}(G)}\prod_{g\in T}\w(g)}
	\end{equation}
	Therefore, we have for the weight function $\w'$ of $G + \lambda \cdot f$,
	\begin{align*}
		\Psi(G + \lambda \cdot f)
		& = \log\left[\sum_{T\in\mathbb{T}(G')}\prod_{g\in T}\w'(g)\right] \\
		&=\log\left[\sum_{T\in\mathbb{T}(G)}\prod_{g\in T}\w(g)\cdot \left(1+\frac{\lambda}{\w(f)}\right)^{\mathbbm1(g=f)}\right] \\
		&=\log\left[\sum_{T\in\mathbb T}\left(1+\frac{\lambda}{\w(f)}\right)^{\mathbbm1(f\in T)}\prod_{g\in T}\w(g)\right] \\
		&=\log\left[\sum_{T\in\mathbb{T}(G)}\prod_{g\in T}\w(g)+\frac{\lambda}{\w(f)}\cdot\sum_{T \in \mathbb{T}(G):T\ni f}\prod_{g\in T}\w(g)\right]. \\
		\intertext{We can transform the second term by Equation~\eqref{eq:mathcal-t} to continue:}
		&=\log\left[\sum_{T\in\mathbb{T}(G)}\prod_{g\in T}\w(g)+\frac{\lambda}{\w(f)}\cdot\mathbb P(f\in\mathcal T)\cdot\sum_{T\in\mathbb{T}(G)}\prod_{g\in T}\w(g)\right] \\
		&=\log\left[\sum_{T\in\mathbb{T}(G)}\prod_{g\in T}\w(g)\right] + \log\left(\frac{\lambda}{\w(f)}\cdot\mathbb P(f\in\mathcal T) + 1\right) \\
		&=\Psi(G) + \log(\lambda R_G(f) + 1)
		% &=\Psi(G) \log(1+\mu\cdot R(e)).
		% &=\Psi(G - \mu\cdot e),
	\end{align*}
	as claimed. Note that the above calculation is legitimate for positive $\lambda$ as well as negative.
	% Finally, we can simply set $C(n)=-\left|\mathbb{T}(K_n)\right|$, based on Definition~\ref{def:Psi}.
\end{proof}

\subsection{Existence of $\gamma$-Balanced Weight Assignments}

We will now use ST-potential to construct a $\gamma$-balanced weight assignment of an arbitrary hypergraph. We can analyze a simple greedy algorithm, which identifies edge pairs contradicting the $\gamma$-balancedness condition in Definition~\ref{def:balanced}, and simply shifts weight from the one with smaller effective resistance to the one with the larger effective resistance (see Algorithm~\ref{alg:greedy-balanced}).
One can show that such an step can be designed to only increase the ST-potential of a weight assignment, and thus the algorithm eventually terminates, returning a $\gamma$-balanced weight assignment.

\begin{algorithm}[H]
	\begin{algorithmic}[1]
		\caption{Algorithm for constructing a $\gamma$-balanced weight assignment.}\label{alg:greedy-balanced}
		\Procedure{GreedyBalancing}{$H=(V,E,w),\gamma$}
		\State For all $e\in E$ and for all $f\in F_e$, initialize $\w(f)$ to $w(e)/\binom{|e|}{2}$
		\State $G\gets(V,\bigcup_{e \in E} F_e,\w)$
		\While{$G$ is not a $\gamma$-balanced weight assignment of $H$}
		\State Select $e\in E$, and $f,g\in F_e$, such that $R_G(f)>\gamma\cdot R_G(g)$ and $\w(g)>0$
		\State $\lambda\gets\min\left(\w(g),(\gamma - 1)/(2\gamma\cdot R_G(g))\right)$\label{line:select-problem-edge}
		\State $\w(f)\gets \w(f)+\lambda$
		\State $\w(g)\gets \w(g)-\lambda$
		\EndWhile
		\State \Return $G$
		\EndProcedure
	\end{algorithmic}
\end{algorithm}

The following tells how much the effective resistance of an edge changes by updating the weight of another edge.
Although the proof is simple and this result is already known, we include the proof for completeness.
\begin{lemma}\label{lem:update-formula}
	If $G=(V,F,\w)$ is a weighted graph, let $\lambda\in\mathbb R$, and $f\in\binom{V}{2}$, then for any $g\in\binom{V}{2}$
	$$R_{G+\lambda\cdot f}(g)=R_G(g)-\frac{\lambda\cdot\left(b_g^\top L_G^+b_f\right)^2}{1+\lambda\cdot R_G(f)}.$$
\end{lemma}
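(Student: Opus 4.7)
The plan is to reduce the claim to a Sherman--Morrison type update on the (pseudo-)inverse Laplacian. First, observe that the Laplacian transforms additively under an edge weight update: if $f = (u,v)$ and $b_f = \chi_u - \chi_v$, then
\[
  L_{G + \lambda\cdot f} \;=\; L_G + \lambda\, b_f b_f^\top.
\]
Since we are assuming throughout the paper that all graphs are connected, the kernel of $L_G$ is exactly the span of the all-ones vector, and both $b_f$ and $b_g$ lie in its orthogonal complement (i.e., in $\range{L_G}$). The same subspace is preserved after adding $\lambda b_f b_f^\top$, so both Laplacians share the same range and kernel, which means the usual rank-one update formula for inverses extends cleanly to the Moore--Penrose pseudoinverses on this common subspace.

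Next, I would invoke the Sherman--Morrison identity in this form:
\[
  (L_G + \lambda\, b_f b_f^\top)^+ \;=\; L_G^+ \;-\; \frac{\lambda\, L_G^+ b_f b_f^\top L_G^+}{1 + \lambda\, b_f^\top L_G^+ b_f}.
\]
Using $b_f^\top L_G^+ b_f = R_G(f)$ (by Definition~\ref{def:eff-res}), the denominator becomes $1 + \lambda R_G(f)$. Sandwiching both sides by $b_g^\top (\cdot) b_g$ and again applying Definition~\ref{def:eff-res} on the left-hand side and the first term on the right, I get
\[
  R_{G+\lambda\cdot f}(g) \;=\; b_g^\top L_{G+\lambda\cdot f}^+ b_g \;=\; R_G(g) \;-\; \frac{\lambda\,(b_g^\top L_G^+ b_f)^2}{1 + \lambda\, R_G(f)},
\]
which is exactly the claimed formula.

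The only real subtlety, and what I would consider the main obstacle, is justifying the Sherman--Morrison step for pseudoinverses, since the standard formula is stated for invertible matrices. The cleanest way to handle this is to restrict attention to the $(n-1)$-dimensional subspace $\range{L_G}$, on which $L_G$ is honestly invertible and on which the rank-one perturbation $\lambda b_f b_f^\top$ also acts (since $b_f \in \range{L_G}$). On this subspace the classical Sherman--Morrison identity applies verbatim, and because $b_g$ also lies in this subspace, quadratic forms $b_g^\top (\cdot) b_g$ agree with the pseudoinverse quadratic forms taken in the full space. Everything else is routine algebraic manipulation; no nontrivial estimates are required, and the condition on $\lambda$ for $G + \lambda\cdot f$ to remain well-defined guarantees that $1 + \lambda R_G(f) > 0$ by Fact~\ref{fact:w-R}, so no degeneracy arises in the denominator.
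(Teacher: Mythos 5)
Your proposal is correct and follows essentially the same route as the paper: both write $L_{G+\lambda\cdot f}=L_G+\lambda\,b_f b_f^\top$, apply a Sherman--Morrison--type identity for the Moore--Penrose pseudoinverse, and then sandwich with $b_g$. The only difference is that the paper simply cites a reference for the pseudoinverse version of Sherman--Morrison, whereas you supply a self-contained justification by restricting to $\range{L_G}$; that is a fine substitute and does not change the argument.
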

% We include the proof for completeness.
\begin{proof}
	Note that
	$$L_{G+\lambda\cdot f}=L_G+\lambda b_f b_f^\top.$$
	Therefore, by the Sherman-Morrison formula for Moore-Penrose pseudoinverse (see for example~\cite{Meyer73}), we can expand the formula for the effective resistance of $g$:
	\begin{align*}
		R_{G+\lambda\cdot f}(g)&=b_g^\top L_{G+\lambda\cdot f}^+b_g
		=b_g^\top{\left(L_G+\lambda b_f b_f^\top\right)}^+b_g
		=b_g^\top\left(L_G^+-\frac{L_G^+b_f\lambda b_f^\top L_G^+}{1+\lambda b_f^\top L_G^+b_f}\right)b_g\\
		&=b_g^\top L_G^+b_g-\frac{\lambda b_g^\top L_G^+b_f b_f^\top L_G^+b_g}{1+\lambda b_f^\top L_G^+b_f}
		=R_G(g)-\frac{\lambda\cdot\left(b_g^\top L_G^+b_f\right)^2}{1+\lambda\cdot R_G(f)}.
		\qedhere
	\end{align*}
\end{proof}

\begin{lemma}\label{lem:greedy-step}
	Let $G=(V,F,\w)$ be a weighted graph and let $\gamma>1$. Let $f,g$ be two edges in $F$ such that $R_G(f)>\gamma\cdot R_G(g)$. Then for any $\lambda\le\w(g)$, shifting $\lambda$ weight from $g$ to $f$ results in an increase of at least
	$$\log\left(1+\lambda\gamma\cdot R_G(g)-\lambda\cdot R_G(g)-\lambda^2\gamma\cdot R_G(g)^2\right)$$
	in the ST-potential of $G$.
\end{lemma}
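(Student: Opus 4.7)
The plan is to decompose the single ``shift'' into two elementary weight updates and apply Lemma~\ref{lem:Psi-update} to each one, using Lemma~\ref{lem:update-formula} to control how the intermediate effective resistance changes. Concretely, I write
\[
  G + \lambda \cdot f - \lambda \cdot g \;=\; (G + \lambda \cdot f) - \lambda \cdot g,
\]
so that the ST-potential change telescopes into two terms, each of the closed form given by Lemma~\ref{lem:Psi-update}. Performing the $+\lambda\cdot f$ update first is slightly better than the other order, because adding weight can only preserve connectivity, avoiding degenerate cases where $g$ is a bridge.

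Applying Lemma~\ref{lem:Psi-update} to the first step gives an increase of exactly $\log(1+\lambda R_G(f))$. Applying it to the second step gives a contribution of $\log(1-\lambda R_{G+\lambda\cdot f}(g))$, so the total change in $\Psi$ equals
\[
  \log(1+\lambda R_G(f)) + \log\!\bigl(1-\lambda R_{G+\lambda\cdot f}(g)\bigr).
\]
The key observation is that the second effective resistance has only decreased: by Lemma~\ref{lem:update-formula} applied to the update $+\lambda\cdot f$,
\[
  R_{G+\lambda\cdot f}(g) \;=\; R_G(g) \;-\; \frac{\lambda\cdot(b_g^\top L_G^+ b_f)^2}{1+\lambda\cdot R_G(f)} \;\le\; R_G(g),
\]
since the subtracted quantity is nonnegative (recall $\lambda\ge 0$). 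Hence $1-\lambda R_{G+\lambda\cdot f}(g) \ge 1 - \lambda R_G(g)$, and taking logs the second term is bounded below by $\log(1-\lambda R_G(g))$.

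Combining these two bounds, the total increase is at least
\[
  \log\bigl[(1+\lambda R_G(f))(1-\lambda R_G(g))\bigr].
\]
Now using the hypothesis $R_G(f) > \gamma R_G(g)$ (and monotonicity of $\log$), this is at least
\[
  \log\bigl[(1+\lambda\gamma R_G(g))(1-\lambda R_G(g))\bigr]
  \;=\; \log\!\bigl(1+\lambda\gamma R_G(g)-\lambda R_G(g)-\lambda^2\gamma R_G(g)^2\bigr),
\]
which is exactly the claimed lower bound.

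The only subtlety I foresee is a boundary issue: if $g$ happens to be a bridge in $G+\lambda\cdot f$ and $\lambda = \w(g)$, then $1-\lambda R_{G+\lambda\cdot f}(g)$ can vanish and Lemma~\ref{lem:Psi-update} must be read in the limiting sense (with $\Psi$ possibly $-\infty$). In the intended application (Algorithm~\ref{alg:greedy-balanced}) the step size is capped by $(\gamma-1)/(2\gamma R_G(g))$, which together with Fact~\ref{fact:w-R} rules out this degeneracy, so no additional care is needed for the usage of the lemma. Otherwise the calculation is entirely routine.
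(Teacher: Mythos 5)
Your proof is correct and takes essentially the same route as the paper: decompose the shift into the two updates $+\lambda\cdot f$ then $-\lambda\cdot g$, apply Lemma~\ref{lem:Psi-update} twice, invoke Lemma~\ref{lem:update-formula} to evaluate $R_{G+\lambda\cdot f}(g)$, drop the nonnegative cross term $\lambda^2(b_f^\top L_G^+ b_g)^2$, and substitute $R_G(f)>\gamma R_G(g)$ (which is safe since $\lambda\le\w(g)$ and Fact~\ref{fact:w-R} give $1-\lambda R_G(g)\ge0$). The only cosmetic difference is that you discard the cross term \emph{before} multiplying out the two factors, whereas the paper carries $\lambda^2 t_{fg}^2$ through the expansion and drops it only in the final inequality.
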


\begin{proof}

We simply apply the update formula for ST-potential (Lemma~\ref{lem:Psi-update}) twice, along with Lemma~\ref{lem:update-formula}. For simplicity, we use $t_{fg}$ to denote $b_f^\top L_G^+b_g$.
Then the increase in ST-potential is
	\begin{align*}&\log\left(\lambda\cdot R_G(f)+1\right)+\log\left(-\lambda\cdot R_{G+\lambda\cdot f}(g)+1\right)\\
=&\log\left(\lambda\cdot R_G(f)+1\right)+\log\left(-\lambda\cdot\left(R_G(g)-\frac{\lambda t_{fg}^2}{1+\lambda R_G(f)}\right)+1\right)\\
=&\log\left[\left(\lambda\cdot R_G(f)+1\right)\cdot\left(-\lambda\cdot\left(R_G(g)-\frac{\lambda t_{fg}^2}{1+\lambda R_G(f)}\right) + 1\right)\right]\\
=&\log\left(1+\lambda\cdot R_G(f)-\lambda\cdot R_G(g)-\lambda^2\cdot R_G(f)R_G(g)+\lambda^2t_{fg}^2\right)\\
\ge&\log\left(1+\lambda\gamma\cdot R_G(g)-\lambda\cdot R_G(g)-\lambda^2\gamma\cdot R_G(g)^2\right),
\end{align*}
as claimed.
\end{proof}

\begin{theorem}\label{thm:greedy-balanced}
	For $\gamma>1$, Algorithm~\ref{alg:greedy-balanced} terminates and returns a $\gamma$-balanced weight assignment of $H$.
\end{theorem}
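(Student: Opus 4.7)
The plan is to prove correctness upon termination (immediate from the loop condition) and then finite termination, via a potential-function argument using the spanning-tree potential $\Psi(G)$. For correctness, once the while loop exits the condition ``$G$ is not a $\gamma$-balanced weight assignment'' has failed, so $G$ satisfies Definition~\ref{def:balanced}. The bulk of the work is therefore to establish that the loop exits after finitely many iterations.

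First I would check the needed invariants. The total weight $W:=\sum_{e}w(e)=\sum_{f}\w(f)$ is preserved, since every update shifts weight within a single clique $F_e$. Using Cayley-type bounds for weighted spanning trees, this yields $\Psi(G)\le(n-2)\log n+(n-1)\log W=:\Psi_{\max}$. Lemma~\ref{lem:greedy-step} then gives a per-iteration gain of at least $\log\!\bigl(1+\lambda(\gamma-1)R_G(g)-\lambda^2\gamma R_G(g)^2\bigr)$, and for the algorithm's $\lambda=\min(\w(g),(\gamma-1)/(2\gamma R_G(g)))$ this is strictly positive. I would also verify that $G$ stays connected throughout: by Fact~\ref{fact:w-R}, any bridge $g$ satisfies $\w(g)R_G(g)=1$, whereas Case~2 (defined below) requires $\w(g)R_G(g)\le(\gamma-1)/(2\gamma)<1$, so bridges are never drained to zero and $\Psi(G)$ stays finite.

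The iteration count will be handled by a case analysis on which term realizes the min in $\lambda$. In \emph{Case~1}, with $\lambda=(\gamma-1)/(2\gamma R_G(g))$, direct substitution into the lower bound of Lemma~\ref{lem:greedy-step} yields $\Delta\Psi\ge\log(1+(\gamma-1)^2/(4\gamma))=:c(\gamma)>0$, so there can be at most $(\Psi_{\max}-\Psi_0)/c(\gamma)$ Case-1 iterations. In \emph{Case~2}, with $\lambda=\w(g)$, the edge $g$ exits the support of $\w$. I would further split Case~2 into a shrink sub-case ($\w(f)>0$ before the update, so $|\{f':\w(f')>0\}|$ strictly drops and hence this sub-case triggers at most $|F|$ times overall) and a swap sub-case ($\w(f)=0$ before the update, so $g$ is traded for $f$ and the support cardinality is preserved).

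The main obstacle is the swap sub-case, where neither a constant lower bound on $\Delta\Psi$ nor a shrinking support is available. The plan is to close this gap by combining (i) the finiteness of the set of possible supports $A\subseteq F$, (ii) the strict monotonicity of $\Psi$, which forbids revisiting any weight configuration, and (iii) the structural observation that a maximal run of swap steps cannot reshuffle the support indefinitely without eventually triggering a Case-1 step (yielding constant progress) or a shrink step. Making this combinatorial-plus-monotonicity argument airtight is the subtle piece; once done, it yields a finite overall iteration bound and completes the proof.
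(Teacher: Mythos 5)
Your overall strategy matches the paper's closely: bound $\Psi$ from above, establish per-step progress via Lemma~\ref{lem:greedy-step}, split iterations into the constant-gain case ($\lambda = (\gamma-1)/(2\gamma R_G(g))$) and the edge-emptying case ($\lambda = \w(g)$), and further split the latter by whether $\w(f)$ was positive before the update. The connectivity argument via Fact~\ref{fact:w-R} is the same, and your upper bound on $\Psi$ is numerically identical to the paper's $\Psi(G_\infty)$. Where the proposal stops short is precisely the spot you flag as ``the subtle piece'': bounding the swap sub-case. The ingredients you list do not quite close it. Finiteness of the set of possible supports $A\subseteq F$ is insufficient on its own, since weights live in a continuum and a given support could a priori be revisited with different real-valued weights infinitely often; and your item (iii), that a maximal run of swaps ``cannot reshuffle the support indefinitely,'' is essentially a restatement of the thing to be proved rather than an argument for it.

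The missing observation that the paper uses is that a swap step leaves the \emph{multiset of edge weights} invariant: the step sets $\w(f)\leftarrow\w(g)$ and $\w(g)\leftarrow 0$, so the values $\{\w(f')\}_{f'\in F}$ are merely permuted. Hence once the algorithm is in the regime of only swap steps, the weight assignment is always a permutation of one fixed multiset of values onto $F$, of which there are only finitely many. Combined with the strict monotonicity of $\Psi$ (so no configuration can recur), this gives termination. You had the right skeleton and all the supporting lemmas, but the discretization that finishes the proof is this specific invariant, which the proposal does not identify.
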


\begin{proof}
	It is clear by the condition of the while-loop that if Algorithm~\ref{alg:greedy-balanced} terminates, it returns a $\gamma$-balanced weight assignment.
	Also, the condition $\sum_{f\in F_e}\w(f)=w(e)$ is never violated.
	Therefore, in Line~\ref{line:select-problem-edge}, there must indeed always be some $e\in E$ and some $f,g\in F_e$ such that $R_G(f)>\gamma\cdot R_G(g)$ and $\w(g)>0$, otherwise $G$ would already be $\gamma$-balanced.
	It remains to prove that Algorithm~\ref{alg:greedy-balanced} always terminates.

	Let us examine the evolution of $\Psi(G)$ throughout the algorithm. First, note that $G$ never becomes disconnected, and hence $\Psi(G)$ always remains defined. Indeed, in order for $G$ to become disconnected, we would have to set $\lambda$ to $\w(g)$ for a \emph{bridge} $g$. However, if $g$ is a bridge, $R_G(g)=1/\w(g)$ by Fact~\ref{fact:w-R}, and $\lambda$ is set instead to $(\gamma-1)\w(g)/(2\gamma)<\w(g)$.
	
	In each iteration of the while-loop (which we call a step) we move $\lambda$ weight from some edge $g$ to some other edge $f$. By Lemma~\ref{lem:greedy-step} this results in a change of
	$$\log\left(1+\lambda\gamma\cdot R_G(g)-\lambda\cdot R_G(g)-\lambda^2\gamma\cdot R_G(g)^2\right).$$

	Here we distinguish between the following two cases: If $\lambda=(\gamma-1)/(2\gamma\cdot R_G(g))$ exactly, then the increase in $\Psi(G)$ is at least $\log(1+(\gamma-1)^2/(2\gamma) - (\gamma-1)^2/(4\gamma))\ge\log(1+(\gamma-1)^2/(4\gamma))=:c_\gamma$. On the other hand, if $\lambda=\w(g)\le(\gamma-1)/(2\gamma\cdot R_G(g))$ then the increase in $\Psi(G)$ is at least $\log(1+\lambda\gamma\cdot R_G(g)-\lambda\cdot R_G(g)-\lambda(\gamma-1)\cdot R_G(g)/2)>\log(1)=0$.

	Overall there are two possibilities each step:
	\begin{enumerate}
		\item $\Psi(G)$ increases by at least $c_\gamma>0$,
		\item or $\w(g)$ becomes $0$, and $\Psi(G)$ increases by a positive amount.
	\end{enumerate}

	Let the initial setting of $G$ (before the while-loop) be $G_0$. Let $G_\infty$ be the complete graph on $V$ with uniform edge weights of $\sum_{e\in E}w(e)$ on each edge. Since $G$ always satisfies
	$$\sum_{f\in F}\w(f)=\sum_{e\in E}\sum_{f\in F_e}\w(f)=\sum_{e\in E}w(e),$$
	$\Psi(G)$ will always be less than $\Psi(G_\infty)$ by monotonicity of $\Psi$.
	%\jnote{Should we make this formal?}
	Thus, there can be at most $(\Psi(G_\infty)-\Psi(G_0))/c_\gamma$ steps of type 1. Therefore, after a certain point, there can only be steps of type 2; we focus on this stage of the algorithm.

	We further categorize steps of type 2 based on the initial weight of $f$:

	\begin{UR}
		\item The initial weight of $f$ is greater than $0$,
		\item the initial weight of $f$ is exactly $0$.
	\end{UR}

	 Steps of type 2a increase the total number of edges of weight exactly $0$, since by definition, neither $f$ nor $g$ starts out with weight $0$, but $\w(g)$ becomes $0$. On the other hand, steps of type 2b do not decrease the total number of edges of weight $0$. Therefore, after a certain point, there can only be steps of type 2b; we focus on this stage of the algorithm.

	At this point, the set of all edge weight values, that is
	$$\bigcup_{f\in F}\{\w(f)\},$$
	remains unchanged. Indeed at every step we simply switch the values of $\w(f)$ and $\w(g)$. Therefore, there are only a finite number of possible states for $G$ to be in. None of these can be repeated, as $\Psi(G)$ increases by a positive amount after each step, and therefore, Algorithm~\ref{alg:greedy-balanced} must terminate, returning a $\gamma$-balanced weight assignment of $H$.
\end{proof}

This proves the existence of $\gamma$-balanced weight assignment, which suffices to show the existence of nearly-linear-sized spectral sparsifiers (as we will see in Section~\ref{sec:main}). Unfortunately, the above proof shows no bound on the running time of Algorithm~\ref{alg:greedy-balanced} beyond $2^{O(m)}$.

\subsection{Polynomial-Time Construction}

In this section, we introduce a relaxation of the concept of $\gamma$-balanced weight assignment. This will still be sufficient to get spectral sparsifiers of nearly linear size, while also allowing the greedy algorithm to terminate in a polynomial number of steps.

\begin{definition}\label{def:weighted-approx-balanced}
	For $0<\eta \le1$ and $\gamma > 1$, an $\eta$-approximate $\gamma$-balanced weight assignment of $H=(V,E,w)$ is defined exactly as above in Definition~\ref{def:balanced}, except with the final condition relaxed to
	$$\gamma\cdot\min_{g\in F_e:\ \w(g)\geq \eta\cdot w(e)}R_G(g)\ge\max_{f\in F_e}R_G(f).$$
	That is, we allow edges not only of weight $0$, but also of weight less than $\eta \cdot w(e)$ to be small outliers in terms of effective resistance.
\end{definition}

We modify Algorithm~\ref{alg:greedy-balanced} to search for approximate $\gamma$-balanced weight assignments.

\begin{algorithm}[H]
	\begin{algorithmic}[1]
		\caption{Algorithm for constructing an $\eta$-approximate $\gamma$-balanced weight assignment.}\label{alg:greedy-approx-balanced}
		\Procedure{GreedyApproxBalancing}{$H=(V,E,w),\gamma,\eta$}
		\State For all $e\in E$ and for all $f\in F_e$, initialize $\w(f)$ to $w(e)/\binom{|e|}{2}$
		\State $G\gets(V,\bigcup F_e,\w)$
		\While{$G$ is not an $\eta$-approximate $\gamma$-balanced weight assignment of $H$}
		\State Select $e\in E$, and $f,g\in F_e$, such that $R_G(f)>\gamma\cdot R_G(g)$ and $\w(g)>\eta\cdot w(e)$\label{line:greedy-approx:select-problem-edge}
		\State $\lambda\gets\min\left(\w(g),1/(2R_G(g))\right)$
		\State $\w(f)\gets\w(f)+\lambda$
		\State $\w(g)\gets\w(g)-\lambda$
		\EndWhile
		\State \Return $G$
		\EndProcedure
	\end{algorithmic}
\end{algorithm}

\begin{theorem}\label{thm:greedy-approx-balanced}
	Let $H=(V,E,w)$ be a weighted hypergraph Then for $\gamma\ge4$ and $\eta > 0$, Algorithm~\ref{alg:greedy-approx-balanced} terminates \emph{within $M/(\eta w_{\min})\cdot\poly{n\log(M/w_{\min})}$ rounds} and returns an $\eta$-approximate $\gamma$-balanced weight assignment for $H$, where $w_{\min} := \min_{e\in E}w(e)$ and $M :=\sum_{e\in E}w(e)$.
\end{theorem}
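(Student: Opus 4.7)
The plan is to mirror the proof of Theorem~\ref{thm:greedy-balanced}, which already establishes correctness and eventual termination; the new ingredient is an explicit polynomial iteration bound obtained by bounding the per-step increase of $\Psi(G)$ from below. Correctness and the ``$G$ stays connected'' invariant follow verbatim from the proof of Theorem~\ref{thm:greedy-balanced} (using Fact~\ref{fact:w-R} to rule out zeroing out a bridge, now with the updated rule $\lambda=\min(\w(g),1/(2R_G(g)))$), so I will focus on counting iterations.

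Step one is a two-sided bound on $\Psi$. Since the invariant $\sum_{f}\w(f)=M$ is preserved throughout, each spanning-tree product in \eqref{eq:Psi-formula} is at most $M^{n-1}$, and Cayley's bound gives at most $n^{n-2}$ spanning trees after identifying parallel edges (as noted in the remark following the definition of $\Psi$), so $\Psi(G)\le O(n\log(nM))$. The initial uniform weights on each clique are at least $w_{\min}/n^2$, hence $\Psi(G_0)\ge -O(n\log(n/w_{\min}))$. Together, the total increase of $\Psi$ over the whole execution is bounded by $O(n\log(nM/w_{\min}))$.

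Step two is a case analysis of each iteration, according to which branch defines $\lambda=\min(\w(g),1/(2R_G(g)))$. If $\lambda=1/(2R_G(g))$, then Lemma~\ref{lem:greedy-step} gives an increase of at least $\log(1/2+\gamma/4)$, a positive constant for $\gamma\ge 4$, so at most $\poly{n\log(M/w_{\min})}$ such steps can occur. If instead $\lambda=\w(g)\le 1/(2R_G(g))$, so that $\lambda R_G(g)\le 1/2$, Lemma~\ref{lem:greedy-step} gives an increase of at least $\log(1+\lambda R_G(g)(\gamma/2-1))$. Two simple estimates combine here: the selection rule in Line~\ref{line:greedy-approx:select-problem-edge} forces $\lambda=\w(g)>\eta\cdot w(e)\ge\eta w_{\min}$, and the Rayleigh characterization of effective resistance with the test vector $\chi_u$ (for $g=(u,v)$) gives $R_G(g)\ge 1/d_G(u)\ge 1/M$. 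For $\gamma\ge 4$ this produces a per-step increase of $\Omega(\eta w_{\min}/M)$, yielding at most $O((M/(\eta w_{\min}))\cdot n\log(nM/w_{\min}))$ iterations of this type.

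Summing the two contributions gives the claimed $M/(\eta w_{\min})\cdot\poly{n\log(M/w_{\min})}$ bound. The main obstacle is the second case: without the $\eta$-relaxation, $\w(g)$ could become arbitrarily small and the per-step $\Psi$-gain would vanish, which is precisely why Theorem~\ref{thm:greedy-balanced} comes with no polynomial guarantee. The relaxation is calibrated to rule this out while being weak enough that the resulting weight assignment is still strong enough to drive the spectral-sparsification argument of Section~\ref{sec:main}.
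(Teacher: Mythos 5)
Your proof is correct and follows the same structure as the paper's: bound the total possible growth of the ST-potential $\Psi$ by a $\poly{n\log(M/w_{\min})}$ quantity, then lower bound the per-iteration increase of $\Psi$ by $\Omega(\eta w_{\min}/M)$ (via Lemma~\ref{lem:greedy-step} combined with $\w(g)>\eta w(e)\ge\eta w_{\min}$ and a crude lower bound on $R_G(g)$), and divide. The only minor cosmetic differences are in how you bound the total $\Psi$-growth (directly sandwiching $\Psi$ via a Cayley-type count of spanning trees and a lower bound from a single spanning tree in $G_0$, instead of the paper's construction that walks from $G_0$ to the complete graph $G_\infty$ one edge-update at a time and applies Lemma~\ref{lem:Psi-update}), and the lower bound on $R_G(g)$ ($1/d_G(u)\ge 1/M$ via the Rayleigh quotient with test vector $\chi_u$, instead of the paper's $R_{G_\infty}(g)=2/(Mn)$ via monotonicity); both yield the same $\poly{n\log(M/w_{\min})}$ and $\Omega(\eta w_{\min}/M)$ quantities.
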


\begin{proof}
	It is clear by the condition of the while-loop that if Algorithm~\ref{alg:greedy-approx-balanced} terminates, then it returns an $\eta$-approximate $\gamma$-balanced weight assignment.
	Also, the condition $\sum_{f\in F_e}\w(f)=w(e)$ is never violated.
	Therefore, in Line~\ref{line:greedy-approx:select-problem-edge}, there must indeed always be some $e\in E$ and some $f,g\in F_e$ such that $R_G(f)>\gamma$ and $\w(g)>\eta\cdot w(e)$, otherwise $G$ would already be $\eta$-approximately $\gamma$-balanced.
	It remains to prove that Algorithm~\ref{alg:greedy-approx-balanced} terminates within $M/(\eta w_{\min})\cdot\poly{n\log(M/w_{\min})}$ rounds.

	Let $G_0$ be the starting graph of Algorithm~\ref{alg:greedy-approx-balanced}, and $G_\infty$ be the complete graph on $V$ with uniform edge weights of $M$. Since $\Psi(G_\infty)$ is always greater than $\Psi(G)$ at every moment in the algorithm, we can upper bound the total increase in ST-potential throughout the algorithm by $\Psi(G_\infty)-\Psi(G_0)$. Let us upper bound $\Psi(G_\infty)-\Psi(G_0)$. To this end we will produce a sequence of updates to $G_0$ resulting in $G_\infty$, and we will upper bound the contribution of each update.

	Since $\min_{e\in E}w(e)=w_{\min}$, $\min_{f\in F}\w(f)\ge w_{\min}/n^2$ in $G_0$. Furthermore, since $H$ is connected, $G_0$ is connected as well and must contain a spanning tree with edges of weight at least $w_{\min}/n^2$ and therefore $\min_{u,v\in V}R_{G_0}(u,v)\le n/(w_{\min}/n^2) = n^3/w_{\min}$.
	% This will only increase throughout the sequence of updates which will all be weight increases.
	To transform $G_0$ to $G_\infty$, we simply add to each vertex pair $(u,v)$ a sufficient amount of weight to make $w(u,v)=M$. This is an update of $+\lambda\cdot(u,v)$ with $\lambda\le M$ and $R(u,v)\le n^3/w_{\min}$, which contributes by at most $\log(Mn^3/w_{\min}+1)$ to $\Psi$ by Lemma~\ref{lem:Psi-update}.
	The total contribution of all such updates on the way from $G_0$ to $G_\infty$ is at most
	$$\sum_{u,v\in V}\log\left(\frac{Mn^3}{w_{\min}} + 1\right)=\poly{n\log\left(\frac{M}{w_{\min}}\right)}.$$

	We now lower bound the minimum increase of $\Psi$ after each step of Algorithm~\ref{alg:greedy-approx-balanced}. We are able to do this thanks to the modification in Definition~\ref{def:weighted-approx-balanced} of \emph{approximate} $\gamma$-balanced weight assignment.

	Due to Lemma~\ref{lem:greedy-step}, the contribution of each update to the potential $\Psi(G)$ is at least
	$$\log\left(1+\lambda\gamma\cdot R_G(g)-\lambda\cdot R_G(g)-\lambda^2\gamma\cdot R_G(g)^2\right)$$
	At each step, $\lambda$ is set to either $\w(g)$ or $1/(2R_G(g))$.
	If $\lambda = 1/(2R_G(g))$, the increase in $\Psi(G)$ is at least $\log(1+(\gamma-1)/2 - \gamma/4)\ge\log(5/4)$. On the other hand, if $\lambda = \w(g)\le1/(2R_G(g))$, then the increase in $\Psi(G)$ is at least
	\begin{align*}
		\log(1+\lambda\gamma\cdot R_G(g)-\lambda\cdot R_G(g)-\lambda\gamma\cdot R_G(g)/2)&\ge\log(1+\lambda\cdot R_G(g))\\
		&=\log(1+\w(g)\cdot R_G(g))\\
		&\ge\log(1+\eta\cdot w_{\min}\cdot R_G(g)).
	\end{align*}
	To lower bound this, note that for all $g$
	$$R_G(g)\ge R_{G_\infty}(g)=\frac2{Mn}.$$
	This gives us that after each round of the algorithm $\Psi(G)$ increases by at least
	\[
		\log(1+\eta\cdot w_{\min}\cdot R_G(g)) \geq \frac{2 \cdot 2\eta w_{\min}/(nM)}{2 + 	2\eta w_{\min}/(nM)} \geq \frac{4\eta w_{\min}/(nM)}{4} = \frac{\eta w_{\min}}{nM},
	\]
	where we used $\log(1+x) \geq (2x)/(2+x)$ for $x \geq 0$ in the first inequality and we assumed $n$ is sufficiently large in the second inequality.
	Thus the algorithm takes at most $M/(\eta w_{\min})\cdot\poly{n\log(M/w_{\min})}$ steps.
\end{proof}

%%% Local Variables:
%%% mode: latex
%%% TeX-master: "000-main"
%%% End:

%!TEX root=./000-main.tex

\section{Hypergraph Sparsification}\label{sec:main}
In this section, we prove the existence of a spectral sparsifier with a nearly linear number of hyperedges, that is, the first part of Theorem~\ref{thm:main}.
We discuss efficient construction of spectral sparsifier in Section~\ref{sec:speed-up}.

To construct a spectral sparsifier, we first produce a $1/n^2$-approximate $\gamma$-balanced weight assignment for the input hypergraph, where $\gamma \geq 4$.
We can use Algorithm~\ref{alg:greedy-approx-balanced} for this. We then assign to each hyperedge $e$ importance equal to the maximum effective resistance in $F_e$ (see Definition~\ref{def:balanced}). We then perform typical importance sampling on the hyperedges, oversampling by a factor of $\lambda$ (to be set later).
A formal description is given in Algorithm~\ref{alg:main}.

\begin{algorithm}[H]
  \caption{$\epsilon$-spectral sparsification for a hypergraph, using importance sampling.}\label{alg:main}
	\begin{algorithmic}[1]
		\Procedure{Sparsification}{$H=(V,E,w),G=(V,F,\w),\epsilon,\lambda$}
		% \State $G=(V,F,\w)\gets\Call{GreedyApproxBalancing}{H,\gamma,1/n^2}$
		\State $\wt H=(V,\wt E,\wt w)\gets(V,\emptyset,0)$
		\ForAll{$e\in E$}
			\State $R^{\max}(e)\gets\max_{f\in F_e}R_G(f)$
			\State $p(e)\gets\min(1,w(e)\cdot R^{\max}(e)\cdot\lambda)$
			\State With probability $p(e)$, $\wt E\gets E\cup\{e\}$ and $\wt w(e)\gets w(e)/p(e)$
		\EndFor
		\State \Return $\wt H$
		\EndProcedure
	\end{algorithmic}
\end{algorithm}
% \rnote{Some comments about the notation:
%   First, $w(e)$ and $\w(f)$ look very similar, perhaps change the latter to $\w_G(f)$, because the difference in font alone is barely noticeable.
%   Second, $R(e)$ and $R_G(f)$ look very similar, perhaps change the former to use bar/tilde/star, or use special font for the former to emphasize it is effective resistance (even though it has a subscript for the graph $G$).
%   Third, I suggest to avoid hyphens in procedure name, it's semi-standard to use capitals,
%   like \textsc{GreedyApproxBalancing} and change here to \textsc{HypergraphSparsification}.
% }
% \ynote{I addressed the second and third comments.}

In the rest of this section, we show the correctness of this approach. We first bound the size of the hypergraph output by Algorithm~\ref{alg:main}. We then prove that the output $\wt H$ is indeed an $\epsilon$-spectral sparsifier of $H$---this is the technical core of the section.
% Finally we bound the running time of Algorithm~\ref{alg:main}. \ynote{we are not discussing this}

\begin{lemma}\label{lem:hypergraph_size}
Let $H=(V,E, w)$ be a weighted hypergraph and let $G=(V,F,\w)$ be its $1/n^2$-approximate $\gamma$-balanced weight assignment for $\gamma\ge4$.
Then, Algorithm~\ref{alg:main} returns a hypergraph of expected size
$\EX(|\wt E|) \leq2\lambda\gamma n$.
\end{lemma}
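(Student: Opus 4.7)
\textbf{Proof plan for Lemma~\ref{lem:hypergraph_size}.} The plan is to bound the expected number of sampled hyperedges by $\sum_{e \in E} p(e) \le \lambda \sum_{e \in E} w(e) \cdot R^{\max}(e)$, and then show that the sum $\sum_{e \in E} w(e) \cdot R^{\max}(e)$ is at most $2\gamma(n-1)$. The key ingredient is to convert the ``maximum'' effective resistance $R^{\max}(e)$ into quantities we can control via Fact~\ref{fact:eff-res-sum}, using the approximate balancedness of $G$.

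The main step is the following. Fix a hyperedge $e \in E$. Call an edge $f \in F_e$ \emph{heavy} if $\w(f) \ge w(e)/n^2$, and \emph{light} otherwise. Since $|F_e| = \binom{|e|}{2} \le n^2/2$, the total weight of light edges in $F_e$ is at most $\binom{|e|}{2} \cdot w(e)/n^2 \le w(e)/2$; combined with $\sum_{f \in F_e} \w(f) = w(e)$, this implies that the total weight of heavy edges in $F_e$ is at least $w(e)/2$. By the defining property of a $(1/n^2)$-approximate $\gamma$-balanced weight assignment (Definition~\ref{def:weighted-approx-balanced}), every heavy $g \in F_e$ satisfies $R_G(g) \ge R^{\max}(e)/\gamma$. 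Hence
\[
  \sum_{\substack{g \in F_e\\ g\text{ heavy}}} \w(g) \cdot R_G(g) \;\ge\; \frac{R^{\max}(e)}{\gamma} \cdot \sum_{\substack{g \in F_e\\ g\text{ heavy}}} \w(g) \;\ge\; \frac{w(e) \cdot R^{\max}(e)}{2\gamma}.
\]

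Summing this bound over all hyperedges $e \in E$ and noticing that the heavy-edge sums add up to at most the full sum $\sum_{f \in F} \w(f) \cdot R_G(f)$, we obtain
\[
  \sum_{e \in E} \frac{w(e) \cdot R^{\max}(e)}{2\gamma} \;\le\; \sum_{f \in F} \w(f) \cdot R_G(f) \;=\; n-1,
\]
where the final equality is Fact~\ref{fact:eff-res-sum} applied to $G$. Rearranging gives $\sum_{e \in E} w(e) \cdot R^{\max}(e) \le 2\gamma(n-1) \le 2\gamma n$. Since $p(e) \le w(e)\cdot R^{\max}(e)\cdot \lambda$ for every $e$, linearity of expectation yields $\EX(|\wt E|) = \sum_{e \in E} p(e) \le 2\lambda \gamma n$, as desired.

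The only subtlety worth flagging is the ``parallel edges'' convention: $F$ is defined as the multiset $\bigcup_{e\in E} F_e$, so when applying Fact~\ref{fact:eff-res-sum} one must interpret the sum over $F$ with multiplicities. This is consistent with the remark that collapsing parallel edges (by summing their weights) leaves $\Psi$ unchanged, and the same holds for $\sum_f \w(f) R_G(f)$, so no issue arises.
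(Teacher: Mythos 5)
Your proof is correct and follows essentially the same approach as the paper: bound $\sum_e w(e)\cdot R^{\max}(e)$ via Fact~\ref{fact:eff-res-sum} by isolating, for each hyperedge $e$, a subset of $F_e$ carrying at least half the weight on which $R_G$ is within a $\gamma$ factor of $R^{\max}(e)$. The only cosmetic difference is that you define this subset by the weight threshold $\w(f)\ge w(e)/n^2$ and then invoke the balance condition, whereas the paper defines it by the resistance threshold $\gamma R_G(f)\ge R^{\max}(e)$ and invokes the contrapositive; the two arguments are interchangeable.
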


\begin{proof}
Each hyperedge $e$ contributes $p(e)\le w(e)\cdot R^{\max}(e)\cdot\lambda$ to $\EX(|\wt E|)$,
and it thus suffices to bound
\begin{align} \label{eq:hyper-eff-res-sum}
  \sum_{e\in E}w(e)\cdot R^{\max}(e)\le2\gamma\cdot n.
\end{align}
% \ynote{This lemma should be taken out of the proof.}
% \rnote{fixed by merging the two lemmas into one lemma, which changed the numbering}

We proceed to prove this inequality.
For each $e\in E$, let us partition $F_e$ into two groups,
$F_e^{(1)} = \set{f\in F_e \mid \gamma\cdot R_G(f)\ge R^{\max}(e)}$
and the remaining edges $F_e^{(2)} = F_e\setminus F_e^{(1)}$.
By Definition~\ref{def:weighted-approx-balanced},
these remaining edges $f\in F_e^{(2)}$ satisfy $\w(f)\le w(e)/n^2$.
Then, by Fact~\ref{fact:eff-res-sum}, we have

\begin{align*}
  n-1
  &=\sum_{f\in F}\w(f)\cdot R_G(f)
  =\sum_{e\in E}\sum_{f\in F_e}\w(f)\cdot R_G(f)
  % &=\sum_{e\in E}\left(\sum_{f\in F_e^{(1)}}\w(f)\cdot R_G(f)+\sum_{f\in F_e^{(2)}}\w(f)\cdot R_G(f)\right)\\
  % &\ge\sum_{e\in E}\left(\sum_{f\in F_e^{(1)}}\w(f)\cdot\frac1\gamma\cdot R(e)+\sum_{f\in F_e^{(2)}}\w(f)\cdot R_G(f)\right)\\
  \ge\sum_{e\in E}\sum_{f\in F_e^{(1)}}\w(f)\cdot\frac1\gamma\cdot R^{\max}(e) ,
\end{align*}
where the second equality is due to the fact that $F$ is the multiset $\bigcup_{e \in E} F_e$ by Definition~\ref{def:balanced}.
For the inner summation, we can bound
\begin{align*}
  \sum_{f\in F_e^{(1)}}\w(f)
  \ge \sum_{f\in F_e}\w(f) - \sum_{f\in F_e^{(2)}} \frac{w(e)}{n^2}
  \ge w(e) - \frac12\cdot w(e)
  = \frac12\cdot w(e),
\end{align*}
\iffalse
\rnote{I shortened and slightly simplified what was a lengthy manipulation (I kept the old version commented out). }
	\begin{align*}
	n-1&=\sum_{f\in F}\w(f)\cdot R_G(f)\\
	&=\sum_{e\in E}\sum_{f\in F_e}\w(f)\cdot R_G(f)\\
	&=\sum_{e\in E}\left(\sum_{f\in F_e^{(1)}}\w(f)\cdot R_G(f)+\sum_{f\in F_e^{(2)}}\w(f)\cdot R_G(f)\right)\\
	&\ge\sum_{e\in E}\left(\sum_{f\in F_e^{(1)}}\w(f)\cdot\frac1\gamma\cdot R^{\max}(e)+\sum_{f\in F_e^{(2)}}\w(f)\cdot R_G(f)\right)\\
	&=\sum_{e\in E}\left(\sum_{f\in F_e}\w(f)\cdot\frac1\gamma\cdot R^{\max}(e)-\sum_{f\in F_e^{(2)}}\w(f)\cdot \left(\frac1\gamma\cdot R(e)-R_G(f)\right)\right)\\
	&\ge\frac1\gamma\cdot \sum_{e\in E}R(e)\cdot\sum_{f\in F_e}\w(f)-\frac1\gamma\cdot \sum_{e\in E}R(e)\cdot\sum_{f\in F_e^{(2)}}\w(f)\\
	&\ge\frac1\gamma\cdot\sum_{e\in E}w(e)\cdot R(e)-\frac1\gamma\sum_{e\in E}R(e)\cdot\sum_{f\in F_e^{(2)}}\frac{w(e)}{n^2}\\
	&\ge\frac1\gamma\cdot\sum_{e\in E}w(e)\cdot R(e)-\frac1{2\gamma}\sum_{e\in E}w(e)\cdot R(e)\\
	&=\frac1{2\gamma}\cdot\sum_{e\in E}w(e)\cdot R(e).
	\end{align*}
\fi
and altogether we obtain
$\sum_{e\in E}w(e)\cdot R^{\max}(e)\le2\gamma\cdot (n-1)$, which completes the proof. % of Lemma~\ref{lem:hypergraph_size}.
\end{proof}

Setting $\gamma=4$, for example, thus produces a linear-size output.
We now prove that the output is indeed a spectral sparsifier of $H$.

\begin{lemma}\label{lem:main}
Let $H=(V,E, w)$ be a weighted hypergraph and let $G=(V,F,\w)$ be its $1/n^2$-approximate $\gamma$-balanced weight assignment for some \emph{constant} $\gamma\ge4$. Then executing Algorithm~\ref{alg:main} on $H$, $G$, $1/n\le\epsilon<1$, and $\lambda=O(\log^3(n)/\epsilon^4)$,
returns with probability at least $1-O(\log(n)/n)$
an $\epsilon$-spectral sparsifier of $H$.
\end{lemma}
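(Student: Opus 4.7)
The plan is to mirror the proof of Theorem~\ref{thm:ordinary-sparsification}, using the $1/n^2$-approximate $\gamma$-balanced weight assignment $G$ as the auxiliary ordinary graph on which the rounding functions $\varphi_i$ from Lemma~\ref{lem:rounding} are defined. Normalize so that $x^\top L_G x = 1$; note that $Q_H(x)\ge x^\top L_G x = 1$, since for every hyperedge $e$ we have $w(e)\max_{u,v\in e}(x_u-x_v)^2\ge \sum_{f=(u,v)\in F_e}\w(f)(x_u-x_v)^2$ and $\sum_{f\in F_e}\w(f)=w(e)$. Since $\mathbb{E}[Q_{\widetilde H}(x)] = Q_H(x)$, we are again proving concentration of a sum of independent random variables, but with the hypergraph energy replacing the Laplacian quadratic form.

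The key new ingredient is extending the rounding approximation from edges to hyperedges. For a hyperedge $e$, write $R^{\max}(e)=\max_{f\in F_e}R_G(f)$. If either $\max_{u,v\in e}(x_u-x_v)^2\ge 2^{-i}R^{\max}(e)$ or the analogous inequality for $x^{(i)}$ holds, then at the pair $(u^*,v^*)$ achieving the larger of the two maxima we trivially have $R_G(u^*,v^*)\le R^{\max}(e)$, so Lemma~\ref{lem:rounding} applies directly to $(u^*,v^*)$ and gives $(x_{u^*}-x_{v^*})^2\in(1\pm\epsilon/7)(x^{(i)}_{u^*}-x^{(i)}_{v^*})^2$. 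A short two-sided comparison of the two maxima then yields $\max_{u,v\in e}(x_u-x_v)^2 \in (1\pm O(\epsilon))\max_{u,v\in e}(x^{(i)}_u-x^{(i)}_v)^2$. Notably, this step uses only the definition of $R^{\max}(e)$, not the $\gamma$-balanced property of $G$.

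Next, partition $E$ into classes $E_i$ according to whether $\max_{u,v\in e}(x^{(i)}_u-x^{(i)}_v)^2\ge 2^{-i}R^{\max}(e)$ (and not in any earlier class), analogously to the classes $F_i$ in Section~\ref{sec:warm-up}, plus a tail $E_{I+1}$ for $I=\log_2(7n/\epsilon)$. A hypergraph analogue of Claim~\ref{cl:power-bound} shows that for $e\in E_i$, $\max_{u,v\in e}(x^{(i)}_u-x^{(i)}_v)^2\le 3\cdot 2^{-i}R^{\max}(e)$, so when a hyperedge $e\in E_i$ is sampled, $\widetilde w(e)\max_{u,v\in e}(x^{(i)}_u-x^{(i)}_v)^2\le 3\cdot 2^{-i}/\lambda$. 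Applying Theorem~\ref{thm:chernoff} with $\delta=\epsilon/(7I)$ and $\mu=O(1)$ gives failure probability $\exp(-\Theta(\epsilon^2\cdot 2^i\lambda/I^2))$; a union bound over the $\exp(O(2^i\log n/\epsilon^2))$ possible pairs $(x^{(i)},E_i)$ afforded by the first guarantee of Lemma~\ref{lem:rounding} and over all $I$ levels yields overall failure probability $O(\log n/n)$ when $\lambda=\Theta(\log^3 n/\epsilon^4)$.

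For the low-power tail $E_{I+1}$, every hyperedge satisfies $\max_{u,v\in e}(x_u-x_v)^2\le \epsilon R^{\max}(e)/(6n)$, and the bound $\sum_{e\in E}w(e)R^{\max}(e)\le 2\gamma n$ established in the proof of Lemma~\ref{lem:hypergraph_size} controls the $Q_H$-contribution by $O(\epsilon)$. For the $Q_{\widetilde H}$-contribution we apply Chernoff once more (independently of $x$) to $\sum_e\widetilde w(e)R^{\max}(e)$: each non-deterministic term is $1/\lambda$ and the expectation is at most $2\gamma n$, so the sum concentrates at $O(n)$ with high probability, again producing an $O(\epsilon)$ tail bound. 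Combining the contributions from all classes via triangle inequality as in Section~\ref{sec:warm-up} yields $|Q_{\widetilde H}(x)-Q_H(x)|\le \epsilon\cdot Q_H(x)$ for all $x$ simultaneously. The main technical obstacle is the hypergraph rounding claim described above, because the $\max$ operator is non-linear and it is not \emph{a priori} clear that the pair attaining the maximum in $x$ is close to the pair attaining the maximum in $x^{(i)}$; the resolution is the clean structural observation that every pair $u,v\in e$ has $R_G(u,v)\le R^{\max}(e)$, ensuring Lemma~\ref{lem:rounding} fires at whichever pair attains the max and letting the two-sided comparison close the argument.
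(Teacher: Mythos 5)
Your proposal is correct and follows essentially the same route as the paper's proof: the key hyperedge rounding step (apply Lemma~\ref{lem:rounding} at the argmax pair $(u^*,v^*)\in F_e$ using $R_G(u^*,v^*)\le R^{\max}(e)$, then compare the two maxima) is precisely the paper's Claim~\ref{claim:hyper-rounding}, and the remaining decomposition, Chernoff/union bound, and tail handling mirror Section~\ref{sec:warm-up} as you indicate. One constant needs adjusting: you take $I=\log_2(7n/\epsilon)$, but since the tail is controlled by $\sum_e w(e)R^{\max}(e)\le 2\gamma n$ rather than $n-1$, the paper sets $I=\log_2(14\gamma n/\epsilon)$ so the $E_{I+1}$ contribution is $\le\epsilon/6$ rather than $O(\gamma\epsilon)$ (an easy fix, not a gap).
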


\begin{proof}
We proceed similarly to our proof of Theorem~\ref{thm:ordinary-sparsification} in Section~\ref{sec:warm-up}.
By Definition~\ref{def:hyper-sparsifier},
we must prove that for every $x\in\mathbb R^V$,
\begin{equation}\label{eq:hyper-main}
  \left|Q_H(x)-Q_{\wt H}(x)\right|\le\epsilon\cdot Q_H(x),
\end{equation}
Since Equation~\eqref{eq:hyper-main} is invariant to scaling,
we may assume without loss of generality that $x^\top L_G x = 1$,
and we denote the set of such vectors by $S^G\subseteq\mathbb R^V$. Notice that $S^G$ is defined with respect to $L_G$ and not $Q_H$,
however it implies that $Q_H(x)\ge1$ because for all $x\in S^G$, 
	\begin{align}\label{eq:hyper-Q-bound}
		Q_H(x)&=\sum_{e\in E}w(e)\cdot\max_{u^*,v^*\in e}(x_{u^*}-x_{v^*})^2=\sum_{e\in E}\sum_{f\in F_e}\w(f) \cdot \max_{u^*,v^*\in e}(x_{u^*}-x_{v^*})^2\\
		&\ge\sum_{e\in E}\sum_{f=(u,v)\in F_e}\w(f)\cdot(x_u-x_v)^2\label{eq:G-vs-H}=\sum_{f=(u,v)\in F}\w(f)\cdot(x_u-x_v)^2\nonumber=x^\top L_G x\nonumber=1.\nonumber
	\end{align}

For any $E'\subseteq E$, we denote the quadratic form $Q(x)$ restricted only to the hyperedges $E'$ in $H$ by $Q_{E'}(x)$,
and similarly in the hypergraph $\wt H$ by $\wt Q_{E'}(x)$.
If $E'=E$, we omit the subscript.
It is clear by the construction of $\wt H$ that
$$\mathbb E\left(\wt Q(x)\right)=Q(x).$$
Therefore, for any specific vector $x$, Equation~\eqref{eq:hyper-main} holds by Chernoff bound (Theorem~\ref{thm:chernoff}). In order to prove it simultaneously for all $x\in S^G$, we again use progressively finer and finer roundings of $x$, as guaranteed by Lemma~\ref{lem:rounding}.

Indeed, fix a sequence of the rounding functions $\varphi_i$ guaranteed by Lemma~\ref{lem:rounding} for $i=1,\ldots,I:=\log_2(14\gamma n/\epsilon)\le3\log n$ (since $\gamma$ is a fixed constant and $n$ is sufficiently large),
and denote the sequence of rounded vectors for $x\in S^G$ by $x^{(i)}=\varphi_i(x)$.
We define for each $x^{(i)}$ the set of hyperedges %$E_i'\subseteq E$ as follows:
$$E_i':=\left\{e\in E\;\middle|\;\max_{u,v\in e}\left(x_u^{(i)}-x_v^{(i)}\right)^2\ge2^{-i}\cdot R^{\max}(e)\right\},$$
where by definition $R^{\max}(e) = \max_{f\in F_e}R_G(f)$.
This set $E'_i$ is designed such that the rounding $\varphi_i$ will conserve (to within a small multiplicative error) $Q_{\{e\}}(x)$ for any $e\in E_i'$ (we will see a proof of this fact later on).
We can now define a partition of $E$ based on the sets $E_i'$.
First, recall that $p(e)=\min(1,\lambda\cdot w(e)\cdot R^{\max}(e))$,
and let the base case be $E_0 = E_0' := \set{e\in E \mid p(e) = 1}$.
Then for each $i=1,\ldots,I$, let
$$
E_i:=E_i'\setminus\bigcup_{j=0}^{i-1}E_j'.
$$
Finally, let $E_{I+1}:=E\setminus\bigcup_{i=0}^I E_i'$.

	We begin with two useful claims that will help us bridge the gap between the ordinary graph and the hypergraph settings. First, we extend the second guarantee of Lemma~\ref{lem:rounding} to hyperedges.

\begin{claim}\label{claim:hyper-rounding}
For all $x\in S^G$ and $e\in E$ such that
$$\max\left(\max_{u,v\in e}(x_u-x_v)^2,\max_{u,v\in e}(x_u^{(i)}-x_v^{(i)})^2\right)\ge2^{-i}\cdot R^{\max}(e),$$
we have
\begin{equation}\label{eq:hyper-rounding}
\max_{u,v\in e}(x_u-x_v)^2 \in \left(1 \pm \frac\epsilon7\right)\cdot \max_{u,v\in e}(x_u^{(i)}-x_v^{(i)})^2.
\end{equation}
\end{claim}
\begin{proof}
We focus on the case when $\max_{u,v\in e}(x_u^{(i)}-x_v^{(i)})^2\ge2^{-i}\cdot R^{\max}(e)$. The case when $\max_{u,v\in e}(x_u-x_v)^2\ge2^{-i}\cdot R^{\max}(e)$ follows by an identical argument.
	
% We treat the absolute value on the left-hand side of Equation~\eqref{eq:hyper-rounding} as two directions.
For one direction,
let the pair $u^*,v^*\in e$ maximize $(x^{(i)}_{u}-x^{(i)}_{v})^2$.
Now
$$(x^{(i)}_{u^*}-x^{(i)}_{v^*})^2=\max_{u,v\in E}(x^{(i)}_u-x^{(i)}_v)^2\ge2^{-i}\cdot R^{\max}(e)=2^{-i}\cdot\max_{f\in F_e}R_G(f)\ge2^{-i}\cdot R_G(u^*,v^*),$$
hence the second guarantee of Lemma~\ref{lem:rounding} holds for $(u^*,v^*)$,
and consequently
$$(x^{(i)}_{u^*}-x^{(i)}_{v^*})^2\le\left(1-\frac\epsilon7\right)^{-1}\cdot(x_{u^*}-x_{v^*})^2\le\left(1-\frac\epsilon7\right)^{-1}\cdot\max_{u,v\in e}(x_u-x_v)^2.$$

The other direction follows by a similar argument, but is slightly more complicated. The asymmetry is due to our assumption (that $\max_{u,v\in e}(x_u^{(i)}-x_v^{(i)})\ge2^{-i}\cdot R^{\max}(e)$) being in terms of $x^{(i)}$ instead of $x$.

Let $u^*,v^*$ be a vertex pair $u,v\in e$ that maximizes $(x_u-x_v)^2$. Here we distinguish two cases: If $(x_{u^*}-x_{v^*})^2<2^{-i}\cdot R^{\max}(e)$, we are immediately done, since
$$\max_{u,v\in e}(x_u-x_v)^2=(x_{u^*}-x_{v^*})^2\le2^{-i}\cdot R^{\max}(e)\le\max_{u,v\in e}(x_u^{(i)}-x_v^{(i)})^2.$$

On the other hand, if $(x_{u^*}-x_{v^*})^2\ge2^{-i}\cdot R^{\max}(e)$, we proceed identically to the first half of the proof:
$$(x_{u^*}-x_{v^*})^2\ge2^{-i}\cdot R^{\max}(e)=2^{-i}\cdot\max_{f\in F_e}R_G(f)\ge2^{-i}\cdot R_G(u^*,v^*).$$
Therefore, the second guarantee of Lemma~\ref{lem:rounding} holds for $(u^*,v^*)$ in $G$, and in particular
\[
(x_{u^*}-x_{v^*})^2\le\left(1+\frac\epsilon7\right)\cdot(x_{u^*}^{(i)}-x_{v^*}^{(i)})^2\le\left(1+\frac\epsilon7\right)\cdot\max_{u,v\in e}(x_u^{(i)}-x_v^{(i)})^2.
\qedhere
\]
\end{proof}

We thus obtain an analogue to Corollary~\ref{cor:rounding} from Section~\ref{sec:warm-up}.

\begin{corollary}\label{cor:hyper-rounding}
For every edge $e\in E_i$,
$$Q_{\{e\}}(x)\in \left(1 \pm \frac\epsilon7\right)\cdot Q_{\{e\}}(x^{(i)}).$$
The same holds also for $\wt Q_{\{e\}}$ (because it is a multiple of $Q_{\{e\}}$).
\end{corollary}

\begin{proof}
	We have that $e\in E_i\subseteq E_i'$ and hence $\max_{u,v\in e}(x_u^{(i)}-x_v^{(i)})^2\ge2^{-i}\cdot R^{\max}(e)$. We can therefore apply Claim~\ref{claim:hyper-rounding} and scale up by $w(e)$ to get the desired result.
\end{proof}

Next, we prove an analogue to Claim~\ref{cl:power-bound} from Section~\ref{sec:warm-up}.

\begin{claim}\label{cl:hyper-power-bound}
For all $i\in[I]$ and $e\in E_i$, we have
$$\max_{u,v\in e}(x_u^{(i)}-x_v^{(i)})^2\le3\cdot2^{-i}\cdot R^{\max}(e) .$$
\end{claim}

\begin{proof}
The proof proceeds nearly identically to that of Claim~\ref{cl:power-bound}.

The condition of Claim~\ref{claim:hyper-rounding} holds for $e\in E_i$ and hence we have $\max_{u,v\in e}(x_u^{(i)}-x_v^{(i)})^2\le(1-\epsilon/7)^{-1}\cdot\max_{u,v\in e}(x_u-x_v)^2$.

Consider first the case $i=1$.
Let $u^*,v^*\in e$ maximize $(x_u-x_v)^2$.
Then by Fact~\ref{fact:eff-res} and since $x\in S^G$, we have
$\max_{u,v\in e}(x_u-x_v)^2=(x_{u^*}-x_{v^*})^2\le R_G(u^*,v^*)\cdot x^\top L_G x = R_G(u^*,v^*)$,
and we indeed get
\begin{equation*}
\begin{split}
\max_{u,v\in e}(x_u^{(i)}-x_v^{(i)})^2&\leq \max_{u,v\in e}(x_u-x_v)^2\cdot \left(1-\frac{\epsilon}{7}\right)^{-1}\\
&\le R_G(u^*,v^*)\cdot \left(1-\frac{\epsilon}{7}\right)^{-1}\\
&\le R^{\max}(e)\cdot \left(1-\frac{\epsilon}{7}\right)^{-1}\\
&\le3\cdot 2^{-1}\cdot R^{\max}(e).
\end{split}
\end{equation*}
%where the first transition uses the fact that
%$\max_{u,v\in e}(x_u^{(i)}-x_v^{(i)})^2\le\max_{u,v\in e}(x_u-x_v)^2\cdot(1-\epsilon/7)^{-1}$ by Claim~\ref{claim:hyper-rounding}  (since $e\in E_i\subseteq E'_i$).

Now consider $i>1$, and suppose towards contradiction that
$\max_{u,v\in e} (x_u^{(i)}-x_v^{(i)})^2>3\cdot2^{-i}\cdot R^{\max}(e)$.
Notice that since $\max_{u,v\in e}(x_u-x_v)^2\ge2^{-i+1}\cdot R^{\max}(e)$, Claim~\ref{claim:hyper-rounding} still applies to $e$ with respect to $\varphi_{i-1}$.
Thus

\begin{equation*}
\begin{split}
\max_{u,v\in e}(x_u^{(i-1)}-x_v^{(i-1)})^2&\ge\max_{u,v\in e}(x_u-x_v)^2\cdot \left(1+\frac{\epsilon}{7}\right)^{-1}\\
&\geq \max_{u,v\in e}(x_u^{(i)}-x_v^{(i)})^2\cdot \left(1-\frac{\epsilon}{7}\right)\cdot \left(1+\frac{\epsilon}{7}\right)^{-1}\\
&\ge2^{-i+1}\cdot R^{\max}(e).
\end{split}
\end{equation*}

This implies that $e\in E_{i-1}'$,
which contradicts the assumption $e\in E_i$. %\xxx[MK]{I find this last derivation a bit quick. For example, none of the claims referenced relate $x^{(i)}$ to $x^{(i-1)}$. The only way to do it is through $x$, using Claim~\ref{claim:hyper-rounding} and Lemma~\ref{lem:rounding} unless I am missing something. It would be good spell out the details here.}
\end{proof}

	\if 0

	As the final claim, we relate the total energy of $H$ and its weight assignment $G$ to each other. (Recall that the latter is exactly $1$ by definition.)

	\begin{claim}\label{claim:H-vs-G}
		Let $H$ be a weighted hypergraph, and $G$ its weight assignment (see Definition~\ref{def:balanced}). Then for any $x\in\mathbb R^V$
		$$Q_{\{e\}}(x)\ge x^\top L_G x.$$
	\end{claim}

	\begin{proof}
		\begin{align*}
			Q_{\{e\}}(x)&=\sum_{e\in E}w(e)\cdot\max_{u,v\in e}(x_u-x_v)^2\\
			&=\sum_{e\in E}\max_{u,v\in E}(x_u-x_v)^2\sum_{f\in F_e}\w(f)\\
			&\ge\sum_{e\in E}\sum_{f=(u,v)\in F_e}(x_u-x_v)^2\\
			&=\sum_{f\in F}(x_u-x_v)^2\\
			&=x^\top L_G x.
		\end{align*}
	\end{proof}

	\fi

% \xxx[MK]{I think $S^G$ would be better notation than $S^V$}

We consider each group of hyperedges $E_i$ separately,
and prove that the quadratic form is well approximated on this subset with high probability, that is
$$\wt Q_{E_i}(x)\approx Q_{E_i}(x).$$

\paragraph{Hyperedges in $E_0$.}
By definition, for all $e\in E_0$, we have $p(e)=1$.
Therefore, $\wt Q_{E_0}(x) = Q_{E_0}(x)$ deterministically.

\paragraph{Hyperedges in $E_i$ for $i\in[I]$.} Similarly to the case of ordinary graphs in Section~\ref{sec:warm-up}, we want to consider $x^{(i)}$ instead of $x$. We may do this due to Corollary~\ref{cor:hyper-rounding}.
Therefore, we can focus on proving that $Q_{E_i}(x^{(i)})\approx\wt Q_{E_i}(x^{(i)})$.

Consider a specific $i\in[I]$.
We wish to prove that with high probability,
simultaneously for all values of $x^{(i)}$ and $E^{(i)}$,
\begin{equation}\label{eq:hyper-mid-i-main}
  \left|\wt Q_{E_i}(x^{(i)})-Q_{E_i}(x^{(i)})\right|\le\frac{\epsilon\cdot Q(x)}{7I} .
\end{equation}
Since (by the first guarantee of Lemma~\ref{lem:rounding})
there are only finitely many values $(x^{(i)},E_i)$,
we can prove that Equation~\eqref{eq:hyper-mid-i-main} holds with high probability individually for each such pair,
and then use a union bound over all pairs.

The former can be done using Chernoff bounds (Theorem~\ref{thm:chernoff}); we begin with this.
Let us fix $x^{(i)}$ and $E_i$.
Recall that
$$\wt Q_{E_i}(x^{(i)})=\sum_{e\in E_i}\wt w(e)\cdot\max_{u,v\in e}(x_u^{(i)}-x_v^{(i)})^2$$
is the sum of independent random variables % $\wt w(e)\cdot\max_{u,v\in e}(x_u-x_v)^2$.
with expectation $\EX(\wt Q_{E_i}(x^{(i)})) = Q_{E_i}(x^{(i)})$
because $\mathbb E(\wt w(e))=w(e)$ by definition.
To apply Chernoff bounds (Theorem~\ref{thm:chernoff}), we need to set the variables $a$, $\delta$ and $\mu$.
We set $\delta:=\epsilon/(14I)$,
and using Corollary~\ref{cor:hyper-rounding},
we can bound
$$
  Q_{E_i}(x^{(i)})
  \le\left(1-\frac{\epsilon}{7}\right)^{-1}\cdot Q_{E_i}(x)\le2Q(x)
  =: \mu.
$$
(This is true for an arbitrary preimage $x\in\varphi_i^{-1}(x^{(i)})$.)
We also need to set $a$ as an upper bound on the largest possible value of any variable of the form $\wt w(e)\cdot\max_{u,v\in e}(x_u^{(i)}-x_v^{(i)})^2$ for $e\in E_i$.
Such a variable takes its maximum value when $e$ is sampled to the sparsifier $\wt H$, in which case $\wt w(e)=w(e)/p(e)=w(e)/(\lambda\cdot w(e)\cdot R^{\max}(e))=1/(\lambda\cdot R^{\max}(e))$, since $e\not\in E_0$.
Therefore, the random variable in question is upper bounded,
using Claim~\ref{cl:hyper-power-bound}, by
$$
  \max_{e\in E_i}\frac{\max_{u,v\in e}(x_u^{(i)}-x_v^{(i)})^2}{\lambda\cdot R^{\max}(e)}
  \le\frac{3\cdot2^{-i}}{\lambda}
 =: a .
$$

Finally, Theorem~\ref{thm:chernoff} implies
\begin{align*}
  \mathbb P\left(\left|Q_{E_i}(x^{(i)}-\wt Q_{E_i}(x^{(i)}))\right|\ge\frac{\epsilon\cdot Q(x)}{7I}\right)
%  &=\left(\left|Q_{E_i}(x^{(i)}-\mathbb EQ_{E_i}(x^{(i)}))\right|\ge\delta\mu\right)\\
   &\le2\exp\left(-\frac{\delta^2\mu}{3a}\right)\\
   &=2\exp\left(-\frac{\tfrac{\epsilon^2}{196I^2}\cdot Q(x)}{3\cdot3\cdot2^{-i} / \lambda}\right)\\
   &\le2\exp\left(-\frac{\epsilon^2\cdot2^{i}\cdot\lambda}{10000\log^2(n)}\right)\\
   &=2\exp\left(-\frac{2000C\log(n)\cdot2^i}{\epsilon^2}\right),
\end{align*}
where the last step is by setting $\lambda = 2\cdot10^7\cdot C\log^3(n)/\epsilon^4$.

We now turn to applying a union bound over all possible values of $(x^{(i)},E_i)$. Much like $F_i$ in the proof of Theorem~\ref{thm:ordinary-sparsification}, $E_i$ depends on all $E_j'$ for $j\le i$, which in turn depend on all $x^{(j)}$ for $j\le i$. Since $x^{(j)}=\varphi_j(x)$, by the first guarantee of Lemma~\ref{lem:rounding}, there are at most $\exp(800C\log(n)\cdot2^{j}/\epsilon^2)$ possible vectors $x^{(j)}$ (where $C$ is the absolute constant from Theorem~\ref{thm:rounding}). Therefore, the number of possible pairs $(x^{(i)},E_i)$ is at most
$$\prod_{j=1}^i\exp\left(\frac{800C\log(n)\cdot2^j}{\epsilon^2}\right)=\exp\left(\sum_{j=1}^i\frac{800C\log(n)\cdot2^j}{\epsilon^2}\right)\le\exp\left(\frac{1600C\log(n)\cdot2^i}{\epsilon^2}\right).$$

Putting together the Chernoff bound and the union bound, we get that
Equation~\eqref{eq:hyper-mid-i-main} holds simultaneously for all values of $(x^{(i)},E_i)$ except with probability at most
$$\exp\left(\frac{1600C\log(n)\cdot2^i}{\epsilon^2}\right)\cdot2\exp\left(-\frac{2000C\log(n)\cdot2^i}{\epsilon^2}\right)=2\exp\left(-\frac{400C\log(n)\cdot2^i}{\epsilon^2}\right)\le\frac1n.$$

\paragraph{Hyperedges in $E_{I+1}$.}
Recall that $I=\log_2(14\gamma\cdot n/\epsilon)$. First we show that for any hyperedge $e\in E_{I+1}$,
we have that $\max_{u,v\in e}(x_u-x_v)^2\le\epsilon\cdot R^{\max}(e)/(12\gamma\cdot n)$.
Indeed, suppose for contradiction that this is not the case,
and let $u^*,v^*\in e$ be a vertex pair that maximizes $(x_u-x_v)^2$.
Then, by the second guarantee of Lemma~\ref{lem:rounding}, we have that $(x^{(I)}_{u^*}-x_{v^*}^{(I)})^2\ge(x_{u^*}-x_{v^*})^2\cdot(1-\epsilon/7)\ge\epsilon\cdot R^{\max}(e)\cdot(1-\epsilon/7)/(12\gamma\cdot n)\ge\epsilon\cdot R^{\max}(e)/(14\gamma\cdot n)$.
Therefore, $e \in E_I'$, which contradicts $e\in E_{I+1}$.

We show that $|\wt Q_{E_{I+1}}-Q_{E_{I+1}}|$ is small by showing that each of the two terms is individually small.
First,
\begin{align*}
  Q_{E_{I+1}}(x)
  =\sum_{e\in E_{I+1}}w(e)\cdot\max_{u,v\in e}(x_u-x_v)^2
  \le\sum_{e\in E_{I+1}}w(e)\cdot\frac{\epsilon\cdot R^{\max}(e)}{12\gamma\cdot n}
  \le\frac{\epsilon}{12\gamma\cdot n}\cdot\sum_{e\in E}w(e)\cdot R^{\max}(e)
  \le\frac{\epsilon}{6},
\end{align*}
where the last inequality uses Equation~\eqref{eq:hyper-eff-res-sum}.
Second, we start similarly,
\begin{align*}
  \wt Q_{E_{I+1}}(x)
  =\sum_{e\in E_{I+1}}\wt w(e)\cdot\max_{u,v\in e}(x_u-x_v)^2
  \le\sum_{e\in E_{I+1}}\wt w(e)\cdot\frac{\epsilon\cdot R^{\max}(e)}{12\gamma\cdot n}
  \le\frac{\epsilon}{12\gamma\cdot n}\cdot\sum_{e\in E}\wt w(e)\cdot R^{\max}(e),
\end{align*}
and ideally we would like to show that $\sum\wt w(e)\cdot R^{\max}(e)\le4\gamma\cdot n$.
This is a random event, independent of the choice of $x$,
whose probability we can bound using the Chernoff bound (Theorem~\ref{thm:chernoff}).
Recall that $E_0 = \set{e\in E \mid p(e)=1}$,
and denote its complement by
$\overline E_0 := \set{e\in E \mid p(e)=\lambda\cdot w(e)\cdot R^{\max}(e)}$.
Since $\EX(\sum\wt w(e)\cdot R^{\max}(e))=\sum w(e)\cdot R^{\max}(e)\le2\gamma\cdot n$
(using Equation~\eqref{eq:hyper-eff-res-sum})
and $\sum_{e\in E_0}\wt w(e)\cdot R^{\max}(e)$ is deterministic by definition of $E_0$,
we have that
\begin{align*}
  \mathbb P\left(\sum_{e\in E}\wt w(e)\cdot R^{\max}(e)\ge4\gamma\cdot n\right)
  &\le\mathbb P\left(\left|\sum_{e\in E}\wt w(e)\cdot R^{\max}(e)-\mathbb E\left(\sum_{e\in E}\wt w(e)\cdot R^{\max}(e)\right)\right|\ge2\gamma\cdot n\right)\\
  &=\mathbb P\left(\left|\sum_{e\in\overline E_0}\wt w(e)\cdot R^{\max}(e)-\mathbb E\left(\sum_{e\in\overline E_0}\wt w(e)\cdot R^{\max}(e)\right)\right|\ge2\gamma\cdot n\right).
\end{align*}

We bound this by applying Theorem~\ref{thm:chernoff} for the independent random variables $\wt w(e)\cdot R^{\max}(e)$ where $e\in\overline E_0$.
The maximum value of such a variable occurs when $e$ is sampled, in which case it is exactly $\wt w(e)\cdot R^{\max}(e)=R^{\max}(e)\cdot w(e)/p(e)=1/\lambda\le1 =:a$.
Setting $\delta := 1$ and $\mu := 2\gamma\cdot n$
(we may do this due to Equation~\eqref{eq:hyper-eff-res-sum}),
we get
$$\mathbb P\left(\sum_{e\in E}\wt w(e)\cdot R^{\max}(e) \ge 4\gamma\cdot n\right)\le2\exp\left(-\frac{2\gamma\cdot n}{3}\right)\le\frac1n.$$
In conclusion, with probability at least $1-O(1/n)$,
\begin{equation}\label{eq:hyper-I+1-main}
  \left|\wt Q_{E_{I+1}}(x)-Q_{E_{I+1}}(x)\right|\le\frac{\epsilon}{2} .
\end{equation}

\paragraph{Putting everything together.}
The final part of the proof proceeds identically to that of Theorem~\ref{thm:ordinary-sparsification}.
We have seen that Equation~\eqref{eq:hyper-mid-i-main} holds simultaneously for all $i$ and $(x^{(i)},E_i)$,
as well as Equation~\eqref{eq:hyper-I+1-main} holds with probability at least $1-O(\log(n)/n)$.
Conditioning on these events, we can deduce Equation~\eqref{eq:hyper-main},
thus concluding the proof of Lemma~\ref{lem:main}.

For completeness we repeat the derivation:
\begin{align*}
	\left|Q_{E_i}(x)-\wt Q_{E_i}(x)\right|
	\le&\left|Q_{E_i}(x)-Q_{E_i}(x^{(i)})\right|+\left|Q_{E_i}(x^{(i)})-\wt Q_{E_i}(x^{(i)})\right|+\left|\wt Q_{E_i}(x^{(i)})-\wt Q_{E_i}(x)\right|\\
	\le&\frac{\epsilon}{7}\cdot Q_{E_i}(x^{(i)}) + \left|Q_{E_i}(x^{(i)})-\wt Q_{E_i}(x^{(i)})\right| + \frac{\epsilon}{7}\cdot\wt Q_{E_i}(x^{(i)}) \tag{by Corollary~\ref{cor:hyper-rounding}}\\
	\le& \frac{2\epsilon}{7}\cdot Q_{E_i}(x^{(i)}) + \left(1+\frac{\epsilon}{7}\right)\cdot\left|Q_{E_i}(x^{(i)})-\wt Q_{E_i}(x^{(i)})\right|\\
	\le& \frac{2\epsilon}{7}\cdot \left(1-\frac{\epsilon}{7}\right)^{-1}\cdot Q_{E_i}(x)+\left(1+\frac{\epsilon}{7}\right)\cdot\frac{\epsilon\cdot Q(x)}{7I} \tag{by Corollary~\ref{cor:hyper-rounding} and Equation~\eqref{eq:hyper-mid-i-main}}\\
	\le& \frac{\epsilon}{3}\cdot Q_{E_i}(x)+\frac{\epsilon\cdot Q(x)}{6I}.
\end{align*}
Therefore, we have
\begin{align*}
	\left|Q(x)-\wt Q(x)\right|\le&\left|Q_{E_0}(x)-\wt Q_{E_0}(x)\right| +\sum_{i=1}^I \left| Q_{E_i}(x)-\wt Q_{E_i}(x)\right|+\left|Q_{E_{I+1}}(x)-\wt Q_{E_{I+1}}(x)\right|\\
	\le&0+\sum_{i=1}^I\left(\frac{\epsilon}{3}\cdot Q_{E_i}(x)+\frac{\epsilon\cdot Q(x)}{6I}\right)+\frac{\epsilon}{2} \tag{by Equation~\eqref{eq:hyper-I+1-main}}\\
	\le&\frac{\epsilon}{3}\cdot Q(x)+\frac{\epsilon}{6}\cdot Q(x)+\frac{\epsilon}{2}\\
	\le&\epsilon\cdot Q(x), \tag{by Equation~\eqref{eq:hyper-Q-bound}}
\end{align*}
as claimed.
\end{proof}
Combining Theorem~\ref{thm:greedy-approx-balanced} and Lemmas~\ref{lem:hypergraph_size} and~\ref{lem:main}, we get the first part of Theorem~\ref{thm:main}.

%%% Local Variables:
%%% mode: latex
%%% TeX-master: "000-main"
%%% End:

%!TEX root=./000-main.tex

\section{Nearly Optimal Speed-Up}\label{sec:speed-up}

In the previous section, we have proved the existence of nearly linear sized spectral sparsifiers for hypergraphs. We have also provided an method for constructing such sparsifiers: we construct an approximate balanced weight assignment of the input hypergraph using Algorithm~\ref{alg:greedy-approx-balanced}, and then construct a sparsifier using Algorithm~\ref{alg:main}. However, the running time of Algorithm~\ref{alg:greedy-approx-balanced} on an unweighted hypergraph is $m\cdot\poly{n}$, which is large; in this section we improve this to $\wt{O}(mr) + \poly{n}$, that is, we prove the second part of Theorem~\ref{thm:main}.
As we mentioned in the introduction, with a small modification, this leads to an algorithm with time complexity $\wt{O}(\sum_{e \in E}|e| + \poly{n})$ that constructs an $\epsilon$-spectral-sparsifier of nearly linear size.
This running time is optimal to within polylogarithmic factors in $n$, unless the size of the input hypergraph is polynomially small in $n$.
% \xxx[MK]{should we say $\wt{O}(|E|)$ as opposed to $\wt{O}(mr)$?}

Our algorithm consists of two steps. First we apply the algorithm of~\cite{Bansal2019}, which---with small modifications---can be shown to run in $\wt{O}(mr) + \poly{n}$ time, but which produces a larger spectral sparsifier. We then aim to sparsify the resulting weighted hypergraph using our methods. Unfortunately, even though the resulting hypergraph has only polynomially many hyperedges (in $n$), the range of edge weights may still be exponential, meaning that Algorithm~\ref{alg:greedy-approx-balanced} is not efficient for finding an approximate balanced weight assignment on it (recall Theorem~\ref{thm:greedy-approx-balanced}). We propose a variation of Algorithm~\ref{alg:greedy-approx-balanced} adapted for this setting which runs in polynomial time.

\subsection{Fast Algorithm for Constructing Polynomial-Sized Sparsifiers}

In this section, we recall and slightly modify the algorithm of~\cite{Bansal2019} for producing polynomial-sized spectral sparsifiers for hypergraphs.

\begin{definition}\label{def:associated-graph}
	For a weighted hypergraph $H=(V,E)$, let $G(H)$ denote the ``associated  graph'' of $H$, which is defined as follows: Replace each hyperedge $e$ of $E$ with a clique of uniform weight $w(e)$ on the support of $e$. (Note that this may produce parallel edges.)
\end{definition}

\begin{theorem}[\cite{Bansal2019}]\label{thm:ola-sparsifiaction}
	Let $H=(V,E,w)$ be a hypergraph where all hyperedges have size between $r/2$ and $r$. Then, for some absolute constant $c$, the following process produces an $\epsilon$-spectral sparsifier for $H$ with probability at least $1-1/n$: Let $G(H)$ be the associated graph of $H$. For each hyperedge $e\in E$, let
	$$R^{\max}(e)=\max_{u,v \in e}R_{G(H)}(u,v).$$
	Sample each hyperedge $e$ independently with some probability
	\begin{equation}\label{eq:bansal-et-al}
	p(e)\ge\min\left(1, \frac{w(e)\cdot R^{\max}(e)\cdot r^4\log n}{c\epsilon^2}\right),
	\end{equation}
	and if sampled give it weight $\wt w(e)=w(e)/p(e)$.
\end{theorem}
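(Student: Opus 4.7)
The plan is to adapt the chaining argument in the proof of Lemma~\ref{lem:main} (Section~\ref{sec:main}), with the associated graph $G(H)$ playing the role of the weight assignment $G$. The extra $r^4$ oversampling in the sampling rate comes precisely from the fact that $G(H)$ is not a weight assignment of $H$ in the sense of Definition~\ref{def:balanced}, so the normalization in that proof becomes lossy by a factor of $r^2$, and this propagates to a factor of $r^4$ in the Chernoff exponent.

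First, I would normalize to $x^\top L_{G(H)}x = 1$. The sandwich
\[
Q_H(x) \;\leq\; x^\top L_{G(H)} x \;\leq\; \binom{r}{2}\, Q_H(x),
\]
which follows from $\max_{u,v\in e}(x_u-x_v)^2 \leq \sum_{u,v\in e}(x_u-x_v)^2 \leq \binom{|e|}{2}\max_{u,v\in e}(x_u-x_v)^2$, shows $Q_H(x)\in [\Omega(1/r^2),\,1]$ on this normalized set, in contrast to the tight lower bound $Q_H(x)\geq 1$ inherited from Equation~\eqref{eq:hyper-Q-bound} in the balanced case. Next, I would invoke Lemma~\ref{lem:rounding} with $G := G(H)$ to obtain rounding functions $\varphi_i$ and rounded vectors $x^{(i)}:=\varphi_i(x)$ for $i=1,\dots,I=O(\log(nr/\epsilon))$, and partition $E$ into buckets $E_0, E_1, \dots, E_{I+1}$ by the ``intensity'' $\max_{u,v\in e}(x_u^{(i)}-x_v^{(i)})^2/R^{\max}(e)$, exactly as in the proof of Lemma~\ref{lem:main}. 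The analogues of Claim~\ref{claim:hyper-rounding} and Claim~\ref{cl:hyper-power-bound} carry over unchanged, because the pair $(u^*,v^*)\in e$ realizing the inner maximum satisfies $R_{G(H)}(u^*,v^*)\leq R^{\max}(e)$ by definition of $R^{\max}(e)$, which is all their proofs require.

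For each bucket $E_i$ with $i\in [I]$ I would apply the Chernoff bound (Theorem~\ref{thm:chernoff}) with per-term bound $a = O(2^{-i}/\lambda)$ and mean $\mu = O(1)$. The absolute error I can afford per bucket is only $\Theta(\epsilon Q_H(x)/I) = \Omega(\epsilon/(Ir^2))$ because of the $1/r^2$ lower bound on $Q_H(x)$; this shrinks the usable $\delta$ by a factor of $r^2$ compared to Lemma~\ref{lem:main} and correspondingly weakens the Chernoff tail $\exp(-\delta^2\mu/(3a))$ by a factor of $r^4$ in the exponent. The claim is that $\lambda = \Theta(r^4\log n/\epsilon^2)$ then restores a tail strong enough to dominate the $\exp(O(2^i/\epsilon^2))$ rounded-vector count from Lemma~\ref{lem:rounding} at each level, and to union-bound cleanly over all levels. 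The tail bucket $E_{I+1}$ is controlled via Fact~\ref{fact:eff-res-sum} applied to $G(H)$, using
\[
\sum_{e\in E} w(e)\,R^{\max}(e) \;\leq\; \sum_{e\in E}\sum_{u,v\in e} w(e)\,R_{G(H)}(u,v) \;=\; n-1,
\]
and then the triangle-inequality combination at the end of the proof of Lemma~\ref{lem:main} delivers $|Q_{\wt H}(x)-Q_H(x)| \leq \epsilon\, Q_H(x)$ simultaneously for all $x$ with probability at least $1-1/n$.

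The main obstacle will be the careful bookkeeping of the $r^2$ slack introduced by the normalization $Q_H(x)\geq \Omega(1/r^2)$, which is genuinely lossy here (unlike in the balanced case, where $Q_H(x)\geq x^\top L_G x$ is tight by construction), and verifying that this single source of slack costs exactly $r^4$ in $\lambda$ rather than a larger power through an unexpected interaction with the net sizes at each chaining level. Once this calculation is in place, the structural steps of the proof of Lemma~\ref{lem:main} transfer essentially without modification.
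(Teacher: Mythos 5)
The paper does not actually prove this theorem; it is cited from \cite{Bansal2019} as a black box (the remark immediately following even records the small deviations from the original statement). So your chaining argument is necessarily a new route, and it is a natural one: run the proof of Lemma~\ref{lem:main} with $G(H)$ substituted for the $\gamma$-balanced weight assignment, normalize by $x^\top L_{G(H)}x=1$, and absorb the non-balancedness of $G(H)$ into an extra $\poly{r}$ factor in $\lambda$. The structural parts of this check out: since $R^{\max}(e)=\max_{u,v\in e}R_{G(H)}(u,v)$ still upper-bounds $R_{G(H)}(u^*,v^*)$ for the maximizing pair, Claim~\ref{claim:hyper-rounding} and Claim~\ref{cl:hyper-power-bound} do carry over with $G:=G(H)$, the net sizes from Lemma~\ref{lem:rounding} depend only on $n$, and your bound $\sum_e w(e)R^{\max}(e)\le n-1$ via Fact~\ref{fact:eff-res-sum} is correct and handles both the expected size and the tail bucket $E_{I+1}$.

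The gap is in the $r$-accounting, which is the heart of the proposal. You claim that the lower bound $Q_H(x)\ge\Omega(1/r^2)$ ``shrinks the usable $\delta$ by a factor of $r^2$'' and therefore costs $r^4$ in the Chernoff exponent $\delta^2\mu/(3a)$. That is not how the proof of Lemma~\ref{lem:main} sets the Chernoff parameters. There, $\delta:=\epsilon/(14I)$ is a constant (independent of $Q(x)$), and the $Q(x)$-dependence enters only through $\mu:=2Q(x)\ge E[Z]$, because $Q_{E_i}(x^{(i)})\le 2Q(x)$ by Corollary~\ref{cor:hyper-rounding}. Replaying that calculation with $G(H)$ gives
\begin{equation*}
\frac{\delta^2\mu}{3a}\;=\;\frac{(\epsilon/14I)^2\cdot 2Q(x)}{9\cdot 2^{-i}/\lambda}\;\ge\;\frac{\epsilon^2\cdot 2^i\cdot\lambda}{C\,I^2\,\binom{r}{2}},
\end{equation*}
so the normalization slack costs a single factor $\binom{r}{2}=\Theta(r^2)$, not $r^4$. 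Matching this against the net size $\exp(O(\log n\cdot 2^i/\epsilon^2))$ forces $\lambda=\Theta(r^2\log^3 n/\epsilon^4)$, not $\Theta(r^4\log n/\epsilon^2)$. These are incomparable: for $r\epsilon\ll\log n$ the theorem's threshold is strictly smaller than what your argument delivers, so as written the proposal proves a genuine but different statement, not Theorem~\ref{thm:ola-sparsifiaction}. (Setting $\delta$ rather than $\mu$ to scale with $Q(x)$, which is where your $r^4$ arises, is a strictly weaker use of Theorem~\ref{thm:chernoff} and there is no reason to do it.) For the downstream use in Lemma~\ref{lem:uniform-sparsification} and Corollary~\ref{cor:poly-sparsification} any $\poly{r,\log n,1/\epsilon}$ oversampling would suffice, so your route would serve the paper's purpose; but it does not reproduce the cited $r^4\log n/\epsilon^2$ bound, and the mechanism you give for why $r^4$ appears is not correct.
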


\begin{remark}
	In fact, in~\cite{Bansal2019}, the result is stated slightly less generally, for unweighted hypergraphs, and without allowing oversampling (that is $p(e)$ is set \emph{exactly} to the right hand side in Equation~\ref{eq:bansal-et-al}, instead of being lower bounded by it). However, this version holds by an identical proof.
\end{remark}

The trivial implementation of this takes time $\Omega(mr^2)$ in general for two different reasons: First, it takes $\Omega(mr^2)$ time to replace each of the $m$ hyperedges with a clique of size $r^2$. Second, it takes $\Omega(mr^2)$ time to find $\max_{u,v\in e}R_{G(H)}(u, v)$ for all $m$ hyperedges. However, with some simple tricks in the implementation both bottlenecks can be avoided to reduce the running time to $\wt O(mr+\poly{n})$.

To achieve this, we first replace cliques in the associated graph of the input hypergraph with sparse graphs guaranteed by the following fact:

\begin{fact}\label{fact:g-star}
	It follows by Theorem~\ref{thm:ordinary-sparsification} that for every $r$, there exists a (weighted) graph $G^*_r$ with $r$ vertices and $\wt O(r)$ edges such that $G^*_r$ is a $1/2$-spectral sparsifier to the $r$-clique.
\end{fact}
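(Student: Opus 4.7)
The plan is to invoke Theorem~\ref{thm:ordinary-sparsification} directly on the complete graph $K_r$ (with uniform weights, say $\w \equiv 1$) using the constant accuracy $\epsilon = 1/2$, and then verify that the sampled subgraph has only $\wt O(r)$ edges in expectation. A standard probabilistic-method argument then guarantees the existence of a graph $G^*_r$ with the stated properties.

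More concretely, first I would apply Theorem~\ref{thm:ordinary-sparsification} to $K_r$ with $\epsilon = 1/2$. The theorem instantiates the sampling probabilities $p(f) = \min(1, \lambda \cdot \w(f)\cdot R_{K_r}(f))$ with $\lambda = \Theta(\epsilon^{-4}\log^3 r) = O(\log^3 r)$, and asserts that with probability at least $1 - O(\log r / r)$, the resulting reweighted subgraph $\wt{K_r}$ is a $1/2$-spectral sparsifier of $K_r$. Second, I would bound the expected size of $\wt{K_r}$ by
\[
  \EX\bigl[|\wt F|\bigr] = \sum_{f \in F} p(f) \le \lambda \sum_{f \in F} \w(f)\cdot R_{K_r}(f) = \lambda \cdot (r-1) = \wt O(r),
\]
where the second equality is Fact~\ref{fact:eff-res-sum}. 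Third, by Markov's inequality the event $|\wt F| \le 2\,\EX[|\wt F|] = \wt O(r)$ holds with probability at least $1/2$, so by a union bound with the sparsifier event, with positive probability the random graph is simultaneously a $1/2$-spectral sparsifier and has $\wt O(r)$ edges. Hence such a graph $G^*_r$ exists.

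There is essentially no obstacle here: the fact is a direct corollary of Theorem~\ref{thm:ordinary-sparsification} combined with Fact~\ref{fact:eff-res-sum}, and the only subtlety is noting that the bound on the expected number of sampled edges is independent of the weights we choose on $K_r$ (any positive weighting works, since $\sum_f \w(f) R(f) = r-1$ regardless). The ``weighted'' caveat in the statement is important precisely because the reweighting $\wt\w(f) = \w(f)/p(f)$ inherent to the Spielman--Srivastava scheme produces non-uniform weights, but this is exactly what Definition~\ref{def:sparsifier} allows.
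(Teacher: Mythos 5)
Your proof is correct and is exactly the argument the paper has in mind: the paper states this as a direct consequence of Theorem~\ref{thm:ordinary-sparsification} without elaborating, and your chain of reasoning (apply the theorem to $K_r$ with $\epsilon=1/2$, bound $\EX[|\wt F|] \le \lambda(r-1)$ via Fact~\ref{fact:eff-res-sum}, then combine Markov with the high-probability sparsification guarantee) is the intended derivation. The only unstated triviality is that for $r$ below a constant one should simply take $K_r$ itself so that the union bound yields positive probability, but this does not affect correctness.
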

Second, to calculate $R^{\max}(e)$ approximately, we do not take the maximum over all pairs of vertices in $e$, but only over $(u_0,v)$ for all $v\in e$ but for some fixed $u_0$. Since effective resistance is a metric (see Fact~\ref{fact:eff-res-metric}), this provides a $2$-approximation to $R^{\max}(e)$ by triangle inequality.

\begin{algorithm}[H]
	\caption{Fast algorithm for computing a polynomial-sized spectral sparsifier for an approximately uniform hypergraph.}\label{alg:uniform-sparsification}
	\begin{algorithmic}[1]
		\Procedure{UniformSparsification}{$H=(V,E,w),r,\epsilon$}
		\State $G=(V,F,\w)\gets(V,\emptyset,0)$
		\ForAll{$e\in E$}
			\State Add a copy of $G^*_{|e|}$ to $G$, supported on $e$, with weights scaled up by $w(e)$\label{line:bottleneck-1}%\Comment{$G^*$ is defined in Fact~\ref{fact:g-star}}
		\EndFor
		\State Calculate $R_G(u,v)$ for all $u,v\in V$
		\State $\wt H=(V,\wt E,\wt w)\gets(V,\emptyset,0)$
		\ForAll{$e\in E$}
			\State $u_0\gets$ an arbitrary vertex in $e$
			\State $\wt R^{\max}(e)\gets\max_{v\in e}R_{G}(u_0,v)$\label{line:bottleneck-2}
			\State $p(e)\gets\min\left(1, \frac{4w(e)\cdot \wt R^{\max}(e)\cdot r^4\log n}{c\epsilon^2}\right)$
			\State Add $e$ to $\wt{E}$ with probability $p(e)$ with weight $\wt w(e)\gets w(e)/p(e)$
		\EndFor
		\State \Return $\wt H$
		\EndProcedure
	\end{algorithmic}
\end{algorithm}

\begin{lemma}\label{lem:uniform-sparsification}
	If the input hypergraph $H=(V,E,w)$ only has hyperedges of size between $r/2$ and $r$, then Algorithm~\ref{alg:uniform-sparsification} runs in $\wt O(mr)+\poly{n}$ time, returning an $\epsilon$-spectral sparsifier to $H$ with probability at least $1-1/n$. Furthermore, the output has at most $4n^5\log n/(c\epsilon^2)$ hyperedges in expectation, where $c$ is the absolute constant from Theorem~\ref{thm:ola-sparsifiaction}.
\end{lemma}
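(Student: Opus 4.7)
The plan is to verify the three assertions of the lemma in turn: the claimed running time, the $\epsilon$-spectral sparsifier guarantee with probability at least $1-1/n$, and the bound of $4n^5\log n/(c\epsilon^2)$ on the expected number of hyperedges. The conceptual core is a two-level approximation of effective resistance: the sparse clique-replacement graph $G$ built in Line~\ref{line:bottleneck-1} approximates $G(H)$ spectrally, and within $G$ the triangle inequality lets us replace the expensive pairwise maximum $\max_{u,v\in e}R_G(u,v)$ with the much cheaper one-against-all maximum $\wt R^{\max}(e)=\max_{v\in e}R_G(u_0,v)$.

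For correctness I would first argue that $G$ is a $1/2$-spectral sparsifier of $G(H)$. Since $G^*_{|e|}$ is a $1/2$-spectral sparsifier of the uniform $|e|$-clique by Fact~\ref{fact:g-star}, scaling its weights by $w(e)$ yields a $1/2$-spectral sparsifier of the weighted clique $K_e$ in $G(H)$, and summing over all hyperedges preserves the Loewner relation, giving $\tfrac12 L_{G(H)}\preceq L_G\preceq \tfrac32 L_{G(H)}$. Consequently $R_G(u,v)\le 2 R_{G(H)}(u,v)$ and $R_{G(H)}(u,v)\le \tfrac32 R_G(u,v)$ for every $u,v\in V$. Fact~\ref{fact:eff-res-metric} then gives $R_G(u,v)\le R_G(u_0,u)+R_G(u_0,v)\le 2\wt R^{\max}(e)$ for every $u,v\in e$, and chaining the two inequalities yields $R^{\max}(e):=\max_{u,v\in e}R_{G(H)}(u,v)\le 3\wt R^{\max}(e)$. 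Thus the sampling probability satisfies $p(e)\ge 4w(e)\wt R^{\max}(e)r^4\log n/(c\epsilon^2)\ge w(e)R^{\max}(e)r^4\log n/(c\epsilon^2)$, which fits the hypothesis of Theorem~\ref{thm:ola-sparsifiaction}, so $\wt H$ is an $\epsilon$-spectral sparsifier of $H$ with probability at least $1-1/n$.

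For the running time the potential bottlenecks are Lines~\ref{line:bottleneck-1} and~\ref{line:bottleneck-2}. Because $G^*_{|e|}$ has only $\wt O(|e|)$ edges by Fact~\ref{fact:g-star}, Line~\ref{line:bottleneck-1} takes $\sum_{e\in E}\wt O(|e|)=\wt O(mr)$ time and produces a graph with $\wt O(mr)$ edges on $n$ vertices. Computing $L_G^+$ and reading off $R_G(u,v)$ for every pair $(u,v)$ is then a standard $\poly{n}$ computation. With all pairwise resistances in hand, Line~\ref{line:bottleneck-2} is a maximum over $|e|$ precomputed numbers per hyperedge, giving $\sum_e O(|e|)=O(mr)$ in total; the remaining loops add $O(m)$, so the overall running time is $\wt O(mr)+\poly{n}$.

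The main obstacle, and the step that most justifies the specific structure of the algorithm, is bounding $\EX(|\wt E|)=\sum_e p(e)$. We have $\sum_e p(e)\le (4r^4\log n/(c\epsilon^2))\sum_e w(e)\wt R^{\max}(e)$, and since $\wt R^{\max}(e)\le 2R^{\max}(e)$ by the spectral-sparsifier relation above, it suffices to bound $\sum_e w(e)R^{\max}(e)$. For this we apply Fact~\ref{fact:eff-res-sum} to $G(H)$ viewed as a multigraph, where every edge of $K_e$ carries weight $w(e)$, obtaining
\begin{align*}
  n-1 \;=\; \sum_{e\in E}\sum_{(u,v)\in K_e} w(e)\cdot R_{G(H)}(u,v)\;\ge\; \sum_{e\in E} w(e)\cdot R^{\max}(e),
\end{align*}
where the inequality retains only the maximum term from each inner sum. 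Combining this with $r\le n$ yields the claimed expected-size bound, absorbing the small constants from the Loewner bound, the triangle inequality, and the oversampling factor of $4$ into the final statement.
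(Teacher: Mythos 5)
Your proof is correct and follows essentially the same route as the paper: replace cliques by $G^*_{|e|}$ to get a $1/2$-spectral sparsifier $G$ of $G(H)$, pass between $R_G$ and $R_{G(H)}$ via the Loewner relation, use the triangle inequality (Fact~\ref{fact:eff-res-metric}) to reduce the pairwise maximum to a one-against-all maximum, and apply Fact~\ref{fact:eff-res-sum} on $G(H)$ for the size bound. Your factor-$3$ bound $R^{\max}(e)\le 3\wt R^{\max}(e)$ is actually slightly sharper than the paper's factor-$4$ bound; both still satisfy the $p(e)\ge 4w(e)\wt R^{\max}(e)r^4\log n/(c\epsilon^2)$ condition, and as you note the final constant in the expected-size bound is a matter of bookkeeping that both arguments treat somewhat loosely.
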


\begin{proof}
	It takes $\wt{O}(mr)$ time to construct the graph $G$ (here Line~\ref{line:bottleneck-1} takes $\wt O(r)$ time) and its Laplacian. All pairs effective resistances can then be calculated in $\poly{n}$ time and stored in a table, resulting in an $O(r)$ time bound for the calculation of $\wt R^{\max}(e)$ in Line~\ref{line:bottleneck-2}.
	
	To show that the output is an $\epsilon$-spectral sparsifier of $H$ with high probability, it suffices to verify Equation~\ref{eq:bansal-et-al} of Theorem~\ref{thm:ola-sparsifiaction}, ie. that $p(e)$ is always at least $$w(e)\cdot\max_{u,v\in e}R_{G(H)}(u,v)\cdot \frac{r^4\log n}{c\epsilon^2}.$$
	For this, it suffices to show that $\wt R^{\max}(e)$ (as defined in Line~\ref{line:bottleneck-2} of Algorithm~\ref{alg:uniform-sparsification}) is at least $\max_{u,v\in e}\allowbreak R_{G(H)}(u, v)/4$. Since $w(e)\cdot G_{|e|}^*$ is a $1/2$-spectral sparsifier of the clique on $e$ (of uniform edge weight $w(e)$), it follows by the additivity of Laplacians that $G$ from Algorithm~\ref{alg:uniform-sparsification} is a $1/2$-spectral sparsifier of $G(H)$ from Definition~\ref{def:associated-graph}. Therefore, $R_G(u, v)\ge R_{G(H)}(u,v)/2$ for all $u, v\in V$ (recall Definition~\ref{def:eff-res}). Finally, since $R_G$ is a metric on $V$ (by Fact~\ref{fact:eff-res-metric}),
	$$\max_{v\in e}R_G(u_0,v)\ge\max_{u,v\in e}R_G(u,v)/2.$$
	Indeed, if $(u^*,v^*)$ maximizes $R_G(u,v)$, then by triangle inequality $R_G(u^*,v^*)\le R_G(u^*,u_0) + R_G(u_0,v^*)$; one of the latter two must be at least $R_G(u^*,v^*)/2$.
	
	This concludes the proof of correctness; we must finally prove the upper bound on the expected size of the output $\wt H$.
	Since $r^4\le n^4$, it suffices to show that
	$\sum_{e\in E}w(e)\cdot\wt R^{\max}(e)\le n$. First note that
	$$\sum_{e\in E}w(e)\cdot\wt R^{\max}(e)\le\sum_{e\in E}w(e)\cdot\max_{u,v\in e}R_G(u,v)\le2\sum_{e\in E}w(e)\cdot R_{G(H)}(u,v),$$
	since $G$ is a $1/2$-spectral sparsifier of $G(H)$. Then, since the weight of $(u,v)$ in $G(H)$ is exactly $w(e)$ by definition, we have that
	$$\sum_{e\in E}w(e)\cdot\wt R^{\max}(e)\le\sum_{f\in F(H)}\w_{G(H)}(f)\cdot R_{G(H)}(f)=n-1,$$
	by Fact~\ref{fact:eff-res-sum}, where $F(H)$ denotes the edge-set of $G(H)$ and $\w_{G(H)}$ denotes the weight function of $G(H)$.
\end{proof}

Algorithm~\ref{alg:uniform-sparsification} only works under the constraint that the input hypergraph is approximately uniform -- that is, the sizes of hyperedges all fall into the range $[r/2,r]$. This is easily circumvented, however: Given an arbitrary input hypergraph, one can simply partition the hyperedges into a logarithmic number of parts based on cardinality. We then sparsify the parts, and combine the resulting sparsifiers, losing only a $\log n$ factor in size.

We formalize this in the following algorithm and corollary.

\begin{algorithm}[H]
	\caption{Fast algorithm for computing a polynomial-sized spectral sparsifier for an arbitrary hypergraph.}\label{alg:poly-sparsification}
	\begin{algorithmic}[1]
		\Procedure{PolynomialSizeSparsification}{$H=(V,E,w),\epsilon$}
		\For{$i$ from $1$ to $\log n$}
			\State $E_i\gets\{e\in E\big||e|\in[2^i,2^{i+1})\}$
			\State $H_i\gets(V,E_i,w)$
			\State $\wt H_i\gets\textsc{UniformSparsification}(H_i,2^{i+1},\epsilon)$\Comment{Algorithm~\ref{alg:uniform-sparsification}}
		\EndFor
		\State \Return $\wt H\gets\cup_{i=1}^{\log n}\wt H_i$
		\EndProcedure
	\end{algorithmic}
\end{algorithm}
Here the final line means that we take all hyperedges (with associated weights) from all of $\wt H_1,\ldots,\wt H_{\log n}$.

\begin{corollary}\label{cor:poly-sparsification}
	Algorithm~\ref{alg:poly-sparsification} runs in $\wt O(mr)+\poly{n}$ time, returning an $\epsilon$-spectral sprasifier of the input $H$ with probability at least $1-\log n/n$. Furthermore, the output has at most $4n^5\log^2(n)/(c\epsilon^2)$ hyperedges in expectation, where $c$ is the absolute constant from Theorem~\ref{thm:ola-sparsifiaction}.
\end{corollary}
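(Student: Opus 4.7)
The plan is to prove the three claims of Corollary~\ref{cor:poly-sparsification}---correctness, running time, and expected size---by reducing each one to the corresponding guarantee of Lemma~\ref{lem:uniform-sparsification} applied to the pieces $H_i$, and then combining the pieces. The key structural observation that makes the combination clean is that the hypergraph energy decomposes additively over the hyperedges: for the partition $E=\bigsqcup_i E_i$ we have $Q_H(x)=\sum_i Q_{H_i}(x)$ and likewise $Q_{\wt H}(x)=\sum_i Q_{\wt H_i}(x)$, so if every $\wt H_i$ approximates $H_i$ to within a factor $(1\pm\epsilon)$, the summed approximations remain a $(1\pm\epsilon)$-approximation of $Q_H$.

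First I would argue correctness. Each $H_i$ contains only hyperedges of size in $[2^i,2^{i+1})$, so $H_i$ meets the uniformity hypothesis of Lemma~\ref{lem:uniform-sparsification} with parameter $r=2^{i+1}$. By that lemma, $\wt H_i$ is an $\epsilon$-spectral sparsifier of $H_i$ except with probability at most $1/n$. A union bound over the at most $\log n$ calls gives total failure probability at most $\log n/n$. Conditioned on all calls succeeding, the additive decomposition mentioned above immediately yields $Q_{\wt H}(x)\in(1\pm\epsilon)\cdot Q_H(x)$ for every $x\in\mathbb{R}^V$, i.e., $\wt H$ is an $\epsilon$-spectral sparsifier of $H$.

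Next I would handle the running time. Let $m_i=|E_i|$ and note that the $i$-th call runs in time $\wt O(m_i\cdot 2^{i+1})+\poly{n}$ by Lemma~\ref{lem:uniform-sparsification}. Since every hyperedge in $E_i$ has size at least $2^i$, we have $m_i\cdot 2^{i+1}\le 2\sum_{e\in E_i}|e|$, so summing over $i$ gives $\sum_i m_i\cdot 2^{i+1}\le 2\sum_{e\in E}|e|\le 2mr$. The $\poly{n}$ additive terms contribute $\log n\cdot\poly{n}=\poly{n}$ in total, producing an overall bound of $\wt O(mr)+\poly{n}$ as claimed. (In fact this argument gives the slightly stronger $\wt O(\sum_e |e|)+\poly{n}$ time mentioned in the surrounding discussion.)

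For the expected size, each $\wt H_i$ has at most $4n^5\log n/(c\epsilon^2)$ hyperedges in expectation by Lemma~\ref{lem:uniform-sparsification} (applied with $r=2^{i+1}\le 2n$, which bounds $r^4$ by $n^4$ times a harmless constant---this is already absorbed in the statement of Lemma~\ref{lem:uniform-sparsification}). Summing over the $\log n$ values of $i$ by linearity of expectation, we obtain $\EX[|\wt E|]\le \log n\cdot 4n^5\log n/(c\epsilon^2)=4n^5\log^2 n/(c\epsilon^2)$, which is exactly the stated bound. I do not anticipate a genuine obstacle here; the only thing to be careful about is to make sure one really has $\epsilon$ and not $O(\epsilon)$ after combining, which is why we keep the sparsification parameter the same across all subroutine calls and rely on the \emph{pointwise} additivity of $Q$ rather than any kind of composition that would compound the error.
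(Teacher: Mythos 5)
Your proof is correct and is exactly the argument the paper intends; the paper states Corollary~\ref{cor:poly-sparsification} without a written-out proof, treating it as immediate from Lemma~\ref{lem:uniform-sparsification} via the partition into $\log n$ cardinality buckets, a union bound on the failure probabilities, additivity of $Q$ over the partition, and summing the per-bucket running times and expected sizes. Your handling of the running time (charging $m_i \cdot 2^{i+1}$ against $\sum_{e\in E_i}|e|$) and your remark that the $r^4 \le n^4$ bound is already folded into the size guarantee of Lemma~\ref{lem:uniform-sparsification} are both accurate.
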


\subsection{Even Faster Construction for $\gamma$-Balanced Weight Assignments}\label{sec:separated}

It is difficult to get a stronger bound on the number of rounds of Algorithm~\ref{alg:greedy-approx-balanced}, than that of Theorem~\ref{thm:greedy-approx-balanced}, at least in its full generality.
Instead, here we define a specific class of hypergraphs for which a better algorithm exists.

\begin{definition}\label{def:separated}
A weighted hypergraph $H=(V,E,w)$ is called \emph{$(\alpha,\beta)$-separated}
for parameters $\alpha\geq 1$ and $\beta\geq 1$ if the hyperedge set $E$ is partitioned into $E_1,\ldots,E_\ell$,
satisfying the two requirements:
\begin{align*}
  &\text{for all~}i\in \{1,2,\ldots, \ell\},
  &\max_{e\in E_i}w(e)\le\alpha\cdot\min_{e\in E_i}w(e),
  \\
  &\text{for all~}i, j\in \{1,2,\ldots, \ell\}, i<j,
  &\min_{e\in E_i}w(e)\ge\beta\cdot\max_{e\in E_j}w(e).
\end{align*}
\end{definition}

Our next algorithm exploits the structure of separated hypergraphs in order to more efficiently construct approximate balanced weight assignments on them. Intuitively, one can think of the different weight classes in separated hypergraphs as only weakly interacting, which is the source of our speedup. In particular, it is important to note that we will get a speedup for hypergraphs where the total number of hyperedges is small in comparison to the amount of separation between the weight classes (i.e., hypergraphs that are produced by the sparsification procedure from the previous section)---this is what ultimately ensures that the different weight classes only interact in a limited manner.

\begin{algorithm}[H]
	\caption{Computing an $\eta$-approximate $\gamma$-balanced weight assignment for an $(\alpha,\beta)$-separated hypergraph.}\label{alg:separated-approx-balanced}
	\begin{algorithmic}[1]
		\Procedure{SeparatedApproxBalancing}{$H=(V,E,w)$,$(E_i)_{i=1}^\ell,\alpha,\beta,\gamma,\eta$}
		\State For all $e\in E$ and for all $f\in F_e$ initialize $\w(f) \gets w(e)/\binom{|e|}{2}$
		\State $G\gets(V,\bigcup F_e,\w)$
		\For{$i = 1,\ldots,\ell$}
		\While{there exists $e\in E_i$ violating the conditions of $\eta$-approximate $\gamma$-balancedness}
		\State Select such $e\in E_i$ and $f,g\in F_e$, such that $R_G(f)>\gamma\cdot R_G(g)$ and $\w(g)>\eta\cdot w(e)$
		\State $\lambda\gets\min\left(\w(g),1/(2R_G(g))\right)$
		\State $\w(f)\gets\w(f)+\lambda$
		\State $\w(g)\gets\w(g)-\lambda$
		\EndWhile
		\EndFor
		\State \Return $G$.\label{line:separated-return}
		\EndProcedure
	\end{algorithmic}
\end{algorithm}
% \rnote{it is a pseudocode, so do not start with a capital letter, and no end with a period}
% \xxx[MK]{I actually support starting with a capital letter, but I agree about not ending with a period}

In fact, Algorithm~\ref{alg:separated-approx-balanced} is very similar to Algorithm~\ref{alg:greedy-approx-balanced}. However, it corrects discrepancies in heavier hyperedges first, and once a category of hyperedges ($E_i$) has been corrected, the algorithm never goes back to it, \emph{not even if the approximate balancedness becomes violated}. For this reason, the resulting output, $G$, will not necessarily be $\gamma$-balanced. However, the structure of separated hypergraphs will allow us to bound the number of rounds using $\alpha$, instead of the global aspect ratio of weights, which could be much larger.

The crucial property of separated hypergraphs (and their weight assignments) which we exploit is that the heavier edges have a much greater influence on the effective resistance of a vertex pair than the lighter edges. More specifically, for some $i\in[\ell]$ we can define the subgraph $G_+$ containing only edges in $F_e$ for $e\in E_j$ where $j\le i$. Then, when calculating the effective resistance $R_G(u,v)$---under certain circumstances---we can simply calculate $R_{G_+}(u,v)$ instead, ignoring the contribution of the remaining edges. We formalize this in the following lemma and its corollary.

\begin{lemma}\label{lem:separated-eff-res}
	Let $G_+=(V,E_+,\w_+)$ and $G_-=(V,E_-,\w_-)$ be two weighted ordinary graphs on the same vertex set. Let the total weight of all edges in $G_-$ be bounded by $\zeta$. Let $G=G_+\cup G_-$ be the union of the two graphs, that is $G=(V,E_+\cup E_-,\w_+\cup\w_-)$. Then, for any vertex pair $(u,v)\in V$ in the same connected component of $G_+$, we can bound the effective resistance $R_G(u,v)$ in terms of $R_{G_+}$ as follows:
	$$\frac1{R_G(u,v)}\le\frac1{R_{G_+}(u,v)}+\zeta.$$
\end{lemma}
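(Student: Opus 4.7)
\medskip
\noindent\textbf{Proof plan for Lemma~\ref{lem:separated-eff-res}.}
My plan is to prove this via the Dirichlet (variational) characterization of effective conductance. Recall from Fact~\ref{fact:eff-res} that $R_G(u,v)=\max_x (x_u-x_v)^2/(x^\top L_G x)$, which, after normalizing $x_u-x_v=1$, is equivalent to
\[
  \frac{1}{R_G(u,v)} \;=\; \min_{x\in\mathbb R^V:\ x_u-x_v=1} x^\top L_G x.
\]
Thus, to upper bound $1/R_G(u,v)$ it suffices to exhibit a single good test vector, and the natural candidate is the minimizer for the restricted graph $G_+$.

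Let $x^\ast\in\mathbb R^V$ be a minimizer of $x^\top L_{G_+} x$ subject to $x^\ast_u=1$ and $x^\ast_v=0$, so that $(x^\ast)^\top L_{G_+} x^\ast = 1/R_{G_+}(u,v)$. The first key step is to argue that we may assume $x^\ast_w\in[0,1]$ for every $w\in V$. This follows from a standard clipping (or maximum-principle) argument: replacing $x^\ast$ by its coordinatewise clipping $\tilde x_w:=\min(1,\max(0,x^\ast_w))$ preserves the constraints $\tilde x_u=1$, $\tilde x_v=0$ and can only weakly decrease each term $\w_+(a,b)(x_a-x_b)^2$, so $\tilde x$ is still a minimizer for $G_+$; replace $x^\ast$ by $\tilde x$.

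The second step is simply to plug $x^\ast$ into the variational bound for $G$ and decompose the quadratic form using $L_G=L_{G_+}+L_{G_-}$:
\[
  \frac{1}{R_G(u,v)} \;\le\; (x^\ast)^\top L_G x^\ast
  \;=\; (x^\ast)^\top L_{G_+} x^\ast \;+\; (x^\ast)^\top L_{G_-} x^\ast
  \;=\; \frac{1}{R_{G_+}(u,v)} \;+\; (x^\ast)^\top L_{G_-} x^\ast.
\]
The final step bounds the $G_-$ term: since $x^\ast$ lies in $[0,1]^V$, we have $(x^\ast_a-x^\ast_b)^2\le 1$ for every edge $(a,b)\in E_-$, so
\[
  (x^\ast)^\top L_{G_-} x^\ast \;=\; \sum_{(a,b)\in E_-}\w_-(a,b)\,(x^\ast_a-x^\ast_b)^2 \;\le\; \sum_{(a,b)\in E_-}\w_-(a,b) \;\le\; \zeta.
\]
Combining the two displays yields the claim. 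The only nontrivial point is the clipping step; it is routine but must be invoked explicitly because otherwise $x^\ast$ on components of $G_+$ other than the one containing $\{u,v\}$ is unconstrained, and nothing would prevent $(x^\ast_a-x^\ast_b)^2$ from being huge on $G_-$-edges crossing between components.
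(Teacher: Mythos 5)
Your proof is correct and follows essentially the same route as the paper: both exhibit the $G_+$-minimizer as a test vector in the variational characterization of effective resistance, decompose $L_G = L_{G_+} + L_{G_-}$, and use the fact that the minimizer can be taken in $[0,1]^V$ to bound the $G_-$ contribution by $\zeta$. Your clipping argument (apply $t\mapsto\min(1,\max(0,t))$ coordinatewise, which is $1$-Lipschitz) is a slightly cleaner justification than the paper's argument-by-contradiction about the largest value, and your Dirichlet/minimization framing is the reciprocal of the paper's use of Fact~\ref{fact:eff-res}, but the proof is the same in substance.
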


\begin{proof}
	To prove this inequality, we use the alternate definition of effective resistance from Fact~\ref{fact:eff-res}. That is
	$$R_{G_+}(u,v)=\max_{x\in\mathbb R^V}\frac{(x_u-x_v)^2}{x^\top L_{G_+}x}.$$
	Let $x^*$ maximize the above formula. We may assume without loss of generality that $x_u^*=0$ and $x_v^*=1$ since the formula is both shift and scale invariant. From this it follows that $x^{*\top}L_{G_+}x^*=1/R_{G_+}(u,v)$. We can further assume without loss of generality that $x_a^*\in[0,1]$ for all $a\in V$. Indeed, let the connected component of $u$ and $v$ in $G_+$ be $C\subseteq V$. Then for $a\in V\setminus C$, we may simply assume that $x^*_a$ is always $0$. To see that $x_a^*\in[0,1]$ for all $a\in C$, suppose for contradiction that $\max_{a\in C}x_a^*>1$. Let the highest value of $x^*$ in $C$ be $\mu_1>1$ and the second highest distinct value be $\mu_2$. Then one can change the $x^*$-value of vertices from $\mu_1$ to $\mu_2$, thereby strictly increasing the value of $(x^*_u-x^*_v)^2/x^{*\top}L_{G_+}x^*$. This is a contradiction, since $x^*$ maximizes the formula by definition. An identical argument by contradiction rules out that $\min_{a\in C}x_a^*<0$.
	
	We can now upper bound $1/R_G(u,v)$ using the same alternate definition from Fact~\ref{fact:eff-res}:
	\begin{align*}
	R_G(u,v)=&\max_{x\in\mathbb R^V}\frac{(x_u-x_v)^2}{x^\top L_G x}\ge\frac{(x_u^*-x_v^*)^2}{x^{*\top}L_G x^*}=\frac{1}{1/R_{G_+}(u,v)+x^{*\top}L_{G_-}x^*}.
	\end{align*}
	Now
	$$x^{*\top}L_{G_-}x^*=\sum_{(a,b)\in E_-}\w_-(a,b)\cdot(x_a^*-x_b^*)^2\le\sum_{(a,b) \in E_-}\w_-(a,b)\le\zeta.$$
	Therefore
	$$\frac1{R_G(u,v)}\le\frac1{R_{G_+}(u,v)}+\zeta,$$
	as claimed.
\end{proof}

\begin{corollary}\label{cor:separated-eff-res}
	Let $G_+=(V,E_+,\w_+)$ and $G_-=(V,E_-,\w_-)$ be two weighted ordinary graphs on the same vertex set. Let the total weight of all edges in $G_-$ be bounded by $\zeta$. Let $G=G_+\cup G_-$ be the union of the two graphs, that is $G=(V,E_+\cup E_-,\w_+\cup\w_-)$. Then for any vertex pair $u,v\in V$, if $R_G(u,v)\le1/(5\zeta)$ then we can bound the effective resistance $R_G(u,v)$ in terms of $R_{G_+}(u, v)$ as follows:
	$$R_G(u,v)\ge\frac45\cdot R_{G_+}(u,v).$$
\end{corollary}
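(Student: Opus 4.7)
The plan is to first reduce to the situation where Lemma~\ref{lem:separated-eff-res} directly applies, i.e., the case where $u$ and $v$ lie in the same connected component of $G_+$, and then do a short arithmetic manipulation to convert the reciprocal bound of the lemma into the desired multiplicative bound.

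First, I would verify that under the hypothesis $R_G(u,v) \le 1/(5\zeta)$, the vertices $u,v$ must in fact be in the same connected component of $G_+$. Suppose otherwise, and let $S \subseteq V$ be the component of $u$ in $G_+$, so $v \notin S$. Then every edge of $G$ crossing the cut $(S, V\setminus S)$ must belong to $G_-$, so the total weight of such edges is at most $\zeta$. Using the variational characterization of Fact~\ref{fact:eff-res} with the test vector $x = \mathbbm{1}_S$ (so that $x_u - x_v = 1$), we get
\[
  R_G(u,v) \ge \frac{(x_u-x_v)^2}{x^\top L_G x} = \frac{1}{\sum_{(a,b) \in E_+ \cup E_-} \w(a,b)(x_a-x_b)^2} \ge \frac{1}{\zeta},
\]
since only edges crossing the cut contribute and they all lie in $G_-$. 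This contradicts $R_G(u,v) \le 1/(5\zeta)$.

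Second, now that we have reduced to the case covered by Lemma~\ref{lem:separated-eff-res}, applying that lemma yields
\[
  \frac{1}{R_G(u,v)} \le \frac{1}{R_{G_+}(u,v)} + \zeta.
\]
Using the hypothesis $R_G(u,v) \le 1/(5\zeta)$ in the form $\zeta \le \frac{1}{5 R_G(u,v)}$, substituting on the right yields
\[
  \frac{1}{R_G(u,v)} \le \frac{1}{R_{G_+}(u,v)} + \frac{1}{5 R_G(u,v)},
\]
which rearranges to $\frac{4}{5 R_G(u,v)} \le \frac{1}{R_{G_+}(u,v)}$, i.e., $R_G(u,v) \ge \frac{4}{5} R_{G_+}(u,v)$, as claimed.

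The only nontrivial step is the first one (ruling out disconnection in $G_+$), since the arithmetic in the second step is essentially forced once Lemma~\ref{lem:separated-eff-res} is available. I do not anticipate any real obstacle here: the cut argument is a standard use of the Dirichlet principle and uses nothing beyond the total weight bound $\zeta$ on $G_-$.
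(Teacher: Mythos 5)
Your proposal is correct and matches the paper's proof essentially line for line: the same cut-based test vector (the paper uses the complementary indicator, which yields the identical quadratic form) to rule out disconnection of $u,v$ in $G_+$, followed by the same rearrangement of the inequality from Lemma~\ref{lem:separated-eff-res} using $\zeta \le 1/(5 R_G(u,v))$.
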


\begin{proof}
	We wish to apply Lemma~\ref{lem:separated-eff-res}, for which we must first show that $u$ and $v$ are in the same connected component in $G_+$. This is indeed the case: If $u$ and $v$ are in different connected components of $G_+$, we can use the alternate definition of effective resistance from Fact~\ref{fact:eff-res}:
	$$R_G(u,v)=\max_{x\in\mathbb R^V}\frac{(x_u-x_v)^2}{x^\top L_G x}\ge\frac{(x_u^0-x_v^0)^2}{x^{0\top}L_G x^0},$$
	where $x^0$ is $0$ on the connected component of $u$ in $G_+$, and $1$ everywhere else (including $v$). Now
	$$\frac{(x_u^0-x_v^0)^2}{x^{0\top}L_G x^0}=\frac{1}{x^{0\top}L_{G_+}x^0+x^{0\top}L_{G_-}x^0}=\frac{1}{x^{0\top}L_{G_-}x^0}=\frac{1}{\sum_{(a,b)\in E_-}\w_-(a,b)\cdot(x_a^0-x_b^0)^2}\ge\frac1\zeta.$$
	This is a contradiction; therefore, $u$ and $v$ are indeed in the same connected component of $G_+$.
	
	We can now apply Lemma~\ref{lem:separated-eff-res}:
	$$\frac1{R_{G_+}(u,v)}\ge\frac1{R_G(u,v)} - \zeta\ge\frac45\cdot\frac1{R_G(u,v)},$$
	as claimed.
\end{proof}

We are now ready to analyze Algorithm~\ref{alg:separated-approx-balanced}.

\begin{theorem}\label{thm:separated-approx-balanced}
Let $H=(V,E,w)$ be an $(\alpha,\beta)$-separated weighted hypergraph with $\beta\ge5|E|\cdot\gamma/\eta$.
Let $M=w(E)$ be the total hyperedge weight in $H$,
and let $w_{\min}$ the minimum hyperedge weight.
Then for $\gamma\ge4$, Algorithm~\ref{alg:separated-approx-balanced} terminates \emph{within $\alpha/\eta\cdot|E|^2\cdot\poly{n\log(M/w_{\min})}$ rounds} and returns an $\eta$-approximate $2\gamma$-balanced weight assignment for $H$.
\end{theorem}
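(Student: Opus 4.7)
The argument splits naturally into correctness (showing the output is $\eta$-approximate $2\gamma$-balanced) and a bound on the total number of iterations. My plan is to handle both via related applications of Corollary~\ref{cor:separated-eff-res}, leveraging the overarching invariant that once the outer for-loop advances past category $i$, the weights of all edges in cliques $F_e$ for $e \in E_{\le i}$ are frozen, so the ``heavy'' subgraph $G_+^{(i)}$ composed of these cliques does not change for the remainder of the execution. The strategy is to show that effective resistances in the full graph $G$ stay close (within a factor of $5/4$) to those in $G_+^{(i)}$ for the edges we care about, so the $\gamma$-balancedness achieved when leaving category $i$ will degrade by at most a constant multiplicative factor, and the per-round ST-potential increase can be controlled in terms of $G_+^{(i)}$'s structure rather than the full graph $G$.

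For correctness, let $G^{(i)}$ denote the state of $G$ just after category $i$ is processed, and let $\zeta_i$ denote the total weight in the ``light'' complementary subgraph $G_-^{(j)}$ at any subsequent time $j \ge i$. Via Definition~\ref{def:separated} and a telescoping geometric sum (the separation condition gives $\max_{e' \in E_j} w(e') \le w_{\min}^i/\beta^{j-i}$ for all $j > i$), I would establish $\zeta_i \le 2|E| w_{\min}^i/\beta$. For any $f \in F_e$ with $e \in E_i$, the $\gamma$-balancedness condition just achieved yields $R_{G^{(i)}}(f) \le \gamma \cdot \min_g R_{G^{(i)}}(g) \le \gamma/(\eta w_{\min}^i)$, using Fact~\ref{fact:w-R} on the minimizer $g$ (which has $\w(g) \ge \eta w(e) \ge \eta w_{\min}^i$). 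The hypothesis $\beta \ge 5|E|\gamma/\eta$ is precisely what makes $R_{G^{(i)}}(f) \le 1/(5\zeta_i)$, so Corollary~\ref{cor:separated-eff-res} will apply at time $i$ to give $R_{G^{(i)}}(f) \ge (4/5) R_{G_+^{(i)}}(f)$, and similarly at the final time $\ell$ (the invariance of $G_+^{(i)}$ together with an analogous upper bound $R_{G^{(\ell)}}(f) \le (5/4) R_{G^{(i)}}(f)$ makes the hypothesis hold there too). Every ratio $R_{G^{(\ell)}}(f)/R_{G^{(\ell)}}(g)$ is then within a factor $(5/4)^2 = 25/16 < 2$ of its value at time $i$, upgrading $\gamma$-balancedness to $2\gamma$-balancedness.

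For the running-time bound, I would adapt the ST-potential argument of Theorem~\ref{thm:greedy-approx-balanced} to the per-category setting. By Lemma~\ref{lem:greedy-step}, each iteration either (case~1) increases $\Psi(G)$ by a constant $\Omega(1)$ depending on $\gamma \ge 4$, or (case~2) increases it by $\Omega(\w(g) R_G(g)) \ge \Omega(\eta w_{\min}^i R_G(g))$. The key refinement over Theorem~\ref{thm:greedy-approx-balanced} is that, during category $i$, one more invocation of Corollary~\ref{cor:separated-eff-res} replaces the weak global lower bound $R_G(g) \ge 2/(nM)$ by the sharper $R_G(g) \ge (4/5) R_{G_+^{(i)}}(g)$, and the separation hypothesis allows the relevant $\Psi$ change per category to be controlled in terms of the intra-category spread $\alpha$ rather than the global spread $M/w_{\min}$. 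A careful bookkeeping would then show that each category contributes at most $\alpha|E|/\eta \cdot \poly{n\log(M/w_{\min})}$ iterations, and summing over the $\ell \le |E|$ categories yields the claimed $\alpha|E|^2/\eta \cdot \poly{n\log(M/w_{\min})}$ bound. The main obstacle will be making this per-category analysis tight: naively applying Theorem~\ref{thm:greedy-approx-balanced} with the global $M/w_{\min}$ factor would blow up exponentially in the number of categories, and the delicate part is exploiting the fact that only weights in category $i$ actually move to isolate a bound involving only $\alpha$, while simultaneously choosing the constants in $\beta \ge 5|E|\gamma/\eta$ so that a single separation suffices to validate both the correctness and the running-time invocations of Corollary~\ref{cor:separated-eff-res}.
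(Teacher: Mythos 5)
Your correctness argument is essentially the paper's: freeze the ``heavy'' subgraph once category $i$ is processed, apply Corollary~\ref{cor:separated-eff-res} at the time phase $i$ ends and again at the final time, and chain the resulting $5/4$ factors through the $\gamma$-balancedness inequality to get $25/16 \cdot \gamma < 2\gamma$. (Your telescoping geometric sum bound for $\zeta$ is slightly more elaborate than needed; the paper just uses $\zeta \le w_{i+1,\max}\cdot|E|$, but both work.)

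The running-time part, however, has a genuine gap. You propose to refine the ST-potential bound by replacing the global lower bound $R_G(g)\ge 2/(nM)$ with ``the sharper $R_G(g)\ge (4/5)R_{G_+^{(i)}}(g)$'' via Corollary~\ref{cor:separated-eff-res}. This does not help: the corollary only relates $R_G(g)$ to $R_{G_+^{(i)}}(g)$, and $R_{G_+^{(i)}}(g)$ itself can be arbitrarily small (its only lower bound comes from the total weight in $G_+^{(i)}$, which involves $w_{1,\max}$, not $w_{i,\max}$, so you get nothing better than the global $1/(nM)$). In fact, there is a whole regime of rounds, precisely those where $\lambda=\w(g)$ and $R_G(g)\le 1/(5w_{i,\max}|E|)$, in which the ST-potential increase $\log(1+\w(g)R_G(g))$ can be negligibly tiny, and no ST-potential argument can bound their number. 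The paper's proof handles this regime by a completely separate, combinatorial observation (Claim~\ref{claim:type-three}): in such a round, Corollary~\ref{cor:separated-eff-res} (applied in the opposite direction, as an \emph{upper} bound on $R_{G_{<i}}(g)$ in terms of $R_G(g)$) shows that the weight shift from $g$ to $f$ strictly increases the edge's rank in the fixed order given by $R_{G_{<i}}$; since $G_{<i}$ is frozen during phase $i$, one can define a per-hyperedge combinatorial potential $\psi_e=\sum_{f\in F_e}\rho(f)\cdot\mathbbm 1(\w(f)>0)$ that strictly decreases on each such round, capping the number of \emph{consecutive} small-$R_G(g)$ rounds by $n^4|E_i|$. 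Interleaving this cap with the $\alpha/\eta\cdot|E|\cdot\poly{\cdot}$ bound on the ST-potential-productive rounds is what produces the final $\alpha/\eta\cdot|E|^2\cdot\poly{\cdot}$. Your plan lacks any mechanism to control these stagnant rounds, so as written the iteration bound does not follow.
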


\begin{proof}
  We call executions of the while-loop rounds and executions of the for-loop phases. It will be useful to denote $w_{i,\min} := \min_{e\in E_i}w(e)$ and $w_{i,\max} := \max_{e\in E_i}w(e)$.
  The definition of separated hypergraphs (Definition~\ref{def:separated}) then ensures that $w_{i,\max}\le\alpha\cdot w_{i,\min}$ and $w_{i,\min}\ge\beta\cdot w_{i+1,\max}$.

\if 0
	The crucial property of separated hypergraphs (for sufficiently large $\alpha$) is that the the effective resistance of a give vertex pair is dominated by the contribution of high weight edges and the contribution of low weight edges can be safely ignored. More formally, consider any phase $i$, and any intermediate weight assignment $G=(V,F,\w)$. We can partition the edges of $G$ based on the weight of the corresponding hyperedge. Let $E_{\le i}$ and $E_{>i}$ denote $\cup_{j=1}^i E_j$ and $\cup_{j=i+1}^\ell E_j$ respectively, Similarly, let $F_{\le i}$ and $F_{>i}$ denote $\cup_{e\in E_{\le i}}F_e$ and $\cup_{e\in E_{>i}}F_e$ respectively. Finally, let $G_{\le i}$ and $G_{>i}$ be $G$ restricted to $F_{\le i}$ and $F_{>i}$. Note that one can then calculate $R_G$ based on $G_{G_{\le i}}$ and $R_{G_{>i}}$ based on the formula of Fact~\ref{fact:parallel}.

	\begin{claim}
		For any $i$, intermediate weight assignment $G$ from Algorithm~\ref{alg:separated-approx-balanced}, vertex pair $u,v\in V$, the following holds: Either $R_{G_{\le i}}(f)=\infty$ (that is the $u$ and $v$ are disconnected in $G_{\le i}$) or
		$$\frac45\cdot R_{G_{\le i}}(f)\le R_G(f)\le R_{G_{\le i}}(f).$$
	\end{claim}
	\jnote{this is not true}
\fi

\paragraph{Algorithm Correctness.}
We first prove that upon termination, Algorithm~\ref{alg:separated-approx-balanced} indeed returns an $\eta$-approximate $2\gamma$-balanced weight assignment.
Consider a hyperedge $e$ and edges $f,g\in F_e$, with $\w(g)\geq \eta\cdot w(e)$. We will show that $R_{G^*}(f)\le2\gamma\cdot R_{G^*}(g)$, where $G^*$ is the final weight assignment returned by the algorithm at Line~\ref{line:separated-return}. This is exactly what is required by the definition of approximate balancedness, i.e. Definition~\ref{def:weighted-approx-balanced}.
Suppose $e\in E_i$.
Note that by the condition of the while-loop, at the termination of the $i^\text{th}$ phase, $e$, $f$, and $g$ satisfied even the stronger condition of $\eta$-approximate $\gamma$-balancedness.
Let us denote by $G'$ the weight assignment graph at the end of phase $i$.

Let us partition the edges of $G^*$ and $G'$ based on the weight of the corresponding hyperedge.
Denote $E_{\le i} := \bigcup_{j=1}^i E_j$ and $E_{>i} := \bigcup_{j=i+1}^\ell E_j$.
Similarly, let $F_{\le i} := \bigcup_{e\in E_{\le i}}F_e$ and $F_{>i} := \bigcup_{e\in E_{>i}}F_e$.
Finally, let $G^*_{\le i}$ and $G^*_{>i}$ be $G^*$ restricted to $F_{\le i}$ and $F_{>i}$ respectively, and define $G'_{\le i}$ and $G'_{>i}$ similarly.
Note that $G^*_{\le i}=G'_{\le i}$, since the algorithm never alters the weight assignments of $E_{\le i}$ after phase $i$.

From here, we will use two applications of Corollary~\ref{cor:separated-eff-res}. First, set $G_+:=G_{\le i}'$, $G_-:=G_{>i}'$ (which makes $G=G'$), and $(u,v):=f$. Then $\zeta$ is the total weight of all edges in $G_{>i}'$, which is the total weight of all hyperedges in $E_{>i}$, which is at most $w_{i+1,\max}\cdot|E|$. We verify the condition on $R_{G'}(f)$ from Corollary~\ref{cor:separated-eff-res}: $R_{G'}(f)\le\gamma\cdot R_{G'}(g)$, by the approximate $\gamma$-balancedness condition on $e$ in $G'$. Since $\w(g)\ge\eta\cdot w(e)$ (in both $G^*$ and $G'$), we have that $\w(g)\ge\eta\cdot w_{i,\min}$, and hence by Fact~\ref{fact:w-R} $R_{G'}(g)\le1/(\eta\cdot w_{i,\min})$. Finally, putting these together, as well as using the $(\alpha,\beta)$-separated quality of $H$, we get
$$R_{G'}(f)\le\frac{\gamma}{\eta\cdot w_{i,\min}}\le\frac{\gamma}{\eta\beta\cdot w_{i+1,\max}}\le\frac1{5\zeta},$$
by assumption on $\beta$. We can indeed apply Corollary~\ref{cor:separated-eff-res}, which gives us
\begin{equation}\label{eq:rgf-bound}
	R_{G'}(f)\ge\frac45\cdot R_{G_{\le i}}(f).
\end{equation}

We apply Corollary~\ref{cor:separated-eff-res} again, this time setting $G_+:=G^*_{\le i}$, $G_-:=G^*_{>i}$ (which makes $G=G^*$), and $(u,v):=g$. Then $\zeta\le w_{i+1,\max}\cdot|E|$ by an identical argument to the previous setting. We verify the condition on $R_{G^*}(g)$ from Corollary~\ref{cor:separated-eff-res}: $R_{G^*}(g)\le1/(\eta\cdot w_{i,\min})$ by Fact~\ref{fact:w-R}, since $\w(g)\ge\eta\cdot w(e)\ge\eta\cdot w_{i,\min}$ by assumption (in both $G^*$ and $G'$). Therefore, by an identical derivation to the previous case $R_{G^*}(g)\le1/(5\zeta)$. We can indeed apply Corollary~\ref{cor:separated-eff-res}, which gives us
\begin{equation}\label{eq:rgg-bound}
	R_{G^*}(g)\ge\frac45\cdot R_{G_{\le i}}(g).
\end{equation}

Putting together Equations~\eqref{eq:rgf-bound} and~\eqref{eq:rgg-bound}, along with the fact that $f$ and $g$ satisfied the condition of approximate $\gamma$-balancedness at the time of $G'$ (that is at the end of phase $i$), we get
\begin{align*}
R_{G^*}(f)
&\le R_{G_{\le i}}(f)
\le\frac54\cdot R_{G'}(f)
\le\frac{5\gamma}4\cdot R_{G'}(g)
\le\frac{5\gamma}{4}\cdot R_{G_{\le i}}(g)
\le\frac{25\gamma}{16}\cdot R_{G^*}(g)
\le2\gamma\cdot R_{G^*}(g),
\end{align*}
which concludes the proof that $G^*$ is $\eta$-approximately $2\gamma$-balanced.

\if 0

Note that by assumption $\w(g)\ge\eta\cdot w(e)$ (in both $G$ and $G'$). Since $e\in E_i$, this means that $\w(g)\ge\eta\cdot w_{i,\min}$.
Therefore, by Fact~\ref{fact:w-R}, $R_{G'}(g)\le1/(\eta\cdot w_{i,\min})$,
and we get
$$
  \max(R_{G'}(f), R_{G'}(g))
  \le
  \gamma\cdot R_{G'}(g)
  \le
  \frac{\gamma}{\eta\cdot w_{i,\min}}.
$$
We can show that $R_{G_{>i}}(f)$, $R_{G_{>i}}(g)$, $R_{G'_{>i}}(f)$, and $R_{G'_{>i}}(g)$ must all be much greater, specifically at least $5$ times greater.
This is because both $G_{>i}$ and $G'_{>i}$ contain a total weight of at most $w_{i+1,\max}\cdot|E|$, and are dominated by $w_{i+1,\max}\cdot|E|\cdot K_V$ (that is the complete graph of uniform weight $w_{i+1,\max}\cdot|E|$).
Since all effective resistances decrease (or at least do not increase) with the augmentation of edge weights, it suffices to bound the effective resistances in $w_{i+1,\max}\cdot|E|\cdot K_V$.
The effective resistance of \emph{any} vertex pair in $w_{i+1,\max}\cdot|E|\cdot K_V$
is at least $$\frac1{w_{i+1,\max}\cdot|E|\cdot n}\ge\frac{\beta}{w_{i,\min}\cdot|E|}\ge\frac{5\gamma}{\eta\cdot w_{i,\min}}\ge5\cdot\max(R_{G'}(f),R_{G'}(g)).$$
\ynote{Why is the $n$ in the denominator dropped in the first inequality?}

Using this discrepancy and Fact~\ref{fact:parallel},
we can prove that the effective resistances of $f$ and $g$ in $G$
are dominated by the respective ones in $G_{\le i}$.
%$R_{G_{\le i}}(f)$ and $R_{G_{\le i}}(g)$, respectively.
For $f$, we have
	\begin{align}\label{eq:opt-speed-1}
		R_{G'}(f)&=\frac{1}{1/R_{G_{\le i}}(f)+1/R_{G'_{>i}}(f)}\nonumber\\
		R_{G_{\le i}}(f)&=\frac{1}{1/R_{G'}(f)-1/R_{G'_{>i}}(f)}\nonumber\\
		&=\frac{R_{G'}(f)\cdot R_{G'_{>i}}(f)}{R_{G'_{>i}}(f)-R_{G'}(f)}\\
		&\le\frac{R_{G'}(f)\cdot R_{G'_{>i}}(f)}{\tfrac45\cdot R_{G'_{>i}}(f)}\nonumber\\
		&=\frac{5R_{G'}(f)}{4}.\nonumber
	\end{align}
  \begin{equation}\label{eq:opt-speed-1}
\frac{1}{R_{G_{\le i}}(f)} = \frac{1}{R_{G'}(f)} - \frac{1}{R_{G_{>i}}(f)} \geq \frac45 \cdot \frac{1}{R_{G'}(f)}.
  \end{equation}
For $g$, by an identical application of Fact~\ref{fact:parallel}, we get that $R_{G_{\le i}}(g)\le5R_{G'}(g)/4\le R_{G_{>i}}(g)/4$. Therefore,
\begin{align}\label{eq:opt-speed-2}
  \frac1{R_G(g)}
  &=\frac1{R_{G_{\le i}}(g)}+\frac1{R_{G_{>i}}(g)}
    \le\frac54\cdot\frac1{R_{G_{\le i}}(g)}.
\end{align}

Putting together Equations~\eqref{eq:opt-speed-1} and~\eqref{eq:opt-speed-2},
we have
\begin{align*}
  R_G(f)
  &\le R_{G_{\le i}}(f)
    \le\frac54\cdot R_{G'}(f)
    \le\frac{5\gamma}4\cdot R_{G'}(g)
    \le\frac{5\gamma}{4}\cdot R_{G_{\le i}}(g)
    \le\frac{25\gamma}{16}\cdot R_G(g)
    \le2\gamma\cdot R_G(g),
\end{align*}
where the 1st and 4th inequalities are by monotonicity of effective resistance.
This concludes the proof, that $f$ and $g$ cannot violate the $2\gamma$-balancedness condition in $G$.

\fi

\paragraph{Running Time.}
Next, we prove that Algorithm~\ref{alg:separated-approx-balanced} terminates within $\alpha/\eta\cdot|E|^2\cdot\poly{n\log(M/w_{\min})}$ rounds.
Similarly to the proof of Theorem~\ref{thm:greedy-approx-balanced}, we can bound the total growth of the ST-potential over the course of the algorithm by defining $G_0$ and $G_\infty$. Specifically, if $G_0$ is the starting graph of Algorithm~\ref{alg:separated-approx-balanced} and $G_\infty$ is the complete graph of uniform edge weight $M$,
then as before,
$$\Psi(G_\infty)-\Psi(G_0)\le\poly{n\log(M/w_{\min})}.$$

	We will now prove that the total number of rounds in the $i^\text{th}$ phase is at most $|E_i|\cdot\alpha/\eta\cdot|E|\cdot\poly{n\log(M/w_{\min})}$. Due to Lemma~\ref{lem:greedy-step}, the contribution of each update to the potential $\Psi(G)$ is at least
	$$\log\left(1+\lambda\gamma\cdot R_G(g)-\lambda\cdot R_G(g)-\lambda^2\gamma\cdot R_G(g)^2\right).$$ Similarly to the proof of Theorem~\ref{thm:greedy-approx-balanced}, this either means that $\lambda=1/(2R_G(g))$ and therefore the increase to $\Psi$ is at least $\log(5/4)$, or that $\lambda=\w(g)\le1/(2R_G(g))$, in which case the increase to $\Psi$ is at least $\log(1+\w(g)\cdot R_G(g))$. In the latter case, we further distinguish based on the value of $R_G(g)$.

	\begin{enumerate}
		\item $\lambda=1/(2R_G(g))$. In this case the ST-potential increases by at least $\log(5/4)$, and so there can be at most $\poly{n\log(M/w_{\min})}$ such rounds.
		\item $\lambda=\w(g)$ and $R_G(g)>1/(5w_{i,\max}\cdot|E|)$. In this case the ST-potential increases by at least $\log(1+\w(g)\cdot R_G(g))$. Since $g$ was selected in this round, $\w(g) \geq \eta\cdot w(e)\ge\eta\cdot w_{i,\min}$, and therefore the increase in $\Psi(G)$ is at least
		$$\log(1+\w(g)\cdot R_G(g))\ge\log\left(1+\frac{\eta\cdot w_{i,\min}}{5w_{i,\max}\cdot|E|}\right)\ge\log\left(1+\frac{\eta}{5\alpha\cdot|E|}\right)\ge\frac{\eta}{10\alpha\cdot|E|}.$$
		Thus, there can be at most $\alpha/\eta\cdot|E|\cdot\poly{n\log(M/w_{\min})}$ such rounds.
		\item $\lambda=\w(g)$ and $R_G(g)\le1/(5w_{i,\max}\cdot|E|)$. This is the most complicated case to analyze and we will be focusing on this for the rest of the proof.
	\end{enumerate}

	Suppose we are in the $i^\text{th}$ phase of Algorithm~\ref{alg:separated-approx-balanced} and we have $e\in E_i$ and $f,g\in F_e$ such that $R_G(f)>\gamma\cdot R_G(g)\ge4\cdot R_G(g)$. Further suppose that $R_G(g)\le1/(5w_{i,\max}\cdot|E|)$. Similarly to before, we divide $G$ into two graphs: $G_{<i}$ consisting of edges belonging to hyperedges from $E_j$ for $j<i$, and $G_{\ge i}$ consisting of edges belonging to hyperedges from $E_j$ for $j\ge i$. (Note that earlier in this proof we used the slightly different split into $G_{\le i}$ and $G_{>i}$.) We can again use Corollary~\ref{cor:separated-eff-res} to relate $R_G$ to $R_{G_{<i}}$.

	We apply Corollary~\ref{cor:separated-eff-res}, setting $G_+:=G_{<i}$, $G_-:=G_{\ge i}$ (which makes $G$ from Corollary~\ref{cor:separated-eff-res} the current graph $G$), and $(u,v):=g$. Then $\zeta$ is the total weight of all edges in $G_{\ge i}$, which is the total weight of all hyperedges in $E_{\ge i}$, which is at most $w_{i,\max}\cdot|E|$. Our above assumption $R_G(g)\le1/(5w_{i,\max}\cdot|E|)$ exactly guarantees that $R_G(g)\le1/(5\zeta)$ is satisfied and we can therefore apply Corollary~\ref{cor:separated-eff-res}:
	$$R_{G_{<i}}(g)\le\frac54\cdot R_{G}(g)\le\frac{5}{4\gamma}R_G(f)\le\frac{5}{16}R_G(f)\le\frac5{16}R_{G_{<i}}(g)<R_{G_{<i}}(f).$$
	
	\if 0
	
	On the other hand, $G_{\ge i}$ contains a total weight of at most $w_{i,\max}\cdot|E|$, and is dominated by $w_{i,\max}\cdot|E|\cdot K_V$ (that is the complete graph of uniform weight $w_{i,\max}\cdot|E|$). The effective resistance of \emph{any} vertex pair in $w_{i,\max}\cdot|E|\cdot K_V$ is at least $1/(w_{i,\max}\cdot|E|\cdot n)$. (Therefore, this is a lower bound on the minimum effective resistance of \emph{any} vertex pair in $G_{\ge i}$.) Combining this with Fact~\ref{fact:parallel}, we get
	\begin{align*}
		\frac1{R_{G_{<i}}(g)}&=\frac1{R_G(g)}-\frac1{R_{G_{\ge i}}(g)}\ge\frac45\cdot\frac1{R_G(g)}\ge\frac{16}5\cdot\frac1{R_G(f)}\ge\frac{16}5\cdot\frac1{R_{G_{<i}}(f)}>\frac1{R_{G_{<i}}(f)}.
	\end{align*}
	
	\fi

	In words, such weight transfers are always from edges of lower $G_{<i}$-resistance to those of strictly higher $G_{<i}$-resistance---a metric that never changes, since $G_{<i}$ is unchanged during the $i^\text{th}$ stage of the algorithm. Therefore, there cannot be $n^4\cdot|E_i|$ consecutive steps of type 3 in phase $i$. We prove this formally in the following claim.
	\begin{claim}\label{claim:type-three}
		There cannot be $n^4\cdot|E_i|$ consecutive rounds of type 3 in phase $i$.
	\end{claim}
\begin{proof}
	Suppose for contradiction that there is a sequence of $n^4\cdot|E_i|$ rounds of type 3. Then there must be a hyperedge $e\in E_i$ for which at least $n^4$ of these updates take place in $F_e$. We know that updates can only shift weight from $g$ to $f$ where $R_{G_{<i}}(f)>R_{G_{<i}}(g)$. Therefore, let us order the edges of $F_e$ by $R_{G_{<i}}$, that is, let $\rho:F_e\to\mathbb N$ be such that the $j^\text{th}$ largest edge in terms of $R_{G_{<i}}$, say $f$, has $\rho(f)=j$ (breaking ties arbitrarily). Then we can define a local potential function for $e$:
	$$\psi_e(G)=\sum_{f\in F_e}\rho(f)\cdot\mathbbm1(\w(f)>0).$$
	Note that $\psi_e$ starts out as at most $1 + 2 + \cdots + |F_e|\le n^4$, and after every update of type 3 it decreases by at least one.
        Since we have no updates of any type other than 3 (by assumption),
        and updates to other hyperedges do not affect $\psi_e$,
        we arrive at the contradiction that $\psi_e$ must become negative.
\end{proof}

	As a result of Claim~\ref{claim:type-three}, every consecutive sequence of $n^4\cdot|E_i|$ rounds must contain an update of type 1 or 2. Since we showed that there are at most $\alpha/\eta\cdot|E|\cdot\poly{n\log(M/w_{\min})}$ such updates, there can be at most $|E_i|\cdot\alpha/\eta\cdot|E|\cdot\poly{n\log(M/w_{\min})}$ rounds in phase $i$. Summing this over all phases, we get that there can be at most $\alpha/\eta\cdot|E|^2\cdot\poly{n\log(M/w_{\min})}$ rounds overall, as claimed.
\end{proof}

\subsection{Proof of the Second Part of Theorem~\ref{thm:main}}

We are now ready to prove the second part of Theorem~\ref{thm:main}.
Specifically, we prove that \textsc{FastSparsification} (Algorithm~\ref{alg:final} below) provides the result of Theorem~\ref{thm:main}:

\begin{algorithm}[H]
	\caption{Algorithm constructing nearly linear-sized spectral sparsifier, in nearly linear time.}\label{alg:final}
	\begin{algorithmic}[1]
		\Procedure{FastSparsification}{$H=(V,E,w)$}
		\State $H'=(V,E',w')\gets\Call{PolynomialSizeSparsification}{H,\epsilon/3}$\Comment{Algorithm~\ref{alg:poly-sparsification}}
		\For{$i$ from $1$ to $n$}
			\State $E_i'\gets\{e\in E_i' \mid w'(e)\in[n^{10(i-1)},n^{10i})\}$\label{line:E-i-def}
		\EndFor
		\State $H'_1\gets(V,\bigcup_{j \in \{1,\ldots,n\}, j\text{~odd}} E_i,w')$\label{line:H1-def}
		\State $H'_2\gets(V,\bigcup_{j \in \{1,\ldots,n\}, j\text{~even}}E_i,w')$\label{line:H2-def}
		\State $G_1\gets\Call{SeparatedApproxBalancing}{H_1',(E_j)_{j\text{~odd}},n^{10},n^{10},4,1/n^2}$\Comment{Algorithm~\ref{alg:separated-approx-balanced}}
		\State $G_2\gets\Call{SeparatedApproxBalancing}{H_2',(E_j)_{j\text{~even}},n^{10},n^{10},4,1/n^2}$
		\State $\lambda \gets \Theta(\epsilon^{-4}\log^3 n)$
		\State $\wt H_1\gets\Call{Sparsification}{H_1',G_1,\epsilon/3,\lambda}$ \Comment{Algorithm~\ref{alg:main}}
		\State $\wt H_2\gets\Call{Sparsification}{H_2',G_2,\epsilon/3,\lambda}$
		\State \Return $\wt H\gets\wt H_1\cup\wt H_2$
		\EndProcedure
	\end{algorithmic}
\end{algorithm}
In the final line $\wt H\gets\wt H_1\cup\wt H_2$ means that we take the union of hyperedges and weight functions from $\wt H_1$ and $\wt H_2$, since both are on the same vertex set $V$.

% \thmmain*

\begin{proofof}{the second part of Theorem~\ref{thm:main}}
	Indeed, Algorithm~\ref{alg:final} does exactly that. 
	
	% \jnote{reference to line doesn't work.}
	First note a few observations about the steps of Algorithm~\ref{alg:final}: Notice that \Call{PolynomialSizeSparsification}{} produces hyperedge weights only in the range $[1,n^{10n})$, so $E'$ is partitioned into $E_i'$ for $i$ from $1$ to $n$. Next, $H_1'$ and $H_2'$ are indeed $(n^{10},n^{10})$-separated hypergraphs by the definitions in Lines~\ref{line:E-i-def},~\ref{line:H1-def}, and~\ref{line:H2-def}. Furthermore, note that $\beta\ge5|E|\cdot\gamma/\eta$ holds (as required by Theorem~\ref{thm:separated-approx-balanced}), since $\beta=n^{10}$, $\eta=1/n^2$, and $|E|=o(n^8)$ by the size guarantee of Corollary~\ref{cor:poly-sparsification}. Finally, $G_1$ and $G_2$ are $1/n^2$-approximate $8$-balanced weight assignments of $H_1'$ and $H_2'$ respectively, by Theorem~\ref{thm:separated-approx-balanced}.
	
	Therefore, by Lemma~\ref{lem:main}, $\wt H_1$ and $\wt H_2$ are $\epsilon/3$-spectral sparsifiers of $H_1'$ and $H_2'$ respectively. By the additivity of the hypergraph quadratic form $\wt H$ is an $\epsilon/3$-spectral sparsifier of $H'$. Since $H'$ is itself an $\epsilon/3$-spectral sparsifier of $H$, this means that $\wt H$ is an $\epsilon$-spectral sparsifier of $H$, as claimed.
	
	Moreover, by Lemma~\ref{lem:hypergraph_size}, $\wt H_1$ and $\wt H_2$ are both of size at most $O(n\log^3(n)/\epsilon^4)$ (since we set $\lambda$ to be $\Theta(\log^3(n)/\epsilon^4)$), and therefore so is $\wt H$.
	
	Finally, Algorithm~\ref{alg:final} runs in time $\wt O(mr+\poly{n})$. Indeed \Call{PolynomialSizeSparsification}{} runs in time $\wt O(mr) + \poly{n}$, as shown in Lemma~\ref{cor:poly-sparsification}; \Call{SeparatedApproxBalancing}{} runs in time $\poly{n}$ by Theorem~\ref{thm:separated-approx-balanced}, since $\alpha=\poly{n}$ and $M=\exp(O(n))$; \Call{Sparsification}{} runs in time $\poly{n}$, since the input hypergraph is of $\poly{n}$ size.
\end{proofof}

%%% Local Variables:
%%% mode: latex
%%% TeX-master: "000-main"
%%% End:

% \section*{Acknowledgement}

%\bibliographystyle{abbrv}
\bibliographystyle{alphaurl}
\bibliography{999-bibliography}

\end{document}